\documentclass[11pt]{article}

\usepackage{fullpage}
\usepackage{amsmath,amsfonts,amsthm,mathrsfs,mathpazo,xspace,hyperref,graphicx,bm}
\usepackage{needspace}

\newtheorem{theorem}{Theorem}[section]
\newtheorem{proposition}[theorem]{Proposition}
\newtheorem{lemma}[theorem]{Lemma}

\newtheorem{corollary}[theorem]{Corollary}

\newtheorem{question}{Open problem}[section]

\newcommand{\beq}{\begin{eqnarray}}
\newcommand{\eeq}{\end{eqnarray}}

\newcommand{\ket}[1]{|#1\rangle}
\newcommand{\bra}[1]{\langle#1|}
\newcommand{\Tr}{\mbox{\rm Tr}}
\newcommand{\Id}{\ensuremath{\mathop{\rm Id}\nolimits}}
\newcommand{\Ex}{\textsc{E}}

\newcommand{\setft}[1]{\mathrm{#1}}

\newcommand{\Pos}{\setft{Pos}}
\newcommand{\Unitary}{\setft{U}}

\newcommand{\Bil}{\setft{B}}
\newcommand{\cBil}{cb}
\newcommand{\Lin}{\setft{L}}
\newcommand{\Bounded}{\mathcal{B}}
\newcommand{\Ball}{\setft{Ball}}

\newcommand{\val}{\omega}
\newcommand{\bias}{\beta}

\newcommand{\C}{\ensuremath{\mathbb{C}}}
\newcommand{\N}{\ensuremath{\mathbb{N}}}

\newcommand{\K}{\ensuremath{\mathbb{K}}}
\newcommand{\R}{\ensuremath{\mathbb{R}}}

\newcommand{\mH}{\ensuremath{\mathcal{H}}}
\newcommand{\h}{\ensuremath{\mathcal{H}}}
\newcommand{\prob}{\ensuremath{\mathcal{P}}}
\newcommand{\class}{\ensuremath{\mathcal{P}_{\mathcal{C}}}}
\newcommand{\ent}{\ensuremath{\mathcal{P}_{\mathcal{Q}}}}

\newcommand{\setx}{\textbf{X}}
\newcommand{\cardx}{\textsc{x}}
\newcommand{\sety}{\textbf{Y}}
\newcommand{\cardy}{\textsc{y}}

\newcommand{\seta}{\textbf{A}}
\newcommand{\carda}{\textsc{a}}
\newcommand{\setb}{\textbf{B}}
\newcommand{\cardb}{\textsc{b}}

\DeclareMathOperator{\poly}{poly}

\newcommand{\eps}{\varepsilon}

\bibliographystyle{alpha}

\begin{document}

\title{Survey on Nonlocal Games and Operator Space Theory}
\author{Carlos Palazuelos\thanks{Instituto de Ciencias Matem\'aticas (ICMAT), Facultad de Ciencias Matem\'aticas, Universidad Complutense de Madrid, Spain. Email: \texttt{cpalazue@mat.ucm.es}} \quad and \quad Thomas Vidick\thanks{Department of Computing and Mathematical Sciences, California Institute of Technology, Pasadena, CA, USA. Email: \texttt{vidick@cms.caltech.edu}}}
\date{}
\maketitle

\abstract{This review article is concerned with a recently uncovered connection between \emph{operator spaces}, a noncommutative extension of Banach spaces, and \emph{quantum nonlocality}, a striking phenomenon which underlies many of the applications of quantum mechanics to information theory, cryptography and algorithms. Using the framework of nonlocal games, we relate measures of the nonlocality of quantum mechanics to certain norms in the Banach and operator space categories. We survey recent results that exploit this connection to derive large violations of Bell inequalities, study the complexity of the classical and quantum values of games and their relation to Grothendieck inequalities, and quantify the nonlocality of different classes of entangled states.}

\tableofcontents


\section{Introduction}

The development of quantum mechanics has from its outset been closely intertwined with advances in mathematics. Perhaps most prominently, the modern theory of operator algebras finds its origins in John von Neumann's successful unification of Heisenberg's ``matrix mechanics'' and Schr\"odinger's ``wave mechanics'' in the early 1920s. It should therefore not be unexpected that subsequent advances in operator algebras would yield insights into quantum mechanics, while questions motivated by the physics of quantum systems would stimulate the development of the mathematical theory. In practice however, aside from algebraic quantum field theory the more exotic types of algebras considered by von Neumann have (arguably) had a limited impact on physicist's daily use of quantum mechanics, this in spite of the theory's increasingly pervasive role in explaining a range of phenomena from the theory of superconductors to that of black holes. 

This survey is concerned with the emergence of a promising new connection at this interface. One of the most intriguing aspects of quantum mechanics is the existence of so-called ``entangled'' quantum states. For our purposes the joint state of two or more quantum mechanical systems is called entangled whenever it is possible, by locally measuring each of the systems whose state is described,  to generate correlations that cannot be reproduced by any ``local hidden-variable model'' (LHV) whereby any local observable should \emph{a priori} have a locally well-defined outcome distribution. Depending on its precise formulation, this phenomenon can lead to the most confusing interpretations --- as evidenced by Einstein, Podolsky and Rosen's famous ``paradox''~\cite{epr}. It took more than thirty years and the work of many researchers, including most prominently Bell~\cite{Bell:64a}, to establish the underlying \emph{nonlocality} of quantum mechanics on firm formal grounds. Increasingly convincing experimental demonstrations have followed suit, from Aspect's experiments~\cite{Aspect81} to the recentl ``all-loopholes-closed'' experiments~\cite{hensen2015experimental}. 

The essential mathematical construction that underlies the study of quantum nonlocality is the tensor product of two Hilbert spaces $\mH_1$, $\mH_2$. Each Hilbert space (or rather its unit sphere) represents the state space of a quantum mechanical system. The tensor product of the two spaces provides the state space for joint systems $\ket{\psi}\in\mH_1\otimes\mH_2$. A \emph{local observable}\footnote{The term ``observable'' is used simultaneously for a quantity, such as spin or momentum, and the operator that measures the quantity.} of the system can be represented by an operator  $A\in\Bounded({\mH_1})$ or $B\in\Bounded({\mH_2})$ of norm $1$, whose action on $\ket{\psi}$ is obtained via the application of $A\otimes \Id_{\mH_2}$ or $\Id_{\mH_1}\otimes B$ respectively. 

A \emph{Bell experiment} involves a state $\ket{\psi}\in \mathcal{H}_1\otimes \mathcal{H}_2$ and collections of local quantum observables $\{A_x\}\subset \Bounded(\mathcal{H}_1)$ and $\{B_y\}\subset \Bounded(\mH_2)$. The nonlocality of quantum mechanics is evidenced through the \emph{violation} of a  \emph{Bell inequality} by this state and observables. More precisely, Bell's work establishes the following as the fundamental question in the study of nonlocality:

\begin{quote}
Suppose $M=(M_{xy})\in \R^{n\times m}$ (the \emph{Bell functional}) is such that 
\begin{equation}\label{eq:class-value-intro}
\sup_{a,b}\, \Big|\sum_{xy} M_{xy} \int_\lambda a_x(\lambda) b_y(\lambda) d\lambda\Big|, 
\end{equation}
where the supremum is over all probability spaces $\Omega$, probability measures $d\lambda$ on $\Omega$ and measurable functions $a_x,b_y:\Omega\to [-1,1]$, is at most $1$ (the \emph{Bell inequality}). How large can (the \emph{violation} of the Bell inequality)
\begin{equation}\label{eq:quant-value-intro}
\sup_{\{A_x\},\{B_y\}} \,\Big|\sum_{xy} M_{xy} \bra{\psi} A_x\otimes B_y\ket{\psi}\Big|,
\end{equation}
where now the supremum is taken over all Hilbert spaces $\mH_1$ and $\mH_2$, states $\ket{\psi}\in \mathcal{H}_1\otimes \mathcal{H}_2$ and local observables $A_x\in \Bounded(\mathcal{H}_1)$, $B_y\in \Bounded(\mH_2)$, be?
\end{quote}

The expert in Banach space theory will immediately recognize that the first quantity~\eqref{eq:class-value-intro} is precisely the \emph{injective norm} of the tensor $M\in\ell_1^n(\R) \otimes_\epsilon \ell_1^m(\R)$, or equivalently the norm of $M$ when it is viewed as a bilinear form $\ell_\infty^n(\R)\times\ell_\infty^m(\R)\to \R$. But what does the second quantity~\eqref{eq:quant-value-intro} correspond to? Colloquially, quantum mechanics is often thought of as a ``non-commutative'' extension of classical mechanics, and it is perhaps not so surprising that the answer should come out of a non-commutative extension of Banach space theory --- the theory of \emph{operator spaces}. Initiated in Ruan's thesis (1988), operator space theory is a \emph{quantized extension} of Banach space theory whose basic objects are no longer the elements of a Banach space $X$ but sequences of matrices $M_d(X)$, $d\geq 1$, with entries in $X$ (\cite{EffrosRuanBook}, \cite{PisierBook}). 
This very recent theory provides us with the matching quantities: in operator space terminology~\eqref{eq:quant-value-intro} is precisely the \emph{minimal} norm of the tensor $M$, or equivalently the \emph{completely bounded} norm $\|\cdot\|_{cb}$ of the associated bilinear form. Thus the fundamental question in the study of nonlocality expressed above finds a direct reformulation as a basic question in operator space theory:

\begin{quote}
Suppose that $M:\ell_\infty^n(\R)\times\ell_\infty^m(\R)\to \R$ is such that $\|M\|\leq 1$. How large can $\|M\|_{cb}$ be? 
\end{quote}

This striking connection, only unearthed very recently~\cite{PerezWPVJ08tripartite}, has already led to a number of exciting developments, with techniques from the theory of operator spaces used to provide new tools in quantum information theory and results from quantum information theory applied to derive new results in operator space theory. Without attempting to be exhaustive, the aim of this survey is to give a flavor for this active area of research by explaining the connection in its most elementary --- and convincing --- instantiations, while simultaneously describing some of its most advanced developments. Highlights of the topics covered include:
\begin{itemize}
\item A discussion of the tight connections between correlation inequalities (or \emph{XOR games}),  Grothendieck's fundamental theorem of the theory of tensor norms, Tsirelson's bound and semidefinite programming, 
\item Unbounded violations for three-player XOR games as well as two-player games with large answer sizes inspired by operator space theory, and their implications for problems of interest in computer science such as parallel repetition,
\item A proof of the operator space Grothendieck inequality inspired by the notion of embezzlement in quantum information theory,
\item A brief exposition of the thorny problem of infinite-dimensional entanglement and its relation to Connes' embedding conjecture in the theory of $C^*$-algebras.
\end{itemize}

The survey is organized as follows. We start with a section introducing the basic concepts that form the basis of the survey. Although we will assume that the reader has a minimal background in functional analysis and quantum information, we provide all the required definitions, including all those that involve operator space theory, that go beyond elementary concepts such as Banach spaces or density matrices. Explaining the title of the survey, we adopt a viewpoint on the theory of Bell inequalities in terms of \emph{nonlocal games}. Aside from its playful aspects this powerful framework from computer science has the advantage of connecting the study to important questions, and results, in complexity theory. 

Section~\ref{sec:xor-games} is concerned with Bell correlation inequalities, or, in the language of games, XOR games. This is the setting in which the connection between nonlocality and operator space theory is the tightest and most striking. Since at that level the theory is well-established we state most results without proof; however we also  include proof sketches for more recent results on multipartite XOR games and unbounded violations thereof. 

In Section~\ref{sec:2p1r} we first extend the results of the previous section to arbitrary two-player games, where operator spaces still provide a very tight description of the main quantities of interest. We then treat the case of general Bell inequalities, for which the connection is less accurate, but still sufficient to derive interesting results. We provide bounds, from above and from below, on the largest violations that can be achieved in this general setting. Virtually all such results known originate in the operator space viewpoint; some of them are new to this survey, in which case we include complete proofs. 

In Section~\ref{sec:Entanglement} we look more closely at the question of entanglement, and in particular the kinds of states, and their dimension, that can lead to large violations. This leads us to a discussion of further connections with deep problems in functional analysis, including the operator space Grothendieck inequality and Connes' embedding conjecture. 

A number of open problems that we view as interesting are interspersed throughout, with the hope that the reader will be enticed into contributing to the further development of the exciting connections that are brought forward in this survey.


\section{Preliminaries}

\subsection{Notation}

Throughout we use boldfont for finite sets $\setx,\sety,\seta,\setb$, small capital letters $\cardx,\cardy,\carda,\cardb$ for their cardinality and $x\in\setx,y\in\sety,a\in\seta,b\in\setb$ for elements. The sets $\setx,\sety$ will usually denote sets of \emph{questions}, or \emph{inputs}, to a game or Bell inequality, and $\seta,\setb$ will denote sets of \emph{answers}, or \emph{outputs}.

For a set $X$, $M_d(X)$ is the set of $d\times d$ matrices with entries in $X$. When we simply write $M_d$ we always mean $M_d(\C)$. Algebraically, $M_d(X)=M_d\otimes X$. We use $\Pos(\C^d)$ to denote the semidefinite positive linear operators on $\C^d$.

Given a vector space $\mathcal{V}$ we will use $\{e_i\}$ to denote an arbitrary fixed orthonormal basis of $\mathcal{V}$, identified as a ``canonical basis'' for $\mathcal{V}$. 
We write $\{E_{ij}\}$ for the canonical basis of $M_d$.

Write $\ell_p^n$ for the $n$-dimensional complex $\ell_p$ space, and $\ell_p^n(\R)$ when we want to view it as real. $S_p^n$ is the  $n$-dimensional Schatten $p$-space. Unless specified otherwise the space $\C^n$  will always be endowed with the Hilbertian norm and identified with $\ell_2^n$. The inner product between two elements $x,y\in\ell_2^n$ is denoted by $x\cdot y$. We write $\Ball (X)$ for the closed unit ball of the normed space $X$.

Given Banach spaces $X,Y$ and $Z$, $\Lin(X,Y)$ is the set of linear maps from $X$ to $Y$. We also write $\Lin(X)$ for $\Lin(X,X)$, and $\Id_X\in\Lin(X)$ for the identity map on $X$.  $\Bil(X,Y;Z)$ is the set of bilinear maps from $X\times Y $ to $Z$. If $Z=\C$ we also write $\Bil(X,Y)$. We write $\|\cdot\|_{X}$ for the norm on $X$. When we write $\|x\|$ without explicitly specifying the norm we always mean the ``natural'' norm on $x$: the Banach space norm if $x\in X$ or the implied operator norm if $x\in\Lin(X,Y)$ or $x\in\Bil(X,Y;Z)$. 

Given a linear map $T:X\rightarrow Y$ between Banach spaces, $T$ is \emph{bounded} if its norm is finite: $\|T\|:=\sup\{\|T(x)\|_Y: \|x\|_X\leq 1\}<\infty$. We let $\Bounded(X,Y)$ denote the Banach space of bounded linear maps from $X$ to $Y$, and write $\Bounded(X)$ for $\Bounded (X,X)$. Similarly, $B\in\Bil(X,Y;Z)$ is said bounded if $\|B\|:=\sup\{\|B(x,y)\|_Z: \|x\|_X,\|y\|_Y\leq 1\}<\infty$, and we write $\Bounded(X,Y;Z)$ for the set of such maps. 
If $X$ is a Banach space, we write $X^* = \Lin(X,\C)$ for its dual.

\subsection{Operator spaces}
\label{sec:prelims-os}
We introduce all notions from Banach space theory and operator space theory needed to understand the material presented in this survey. For the reader interested in a more in-depth treatment of the theory we refer to the standard books~\cite{Def93},~\cite{PisierBook} and~\cite{PaulsenBook}. 

An \emph{operator space} $X$ is a closed subspace of the space of all bounded operators on a complex Hilbert space $\h$. For any such subspace the operator norm on $\Bounded(\h)$ automatically induces a sequence of \emph{matrix norms} $\|\cdot\|_d$ on $M_d(X)$ via the inclusions  $M_d(X) \subseteq M_d(\Bounded(\h))\simeq \Bounded(\h^{\oplus d})$. That is, a matrix $A$ with entries $A_{ij}\in X$ is interpreted as a bounded operator on $\h^{\oplus d}$ through $A(\sum_j x_j \otimes e_j)=\sum_{ij} A_{ij}(x_j)\otimes e_i$, and equipped with the corresponding operator norm.

Ruan's Theorem \cite{Ruan88} characterizes those sequences of norms which can be obtained in this way. It provides an alternative definition of an operator space as a complex Banach space $X$ equipped with a sequence of matrix norms $(M_d(X),\|\cdot\|_d)$ satisfying the following two conditions: for every pair of integers $(c,d)$,
\begin{itemize}
\item $\|v \oplus w\|_{c+d} = \max\{\|v\|_c,\|w\|_d \}$ for every $v\in M_c(X)$ and $w\in M_d(X)$, 
\item $\|\alpha v \beta\|_d \leq \|\alpha\| \, \|v\|_d \, \|\beta\|$ for every $\alpha, \beta \in M_d$ and $v\in M_d(X)$. 
\end{itemize}
A sequence of matrix norms satisfying both conditions, or alternatively an explicit inclusion of $X$ into $B(\h)$, which automatically yields such a sequence, is called an \emph{operator space structure} (o.s.s.) on $X$.

A given Banach space $X$ can be endowed with different o.s.s. Important examples are the \emph{row} and \emph{column} o.s.s.\ on $\ell_2^n$. The space $\ell_2^n$ can be viewed as a subspace $R_n$ of $M_n$ via the map $e_i \mapsto E_{1i}=|1\rangle\langle i|$, $i=1,\cdots, n$, where we use the Dirac notation $\ket{i} = e_i$ and $\bra{i}=e_i^*$ that is standard in quantum information theory. Thus each vector $u\in\ell_2^n$ is identified with the matrix $A_u$ which has that vector as its first row and zero elsewhere; clearly $\|u\|_2=\|A_u\|$ and the embedding is norm-preserving. Or we can use the map $e_i \mapsto E_{i1}=|i\rangle\langle 1|$, $i=1,\cdots, n$, identifying $\ell_2^n$ with the subspace $C_n$ of $M_n$ of matrices that have all but their first column set to zero. 
An element $A\in M_d(R_n)$ is a $d\times d$ matrix whose each entry is an $n\times n$ matrix that is $0$ except in its first row. Alternatively, $A$ can be seen as an $n\times n$ matrix whose first row is made of $d\times d$ matrices $A_1,\ldots,A_n$ and all other entries are zero. The operator $A$ acts on $M_d(M_n)\simeq M_{dn}$ by block-wise matrix multiplication, and one immediately verifies that the corresponding sequence of norms can be expressed as 
\begin{align*}
\Big\|\sum_{i=1}^nA_i\otimes \ket{1}\bra{i}\Big\|_{M_d(R_n)}=\Big\|\sum_{i=1}^nA_iA_i^*\Big\|^\frac{1}{2}_{M_d}.
\end{align*}
Similarly, for $M_d(C_n)$ we obtain
\begin{align*}
\Big\|\sum_{i=1}^nA_i\otimes |i\rangle\bra{1}\Big\|_{M_d(C_n)}=\Big\|\sum_{i=1}^nA_i^*A_i\Big\|^\frac{1}{2}_{M_d}.
\end{align*}
These two expressions make it clear that the two o.s.s.\ they define on $\ell_2^n$ can be very different; consider for instance the norms of $A = \sum_{i=1}^n  \ket{i}\bra{1} \otimes e_i \in M_n(\ell_2^n)$. 

Certain Banach spaces have a natural o.s.s. This happens for the case of C$^*$-algebras which, by the GNS representation, have a canonical inclusion in $\Bounded(\h)$ for a certain Hilbert space $\h$ obtained from the GNS construction. Two examples will be relevant for this survey. The first is the space $\Bounded(\h)$ itself, for which the natural inclusion is the identity. Note that when $\h=\C^n$, $\Bounded(\C^n)$ is identified with $M_n$ and $M_d(M_n)$ with $M_{dn}$. The second example is $\ell_\infty^n =(\C^n, \|\cdot \|_\infty)$, for which the natural inclusion is given by the map $e_i \mapsto E_{ii}=|i\rangle\langle i|$, $i=1,\cdots, n$ embedding an element of $\ell_\infty^n$ as the diagonal of an $n$-dimensional matrix. This inclusion yields the sequence of norms
\begin{align}\label{infinity o.s.s.}
\Big\|\sum_{i=1}^nA_i\otimes |i\rangle\bra{i}\Big\|_{M_d(\ell_\infty^n)}=\sup_{i\in\{1,\ldots, n\}} \big\{ \|A_i\|_{M_d}  \big\}.
\end{align}
Bounded linear maps are the natural morphisms in the category of Banach spaces in the sense that two Banach spaces $X,Y$ can be identified whenever there exists an isomorphism $T:X\rightarrow Y$ such that $\|T\|\|T^{-1}\|=1$; in this case we say that $X$ and $Y$ are \emph{isometric}. When considering operator spaces the norm  on linear operators should take into account the sequence of matrix norms defined by the o.s.s. Given a linear map between operator spaces $T:X\rightarrow Y$, let $T_d$ denote the linear map
$$T_d:\,v=(v_{ij})\in M_d(X)\,\mapsto \,(\Id_{M_d} \otimes T)(v)=(T(v_{ij}))_{i,j}\in M_d(Y).$$
The map $T$ is said to be \emph{completely bounded} if its completely bounded norm if finite:
$$
\|T\|_{cb} \,:=\, \sup_{d\in\N} \|T_d\|\,<\, \infty.
$$
Two operator spaces are said \emph{completely isometric} whenever there exists an isomorphism $T:X\rightarrow Y$ such that $\|T\|_{cb}\|T^{-1}\|_{cb}=1$. The spaces $R_n$ and $C_n$ introduced earlier are isometric Banach spaces: if $t$ denotes the transpose map,
$$\|t:R_n\rightarrow C_n\|\,=\,\|t:C_n\rightarrow R_n\|=1.$$
But they are \emph{not} completely isometric; in particular it is a simple exercise to verify that
$$\|t:R_n\rightarrow C_n\|_{cb}\,=\,\|t:C_n\rightarrow R_n\|_{cb}\,=\,\sqrt{n},$$
and in fact any isomorphism $u:R_n\rightarrow C_n$ verifies $\|u\|_{cb}\|u^{-1}\|_{cb}\geq n$.

For C$^*$-algebras ${A}$, ${B}$ with unit a linear map $T:{A}\rightarrow {B}$ is \emph{completely positive} if $T_d(x)=(\Id_{M_d}\otimes T) (x)$ is a positive element in $M_d({B})$ for every $d$ and every positive element $x\in M_d({A})$, and it is \emph{unital} if $T(\Id_{A})=\Id_{B}$. Although every completely positive map is trivially positive, the converse is not true (a typical example is given by the transpose map on $M_n$). However, if $A$ is a commutative C$^*$-algebra (such as $\ell_\infty^n$) it is straightforward to check that positivity implies complete positivity. The following very standard lemma will be used often, and we include a short proof. 

\begin{lemma}\label{lem:cp-unital}
Let $T:A\rightarrow B$ be a completely positive and unital map between C$^*$-algebras. Then 
$$\|T\|\,=\,\|T\|_{cb}\,=\,1.$$
\end{lemma}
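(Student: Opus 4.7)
The plan is to show both bounds separately. The lower bound is immediate from unitality: $\|T\|_{cb} \geq \|T\| \geq \|T(\Id_A)\|_B = \|\Id_B\| = 1$. So the work is to prove $\|T\|_{cb} \leq 1$, and the natural approach is to control each matrix amplification $T_d$ uniformly in $d$.

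The key tool I would use is the standard $2\times 2$ matrix characterization of the norm in a unital C$^*$-algebra $C$: for any $x \in C$, one has $\|x\| \leq 1$ if and only if
\[
\begin{pmatrix} \Id_C & x \\ x^* & \Id_C \end{pmatrix}\,\geq\, 0 \quad \text{in } M_2(C).
\]
This equivalence I would verify by representing $C \subseteq \Bounded(\h)$ and computing $\langle v\oplus w, \,\cdot\, (v\oplus w)\rangle = \|v\|^2 + 2\,\mathrm{Re}\langle v, xw\rangle + \|w\|^2$, which is nonnegative for all $v,w$ exactly when $\|x\|\leq 1$ by Cauchy--Schwarz (and conversely, testing on appropriate vectors recovers $\|x\|\leq 1$).

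Next I would apply this characterization to $C = M_d(A)$. Fix $d \geq 1$ and $x \in M_d(A)$ with $\|x\|_{M_d(A)} \leq 1$. The criterion gives that the block matrix above is positive in $M_2(M_d(A)) \simeq M_{2d}(A)$. Apply the map $T_{2d}$: by complete positivity of $T$, the map $T_{2d}$ is positive; by unitality of $T$, it sends the diagonal $\Id_{M_d(A)}$ blocks to $\Id_{M_d(B)}$; and since positive maps are $*$-preserving, the off-diagonal blocks become $T_d(x)$ and $T_d(x)^*$. Therefore
\[
\begin{pmatrix} \Id_{M_d(B)} & T_d(x) \\ T_d(x)^* & \Id_{M_d(B)} \end{pmatrix}\,\geq\, 0 \quad \text{in } M_{2d}(B),
\]
and the converse direction of the key lemma applied to $C = M_d(B)$ yields $\|T_d(x)\|_{M_d(B)} \leq 1$. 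Taking the supremum over $x$ and then over $d$ gives $\|T\|_{cb} \leq 1$, completing the proof.

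I do not anticipate a real obstacle here: the only nontrivial input is the $2\times 2$ positivity criterion, which is a one-line Cauchy--Schwarz calculation. The rest is bookkeeping to make sure complete positivity and unitality are used at the right amplification level $2d$ rather than $d$ (this is what upgrades the argument from $\|T\|\leq 1$ to $\|T\|_{cb}\leq 1$, since the self-adjointness hiding inside the $2\times 2$ matrix trick bypasses the usual factor of $2$ loss when passing from self-adjoint to general elements).
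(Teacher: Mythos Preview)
Your proof is correct and follows essentially the same argument as the paper's: both use the $2\times 2$ positivity characterization of the unit ball in a unital C$^*$-algebra, apply it to $x\in M_d(A)$, push through $T_{2d}$ using complete positivity and unitality, and read off $\|T_d(x)\|\leq 1$. Your write-up additionally spells out the Cauchy--Schwarz verification of the positivity criterion, which the paper leaves as an exercise, but otherwise the two proofs coincide.
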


\begin{proof}
The inequalities $1\leq \|T\|\leq \|T\|_{cb}$ hold trivially for any unital map. To show the converse inequalities, first recall that an element $z$ in a unital $C^*$-algebra $D$ verifies
\begin{align}\label{positivity and norm one}
\|z\|\leq 1 \text{    if and only if   }\begin{pmatrix} \Id_{D}  & z\\z^* & \Id_{D}\end{pmatrix}\text{    is a positive element in   } M_2(D).
\end{align}This can be easily seen by considering the canonical inclusion of the C$^*$-algebra $M_2({D})$ in $M_2(\Bounded(\h))=\Bounded(\h\oplus \h)$. 
Now given an element $x\in M_d({A})$ such that $\|x\|\leq 1$ we use the positivity of 
$$\begin{pmatrix} \Id_{M_d({A})}  & x\\ x^* & \Id_{M_d({A})}\end{pmatrix}\in M_2(M_d({A}))$$
 and the fact that $T$ is completely positive and unital to conclude that 
$$T_{2d} \begin{pmatrix} \Id_{M_d({A})}  & x\\x^* & \Id_{M_d({A})}\end{pmatrix}=\begin{pmatrix}  \Id_{M_d({B})} & T_d (x)\\T_d(x)^* & \Id_{M_d({B})}\end{pmatrix}$$
 is a positive element in $M_2(M_d({B}))$. Using~(\ref{positivity and norm one}) again we conclude that $\|T_d (x)\|\leq 1$ for every $d$, thus $\|T\|_{cb}\leq 1$. 
\end{proof}


Given an operator space $X$ it is possible to define an o.s.s.\ on $X^*$, the dual space of $X$. The norms on $M_d(X^*)$ are specified through the natural identification with the space of linear maps from $X$ to $M_d$, according to which to an element $x=\sum_i a_i\otimes x_i^*\in M_d(X^*)$ is associated the map 
$$T^x:\,v\in X\mapsto \sum_i x_i^*(v)a_i\in M_d.$$
This leads to the sequence of norms
\begin{align}\label{dual o.s.s.}
\|x\|_{M_d(X^*)}=\|T^x:X\rightarrow M_d\|_{cb},\qquad d\geq 1.
\end{align}
For every linear map $T:X\rightarrow M_d$ it holds that $\|T\|_{cb}=\|T_d\|$~\cite[Proposition 8.1]{PaulsenBook}, so in this case the completely bounded norm is attained by considering amplifications up to dimension $d$. We encourage the reader to check the (completely isometric) identifications $C_n^* =R_n$ and $ R_n^*=C_n$. 

Duality allows us to introduce a natural o.s.s.\ on the spaces $\ell_1^n$ and $S_1^n$ as dual spaces of $\ell_\infty^n$ and $M_n$ respectively. From~(\ref{infinity o.s.s.}) and~(\ref{dual o.s.s.}) one immediately obtains that\footnote{
Note that here the o.s.s.\ is defined without explicitly specifying an embedding of $\ell_1^n$ in some $\Bounded(\h)$. Although Ruan's theorem assures that such an embedding must exist that leads to the sequence of norms (\ref{o.s.s. ell_1}), finding that embedding explicitly can be a difficult problem. In particular, for $\ell_1^n$ the simplest embedding is based on the universal C$^*$-algebra associated to the free group with $n$ generators $C^*(\mathbb F_n)$.} 
\begin{align}\label{o.s.s. ell_1}
\Big\|\sum_{i=1}^nA_i\otimes |i\rangle\bra{i}\big\|_{M_d(\ell_1^n)}=\sup \big\|\sum_{i=1}^nA_i\otimes B_i\Big\|_{M_{d^2}},
\end{align}
where the supremum runs over all families of operators $\{B_i\}_i$ in $M_d$ such that $\sup_i\|B_i\|\leq 1$. Note that, by convexity, the supremum in~\eqref{o.s.s. ell_1} can be restricted to families of unitaries $\{U_i\}_i$ in $M_d$.
An o.s.s.\ on $S_1^n$ can similarly be defined by introducing norms on $M_d(S_1^n)$ through~(\ref{dual o.s.s.}). The explicit form of these norms will not play a role in this survey, and we omit the (cumbersome) definition.
\subsection{Bilinear forms and tensor products}
\label{sec:bil-tensor}
As in the case of linear maps, working with bilinear maps on operator spaces requires the introduction of a norm on such maps which captures the o.s.s. Given a bilinear form on operator spaces $B:X\times Y\rightarrow \C$, for every $d$ define a bilinear operator 
$$B_d=B\otimes\Id_{M_d} \otimes \Id_{M_d}:M_d(X)\times  M_d(Y)\rightarrow M_{d^2}$$
by $B(a\otimes x,b\otimes y)=B(x,y)a\otimes b$ for every $a,b\in M_d$, $x\in X$, $y\in Y$. We say that $B$ is \emph{completely bounded} if its completely bounded norm is finite: $$\|B\|_{cb}\,:=\,\sup_d\|B_d\|<\infty.$$
A weaker condition is that $B$ is \emph{tracially bounded}:
$$\|B\|_{tb}\,:=\,\sup \Big\{\big|\langle \psi_d|B_d(A,B)|\psi_d\rangle\big|: d\in \N, A\in \Ball(M_d(X)), B\in \Ball(M_d(Y))\Big\}<\infty,$$
where here $|\psi_d\rangle=\frac{1}{\sqrt{d}}\sum_{i=1}^d|ii\rangle$ is the maximally entangled state in dimension $d$. 

The natural one-to-one correspondence between bilinear forms and tensor products will play an important role in this survey. It is obtained through the identifications
\begin{align}\label{eq:bil-tensor}
\Bil(X,Y)=(X\otimes Y)^*=L(X,Y^*).
\end{align}
Here the first equality is obtained by identifying a bilinear form $B:X\times Y\rightarrow \C$ with the linear form $S_B:X\otimes Y\rightarrow \C$ defined by $S_B(x\otimes y)=B(x,y)$ for every $x,y$, and the second equality by identifying $B$ with the linear map $T_B:X\rightarrow Y^*$ defined by $\langle T_B(x),y\rangle=B(x,y)$ for every $x,y$. 
If the spaces $X$, $Y$ are finite dimensional, one also has the natural algebraic identification
\begin{align}\label{bilinear-tensor}
\Bil(X,Y)=X^*\otimes Y^*.
\end{align}
To be more precise, fix bases $(e_i)$ and $(f_j)$ for $X$ and $Y$ respectively, and let $(e_i^*)$ and $(f_j^*)$ be the dual bases. If $B:X\times Y\rightarrow \C$ is a bilinear form such that $B(e_i,f_j)=b_{i,j}$ for every $i,j$, its associated tensor is given by $\hat{B}=\sum_{i,j}b_{i,j}e_i^*\otimes f_j^* \in X^*\otimes Y^*$. Conversely, given an element $x=\sum_{s,t}x_{s,t}e_s^*\otimes f_t^*\in X^*\otimes Y^*$, its natural action on $X\times Y$ is defined by $\tilde{x}(e_i,f_j)=\sum_{s,t}x_{s,t}e_s^*(e_i)f_t^*(f_j)=x_{i,j}$ for every $i,j$.

The identifications~\eqref{bilinear-tensor} can be made isometric. Given two Banach spaces $X$, $Y$, define the \emph{injective tensor norm} of $z\in X\otimes Y$ as
\begin{align}\label{Def injective norm}
\|z\|_{X\otimes_\epsilon Y}=\sup_{x^*\in\Ball(X^*),\,y^*\in\Ball(Y^*)}\big|\langle z, x^*\otimes y^*\rangle\big|.
\end{align}
With this definition whenever $X$ and $Y$ are finite-dimensional it holds that for every bilinear form $B:X\times Y\rightarrow \C$ we have 
\begin{equation}\label{eq:bil-epsilon-norm}
\|B\|=\|\hat{B}\|_{X^*\otimes_\epsilon Y^*},
\end{equation}
where $\hat{B}\in X^*\otimes  Y^*$ is the tensor associated to the bilinear form $B$ according to~(\ref{bilinear-tensor}).

The correspondence can be extended to the case of operator spaces $X$, $Y$. Define the \emph{minimal tensor norm} of $z\in X\otimes Y$ as $\|z\|_{X\otimes_{min} Y}=\sup_d  \|z\|_{X\otimes_{min_d} Y}$, where
\begin{align}\label{Def min-norm}
\|z\|_{X\otimes_{min_d} Y}=\sup_{T\in\Lin(X,M_d),S\in\Lin(Y,M_d):\,\|T\|_{cb},\|S\|_{cb}\leq 1} \big\|(T\otimes S)(z)\big\|_{M_{d^2}}.
\end{align}
With this definition it is again easily verified that for any bilinear form $B:X\times Y\rightarrow \C$,
$$\|B\|_{cb}=\|\hat{B}\|_{X^*\otimes_{min} Y^*}.$$
Finally for the case of the tracially bounded norm we define  $\|z\|_{X\otimes_{\psi-min} Y}=\sup_d \|z\|_{X\otimes_{\psi-min_d} Y}$, where 
\begin{align}\label{Def tb norm}
\|z\|_{X\otimes_{\psi-min_d} Y}=\sup_{k\leq d,\,T\in\Lin(X,M_k),S\in\Lin(Y,M_k):\,\|T\|_{cb},\|S\|_{cb}\leq 1} \big|\langle\psi_k|(T\otimes S)(z)|\psi_k\rangle\big|,
\end{align}
in which case it holds that 
$$\|B\|_{tb}=\|\hat{B}\|_{X^*\otimes_{\psi-min} Y^*}.$$

\subsection{Bell functionals and multiplayer games}\label{Sec: Bell ineq and multiplayer}

In this section we introduce the basic definitions and notation associated with the main objects this survey is concerned with, Bell functionals and multiplayer games. It will soon be apparent that the two names cover essentially the same concept, although important differences do arise in some contexts, as will be pointed out later in the survey. 

\subsubsection{Bell functionals}

Bell functionals are linear forms acting on multipartite conditional probability distributions, and Bell inequalities are bounds on the largest value taken by a Bell functional over a restricted set of distributions. For clarity we focus on the bipartite case, but the framework extends easily. Given finite sets $\setx$ and $\seta$ denote by $\prob(\seta|\setx)$ the set of conditional distributions from $\setx$ to $\seta$, 
\begin{align*}
\prob(\seta|\setx) &= \Big\{P_{\seta|\setx}=\big(P_{\seta|\setx}(a|x)\big)_{x,a} \in \R_+^{\carda\cardx}:\,\forall x\in \setx,\sum_{a\in \seta}P_{\seta|\setx}(a|x)=1\Big\}.
\end{align*}
 We often drop the subscript in $P_{\seta|\setx}$ whenever the underlying sets are clear from the context. When considering bipartite conditional distributions we use the shorthand $\seta\setb$ for $\seta\times\setb$, i.e. $\prob(\seta\setb|\setx\sety)=\prob(\seta\times\setb|\setx\times\sety)$. 

A \emph{Bell functional} $M$ is simply a linear form on $\R^{\carda\cardb\cardx\cardy}$. Any such functional is specified by a family of real coefficients $M=(M_{x,y}^{a,b})_{x,y;a,b}\in\R^{\seta\times\setb\times\setx\times\sety}$, and its action on $\prob(\seta\setb|\setx\sety)$ is given by 
\begin{equation}\label{eq:val-def-bell}
P\in \prob(\seta\setb|\setx\sety) \mapsto \val(M;P) :=\sum_{x,y;a,b}M_{x,y}^{a,b}P(a,b|x,y)\in\R.
\end{equation} 
We refer to $\cardx$ and $\carda$ (resp. $\cardy$ and $\cardb$) as the number of \emph{inputs} and \emph{outputs} to and from the first (resp. second) \emph{system} acted on by the Bell functional. 
A \emph{Bell inequality} is an upper bound on the largest value that the expression~\eqref{eq:val-def-bell} can take when restricted to the subset of $\prob(\seta\setb|\setx\sety)$ corresponding to  
\emph{classical} conditional distributions. Informally, classical distributions are those that can be implemented locally with the help of shared randomness. Formally, they correspond to the convex hull of product distributions,
\begin{align}\label{eq:def-classical-prob}
\class(\seta\setb|\setx\sety) &= \text{Conv}\big\{P_{\seta|\setx}\times P_{\setb|\sety}:\, P_{\seta|\setx}\in\prob(\seta|\setx),\,P_{\setb|\sety}\in\prob(\setb|\sety)\big\}.
\end{align}
Thus a Bell inequality is an upper bound on the quantity 
\begin{equation}\label{eq:class-value-def}
\omega(M) := \sup_{P\in \class(\seta\setb|\setx\sety)} |\val(M;P)|.
\end{equation}
We refer to $\omega(M)$ as the \emph{classical value} of the functional $M$. The second value associated to a Bell functional is its \emph{entangled value}, which corresponds to its supremum over the subset of $\prob(\seta\setb|\setx\sety)$ consisting of those distributions that can be implemented locally using measurements on a bipartite quantum state: 
\begin{align*}
\ent(\seta\setb|\setx\sety) &= \Big\{ \big(\bra{\psi} A_x^a\otimes B_y^b \ket{\psi}\big)_{x,y,a,b}:\, d\in\N,\,\ket{\psi}\in \Ball(\C^d\otimes\C^d),\, A_x^a,B_y^b\in\Pos(\C^d),\\
&\hskip5cm\,\sum_a A_x^a = \sum_b B_y^b=\Id \,\,\forall(x,y)\in\setx\times\sety\Big\}.
\end{align*}
Here the constraints $A_x^a\in\Pos(\C^d)$, $\sum_a A_x^a = \Id$ for every $x$ correspond to the general notion of measurement called \emph{positive operator-valued measurement} (POVM) in quantum information. Note that given a POVM $\{A_x^a\}$ the linear map $T_x:\ell_\infty^\carda\to M_d$, $T_x: e_a \mapsto A_x^a$ is completely positive and unital, thus by Lemma~\ref{lem:cp-unital} $\|T\|_{cb}=1$; conversely any completely positive and unital map $T:\ell_\infty^\carda\to M_d$ defines a POVM.

With this definition of the set $\ent(\seta\setb|\setx\sety)$, the entangled value of $M$ is defined as
\begin{equation}\label{eq:def-entangled-value}
\omega^*(M) := \sup_{P\in \ent(\seta\setb|\setx\sety)} |\val(M;P)|.
\end{equation}
Since $\class(\seta\setb|\setx\sety)\subseteq \ent(\seta\setb|\setx\sety)$ it is clear that $\omega(M)\leq\omega^*(M)$ in general. A \emph{Bell inequality violation} corresponds to the case when the inequality is strict: those $M$ such that $\omega(M)<\omega^*(M)$ serve as ``witnesses'' that the set of quantum conditional distributions is strictly larger than the classical set. 

\subsubsection{Multiplayer games}

\emph{Multiplayer games} are the sub-class of Bell functionals $M$ such that all coefficients $M_{x,y}^{a,b}$ are non-negative and satisfy the normalization condition $\sum_{x,y;a,b} M_{x,y}^{a,b}=1$. For this reason we will sometimes refer to games as ``Bell functionals with non-negative coefficients''. Beyond a mere change of language (inputs and outputs to the systems will be referred to as \emph{questions} and \emph{answers} to the \emph{players}), the fact that such functionals can be interpreted as games allows for fruitful connections with the extensive literature on this topic developed in computer science and leads to many interesting constructions.

A two-player one-round game $G=(\setx,\sety,\seta,\setb,\pi,V)$ is specified by finite sets $\setx,\sety,\seta,\setb$, a probability distribution $\pi:\setx\times \sety\rightarrow [0,1]$ and a payoff function $V: \seta\times \setb\times \setx\times \sety\rightarrow [0,1]$.\footnote{Sometimes the function $V$ is required to take values in $\{0,1\}$. Our slightly more general definition can be interpreted as allowing for randomized predicates.} One should think of the ``game'' as proceeding as follows. There are three interacting parties, the \emph{referee} and two \emph{players}, customarily named ``Alice'' and ``Bob''. The referee initiates the game by selecting a pair of questions $(x,y)\in\setx\times\sety$ according to the distribution $\pi$, and  sends $x$ to Alice and $y\in\sety$ to Bob. Given their question, each of the players is required to provide the referee with an answer $a\in\seta$ and $b\in\setb$ respectively. Finally the referee  declares  that the players ``win'' the game with probability precisely $V(a,b,x,y)$; alternatively one may say that the players are attributed a ``payoff'' $V(a,b,x,y)$ for their answers. The value of the game is the largest probability with which the players can win the game, where the probability is taken over the referee's choice of questions, the player's strategy (an element of $\prob(\seta\setb|\setx\sety)$), and the randomness in the referee's final decision; alternatively the value can be interpreted as the maximum expected payoff that can be achieved by the players. 

Any game induces a Bell functional $G_{x,y}^{a,b}=\pi(x,y)V(a,b,x,y)$ for every $x,y,a,b$. This allows to extend the definitions of classical and entangled values given earlier to corresponding quantities for games. Precisely, given a game $G$ we can define
\begin{align}
\omega(G) := \sup_{P\in \class(\seta\setb|\setx\sety)} \Big|\sum_{x,y;a,b} \pi(x,y)V(a,b,x,y)P(a,b|x,y)\Big|,\notag\\
\omega^*(G) := \sup_{P\in \ent(\seta\setb|\setx\sety)} \Big|\sum_{x,y;a,b} \pi(x,y)V(a,b,x,y)P(a,b|x,y)\Big|.\label{eq:def-eval}
\end{align}
Conversely, any Bell functional with non-negative coefficients can be made into a game by setting $\pi(x,y) = \sum_{a,b} M_{x,y}^{a,b}$ and $V(a,b,x,y)= M_{x,y}^{a,b}/\pi(x,y)$.


\section{XOR games}
\label{sec:xor-games}

\emph{XOR games} are arguably the simplest and best-understood class of multiplayer games that are interesting from the point of view of quantum nonlocality, i.e. the study of the set $\ent(\seta\setb|\setx\sety)$ and its multipartite generalizations. The properties of XOR games turn out to be very tightly connected to fundamental results in the theory of tensor products of Banach spaces as well as operator spaces, making them a perfect starting point for this survey. 

Recall the general definitions for games introduced in Section~\ref{Sec: Bell ineq and multiplayer}. Two-player XOR games correspond to the restricted family of games for which the answer alphabets $\seta$ and $\setb$ are binary, $\seta=\setb=\{0,1\}$, and the payoff function $V(a,b,x,y)$ depends only on $x$, $y$, and the parity of $a$ and $b$. We further restrict our attention to functions that take the form $V(a,b,x,y)=\frac{1}{2}(1+(-1)^{a\oplus b \oplus c_{xy}})$ for some $c_{xy}\in\{0,1\}$. This corresponds to deterministic predicates such that for every pair of questions there is a unique parity $a\oplus b$ that leads to a win for the players. (This additional restriction is not essential, and all results discussed in this section extend to general $V(a,b,x,y)=V(a\oplus b,x,y)\in[0,1]$.)

Given an arbitrary $P\in\prob(\seta\setb|\setx\sety)$, based on the definition~\eqref{eq:val-def-bell} the value achieved by $P$ in $G$ can be expressed as 
\begin{align*}
\omega(G;P) &= \sum_{x,y;a,b} \pi(x,y)V(a,b,x,y)P(a,b|x,y)\\
&= \sum_{x,y;a,b} \pi(x,y)\frac{1+(-1)^{a\oplus b \oplus c_{xy}}}{2}P(a,b|x,y)\\
&= \frac{1}{2} + \frac{1}{2}\sum_{x,y} \pi(x,y) (-1)^{c_{xy}} \big(P(0,0|x,y)+ P(1,1|x,y) - P(0,1|x,y)-P(1,0|x,y)\big).
\end{align*}
This last expression motivates the introduction of the \emph{bias} $\beta(G;P)=2\omega(G;P)-1\in[-1,1]$ of an XOR game, a quantity that will prove more convenient to work with than the value. Optimizing over all classical strategies one obtains the \emph{classical bias} $\beta(G)$ of an XOR game $G$, 
\begin{align}
\beta(G) &= \sup_{P\in\class(\seta\setb|\setx\sety)} |\beta(G;P)|\notag\\
&= \sup_{A\in\prob(\seta|\setx),\,B\in\prob(\setb|\sety)} \Big|\sum_{x,y} \pi(x,y)(-1)^{c_{xy}} \big(A(0|x)-A(1|x)\big)\big(B(0|y)-B(1|y)\big)\Big|\notag\\
&=\sup_{a\in\Ball(\ell_\infty^\cardx(\R)),\,b\in\Ball(\ell_\infty^\cardy(\R))} \Big|\sum_{x,y} \pi(x,y)(-1)^{c_{xy}}a_xb_y\Big|,\label{eq:xor-classical-bias}
\end{align}
where for the last equality recall that $\Ball(\ell_\infty^n(\R))$ denotes the unit ball of $\ell_\infty^n(\R)$, the set of all sequences $(x_i)\in\R^n$ such that $|x_i|\leq 1$ for all $i\in\{1,\ldots,n\}$. To the game $G$ we may associate a bilinear form, which (abusing notation slightly) we will denote $G:  \ell_\infty^\cardx(\R) \times \ell_\infty^\cardy(\R) \to \R$, defined on basis elements by $G(e_x,e_y)=\pi(x,y)(-1)^{c_{xy}}$. Thus~\eqref{eq:xor-classical-bias} relates the classical bias of the game $G$ to the norm of the associated bilinear form:
\begin{equation}\label{eq:xor-norm-bil}
\beta(G) =  \|G\|_{\Bil( \ell_\infty^\cardx(\R), \ell_\infty^\cardy(\R) )},
\end{equation}
marking our first connection between the theory of games and that of Banach spaces. Note that in~\eqref{eq:xor-norm-bil}
 the norm of $G$ is taken with respect to the real Banach spaces $\ell_\infty(\R)$. It will often be mathematically convenient to consider instead complex spaces, in which case we have the following.

\begin{lemma}\label{lem:xor-real-complex}
Let $G\in\Bil(\ell_\infty^\cardx(\R),\ell_\infty^\cardy(\R))$ be a real bilinear form. Then 
\begin{equation}\label{eq:xor-real-complex}
\|G\|_{\Bil(\ell_\infty^\cardx(\R),\ell_\infty^\cardy(\R))}\leq \|G\|_{\Bil(\ell_\infty^\cardx(\C),\ell_\infty^\cardy(\C))}\leq \sqrt{2}\|G\|_{\Bil(\ell_\infty^\cardx(\R),\ell_\infty^\cardy(\R))}.
\end{equation}
Moreover, for each inequality there exists a $G$ for which it is an equality.
\end{lemma}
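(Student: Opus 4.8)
The plan is to dispatch the left inequality by a trivial inclusion and to reduce the right inequality to Grothendieck's inequality in dimension two. For the left inequality: if $(a_x)$ is a real sequence with $|a_x|\le 1$, the same bound holds when each $a_x$ is read as a complex number, so $\Ball(\ell_\infty^\cardx(\R))\subseteq\Ball(\ell_\infty^\cardx(\C))$ and likewise on the $\sety$-side; hence the supremum defining $\|G\|_{\Bil(\ell_\infty^\cardx(\C),\ell_\infty^\cardy(\C))}$ runs over a larger set than the one defining $\|G\|_{\Bil(\ell_\infty^\cardx(\R),\ell_\infty^\cardy(\R))}$. Equality holds already for the $1\times1$ form $G(e_1,e_1)=1$, for which both sides equal $1$.

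For the right inequality I would first reinterpret both norms as dimension-restricted Grothendieck-type quantities. Since $|G(a,b)|=\Re G(e^{i\gamma}a,b)$ for a suitable phase $\gamma$, and $e^{i\gamma}a$ stays in the unit ball, $\|G\|_{\Bil(\ell_\infty^\cardx(\C),\ell_\infty^\cardy(\C))}=\sup_{a,b}\Re G(a,b)$; and because $G$ has real coefficients, $\Re G(a,b)=G(\Re a,\Re b)-G(\Im a,\Im b)$. Encoding $a_x$ by the vector $\vec a_x=(\Re a_x,\Im a_x)\in\R^2$ and $b_y$ by $\vec b_y=(\Re b_y,\Im b_y)$, the constraint $\|a\|_{\ell_\infty^\cardx(\C)}\le1$ reads $\|\vec a_x\|_2\le1$ for every $x$, and $\Re G(a,b)=\sum_{x,y}G(e_x,e_y)\langle\vec a_x,D\vec b_y\rangle$ with $D=\mathrm{diag}(1,-1)$ orthogonal; absorbing $D$ and a sign yields
\[
\|G\|_{\Bil(\ell_\infty^\cardx(\C),\ell_\infty^\cardy(\C))}=\sup\Big\{\Big|\sum_{x,y}G(e_x,e_y)\langle u_x,v_y\rangle\Big|:\,u_x,v_y\in\Ball(\ell_2^2)\Big\},
\]
whereas $\|G\|_{\Bil(\ell_\infty^\cardx(\R),\ell_\infty^\cardy(\R))}$ is the very same supremum with $u_x,v_y$ restricted to $\Ball(\ell_2^1)=[-1,1]$. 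The right inequality of~\eqref{eq:xor-real-complex} is therefore exactly the statement that the real Grothendieck constant of order two is at most $\sqrt2$, a classical theorem of Krivine; I would cite it, and for a self-contained treatment sketch its proof: identify $\R^2\cong\C$ so that $\langle u_x,v_y\rangle=\cos(\theta_x-\phi_y)$, use that the angle-$m$-tupling map sends a planar unit Gram matrix to another one (so Chebyshev polynomials act positively in dimension two), and feed this into a Jacobi--Anger expansion together with Gaussian hyperplane rounding in the style of Krivine's proof of Grothendieck's inequality, the planar structure improving the generic bound $\pi/(2\ln(1+\sqrt2))$ to $\sqrt2$.

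Equality in the right inequality holds for the CHSH form $G(e_x,e_y)=(-1)^{xy}$ on $\ell_\infty^2\times\ell_\infty^2$: checking all four $\pm1$ patterns gives $\|G\|_{\Bil(\ell_\infty^2(\R),\ell_\infty^2(\R))}=2$, while writing $G(a,b)=a_0(b_0+b_1)+a_1(b_0-b_1)$, taking $b_0=1$, $b_1=i$, and then choosing optimal phases for $a_0,a_1$ gives $|b_0+b_1|+|b_0-b_1|=\sqrt2+\sqrt2=2\sqrt2$. The main obstacle is this right inequality: it is genuinely Grothendieck's inequality restricted to dimension two, and the naive bound obtained by expanding $G(a,b)$ into its four real/imaginary pieces and applying the triangle inequality only produces the constant $2$ (and merely $2\sqrt2$ if one does not first rotate $a$ to make $G(a,b)$ real), which loses precisely the factor one is trying to save.
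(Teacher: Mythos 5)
Your proposal is correct and follows essentially the same route as the paper, which also dispatches the first inequality as immediate and attributes the second to Krivine~\cite{Krivine:79a}: your reduction of $\|G\|_{\Bil(\ell_\infty^\cardx(\C),\ell_\infty^\cardy(\C))}$ to the order-two real Grothendieck supremum (via $|G(a,b)|=\Re G(e^{i\gamma}a,b)=G(\Re (e^{i\gamma}a),\Re b)-G(\Im (e^{i\gamma}a),\Im b)$ and the $\R^2$-encoding) is exactly the right framing, and identifies the second inequality with Krivine's bound $K_G^{\R}(2)\le\sqrt{2}$. Your verification that the CHSH form saturates the factor $\sqrt{2}$ (and that a $1\times 1$ form saturates the first inequality) matches the example given in the paper.
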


While the first inequality in the lemma is clear, the second is non-trivial and due to Krivine~\cite{Krivine:79a}. A $G$ for which the inequality is tight is 
\begin{equation}\label{eq:chsh-def}
G = G_{CHSH} : (e_x,e_y)\mapsto \frac{1}{4}(-1)^{x\wedge y}\qquad \forall x,y\in\{0,1\}.
\end{equation}
Independently identified by Krivine, this bilinear form underlies the famous \emph{CHSH inequality} introduced a decade earlier by Clauser et al.\ \cite{Clauser:69a}. Following the pioneering work of Bell~\cite{Bell:64a}, Clauser et al.\ were the first to state a simple, finite inequality, $\beta(G_{CHSH})\leq 1/2$, for which quantum mechanics predicts a noticeable violation, $\beta^*(G_{CHSH})\geq \sqrt{2}/{2}$.\footnote{In the next section it will be seen that equality holds. The ``violation'' $\beta^*(G_{CHSH})= \sqrt{2}/{2} > 1/2$ has been experimentally demonstrated~\cite{Aspect81} and remains a fundamental benchmark of quantum nonlocality.} Here $\beta^*$ denotes the \emph{entangled bias} of an XOR game, defined as 
\begin{align}
\beta^*(G) &= \sup_{P\in\ent(\seta\setb|\setx\sety)} |\beta(G;P)|\notag\\
&= \sup_{\substack{d;\,\ket{\psi}\in \Ball(\C^d\otimes \C^d),\\
 A_x^a,B_y^b\in\Pos(\C^d),\,\sum_a A_x^a = \sum_b B_y^b=\Id }} \Big|\sum_{x,y} \pi(x,y)(-1)^{c_{xy}} \bra{\psi} \big(A_x^0-A_x^1\big)\otimes\big(B_y^0-B_y^1\big)\ket{\psi}\Big|\label{eq:xor-norm-os-0a}\\
&=\sup_{d;\,A\in\Ball(\ell_\infty^\cardx(M_d)),\,B\in\Ball(\ell_\infty^\cardy(M_d))} \Big\|\sum_{x,y} \pi(x,y)(-1)^{c_{xy}}A_x\otimes B_y\Big\| \label{eq:xor-norm-os-0}\\
&= \|G\|_{\cBil(\ell_\infty^\cardx(\C),\ell_\infty^\cardy(\C))}, \label{eq:xor-norm-os}
\end{align}
where for the last equality recall that $\Ball(\ell_\infty^n(M_d))$ denotes the unit ball of $\ell_\infty^n(M_d)$, those sequences $(x_i)\in (M_d)^n$ such that $\|x_i\|\leq 1$ for every $i\in\{1,\ldots,n\}$. In~\eqref{eq:xor-norm-os-0} the $A_x,B_y$ are allowed to be non-Hermitian, whereas as dictated by quantum mechanics in~\eqref{eq:xor-norm-os-0a} the supremum is restricted to quantum observables\footnote{An {observable} is a Hermitian matrix that squares to identity.} of the form $A_x=A_x^0-A_x^1$ for positive $A_x^0$ and $A_x^1$; nevertheless taking advantage of the unrestricted dimension $d$ the mapping $A\to\begin{pmatrix} 0 & A \\ A^* & 0\end{pmatrix}$ shows that restricting the supremum in~\eqref{eq:xor-norm-os-0} to Hermitian operators leaves it unchanged. Thus~\eqref{eq:xor-norm-os} shows that the entangled bias equals the completely bounded norm of $G$ when seen as a bilinear form on $\ell_\infty^\cardx \times\ell_\infty^\cardy$ equipped with the natural o.s.s. obtained by embedding them as diagonal matrices (q.v.~\eqref{infinity o.s.s.} in Section~\ref{sec:prelims-os}). 

As described in Section~\ref{sec:bil-tensor}, expression~\eqref{eq:xor-norm-bil} and~\eqref{eq:xor-norm-os} for the classical and entangled bias as the norm and completely bounded norm  of a bilinear form respectively may be equivalently stated in terms of the associated tensor norms. For any XOR game $G$ consider the tensor
$$\hat{G} = \sum_{x,y}\pi(x,y) (-1)^{c_{xy}} e_x\otimes e_y\in \ell_1^\cardx(\R)\otimes\ell_1^\cardy(\R).$$
From~\eqref{eq:xor-norm-bil} and~\eqref{eq:xor-norm-os} it follows immediately that 
\begin{equation}\label{eq:xor-norm-tensor}
\beta(G) = \|\hat{G}\|_{\ell_1^\cardx(\R)\otimes_\epsilon \ell_1^\cardy(\R) }\qquad\text{and}\qquad \beta^*(G) = \|\hat{G}\|_{\ell_1^\cardx \otimes_{\min} \ell_1^\cardy}.
\end{equation}

To conclude this section we observe that the correspondence goes both ways. To any tensor $G\in\ell_1^\cardx\otimes \ell_1^\cardy$ with real coefficients that satisfies the mild normalization condition $\sum_{x,y}|G_{x,y}|=1$ we may associate an XOR game by defining $\pi(x,y)=| G_{xy}|$ and $(-1)^{c_{xy}} = \text{sign}(G_{xy})$. In particular any Bell functional $M:\ell_\infty^\cardx\times\ell_\infty^\cardy\to\R$ can, up to normalization, be made into an equivalent XOR game. Such functionals are called ``Bell correlation functionals'', and in this setting there is no difference of substance between the viewpoints of Bell functionals and of games. 

\subsection{Grothendieck's theorem as a fundamental limit on nonlocality}
\label{sec:grothendieck}

Tsirelson is the first to have applied results in Banach space theory to the study of nonlocal games, and this section describes some of his contributions. First we introduce the pioneering work of Grothendieck, who initiated the systematic study of norms on the algebraic tensor product of two Banach spaces. Grothendieck introduced a notion of ``reasonable'' tensor norms,\footnote{Of course Grothendieck gave a precise meaning to ``reasonable'': it is required that the tensor norms satisfy simple compatibility conditions with the Banach space structure of the spaces to be combined.} with the two ``extremal'' such norms playing a distinguished role in the theory: the  ``smallest'' reasonable norm, the injective norm $\epsilon$, and the ``largest'' reasonable norm, the projective norm $\pi$. The injective norm is defined for a general tensor product in Section~\ref{sec:bil-tensor}: it is the norm which makes the identification (\ref{bilinear-tensor}) between tensors and bilinear forms isometric, in the sense that $\|B\|=\|\hat{B}\|_{X^*\otimes_\epsilon Y^*}$ for every bilinear form $B:X\times Y\rightarrow \C$. In turn, the projective norm $\pi$ is defined so that the identification (\ref{eq:bil-tensor}) between bilinear forms and linear maps is isometric. Here we will be concerned with an additional ``reasonable'' tensor norm, the $\gamma_2^*$ norm. Although it can be defined more generally (see e.g.~\cite[Section 3]{PisierGT}), for the case of $\ell_1^\cardx(\C)\otimes \ell_1^\cardy(\C)$ of interest here the $\gamma_2^*$  norm can be expressed as 
\begin{equation}\label{eq:xor-facto-norm}
\|C\|_{\gamma_2^*} = \sup_{d\in\N;\,a_x,b_y\in \Ball(\C^d)} \Big|\sum_{xy} c_{xy}\,a_x \cdot b_y\Big|,
\end{equation}
where without loss of generality the supremum on $d$ can be restricted to $d\leq\min(\cardx,\cardy)$. (If furthermore $C$ has real coefficients then the supremum in~\eqref{eq:xor-facto-norm} can be restricted to real Hilbert spaces without changing its value.) Grothendieck's ``Th\'eor\`eme fondamental de la th\'eorie des espaces m\'etriques'' relates this norm to the injective norm as follows:
 
\begin{theorem}[Grothendieck's inequality~\cite{Gro53}]\label{thm:grothendieck}
There exist universal constants $K_G^\R$ and $K_G^\C$ such that for any integers $n,m$ and $C_1\in\R^{n\times m}$, $C_2\in\C^{n\times m}$,
$$ \|C_1\|_{\gamma_2^*} \leq K_G^\R\, \|C_1\|_{\ell_1^\cardx(\R)\otimes_\epsilon\ell_1^\cardy(\R)}\qquad\text{and}\qquad\|C_2\|_{\gamma_2^*} \leq K_G^\C \,\|C_2\|_{\ell_1^\cardx(\C)\otimes_\epsilon\ell_1^\cardy(\C)}.$$
\end{theorem}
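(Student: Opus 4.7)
The plan is to prove the real inequality via Gaussian rounding combined with Krivine's trick, and then derive the complex case by viewing $\C^n$ as $\R^{2n}$.

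The first ingredient I would establish is the classical \emph{Grothendieck identity}: for any two unit vectors $u,v$ in a real Hilbert space and $g$ a standard Gaussian vector,
$$\mathbb{E}\big[\mathrm{sign}(g\cdot u)\,\mathrm{sign}(g\cdot v)\big]\,=\,\frac{2}{\pi}\arcsin(u\cdot v).$$
This follows from a two-dimensional integral after projecting $g$ onto $\mathrm{span}(u,v)$ and exploiting rotational invariance. Applied directly to unit vectors $a_x,b_y$ nearly achieving the $\gamma_2^*$-norm of $C_1$, the rounded $\pm 1$-valued scalars $s_x=\mathrm{sign}(g\cdot a_x)$, $t_y=\mathrm{sign}(g\cdot b_y)$ would give $\mathbb{E}[s_xt_y]=\frac{2}{\pi}\arcsin(a_x\cdot b_y)$, linking the injective norm to $|\sum c_{xy}\arcsin(a_x\cdot b_y)|$ rather than to the quantity $|\sum c_{xy}\,a_x\cdot b_y|$ we actually want.

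This gap is bridged by \emph{Krivine's trick}: given any unit vectors $a_x,b_y\in H$, the plan is to construct unit vectors $\tilde a_x,\tilde b_y$ in a larger real Hilbert space $H'$ such that
$$\tilde a_x\cdot\tilde b_y\,=\,\sin\big(c\,(a_x\cdot b_y)\big)$$
for some constant $c>0$. Expand $\sin(ct)=\sum_{k\geq 0}\frac{(-1)^k c^{2k+1}}{(2k+1)!}\,t^{2k+1}$ and realize each $t^{2k+1}$ as the inner product $\langle a^{\otimes(2k+1)},b^{\otimes(2k+1)}\rangle$ inside an odd tensor power of $H$. Take $H'=\bigoplus_{k\geq 0}H^{\otimes(2k+1)}$, and set $\tilde a_x=\bigoplus_k \alpha_k a_x^{\otimes(2k+1)}$ and $\tilde b_y=\bigoplus_k(-1)^k\alpha_k b_y^{\otimes(2k+1)}$ with $\alpha_k=\sqrt{c^{2k+1}/(2k+1)!}$. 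A direct computation yields the desired inner-product identity, while $\|\tilde a_x\|^2=\|\tilde b_y\|^2=\sum_k c^{2k+1}/(2k+1)!=\sinh(c)$, so the unit-norm requirement forces $c\leq\sinh^{-1}(1)=\ln(1+\sqrt 2)$; take $c$ equal to this extremal value.

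Combining the two ingredients, apply the Grothendieck identity to $(\tilde a_x,\tilde b_y)$ to obtain $\pm 1$-valued scalars $s_x,t_y$ (functions of $g$) with $\mathbb{E}[s_xt_y]=\frac{2}{\pi}\arcsin(\tilde a_x\cdot\tilde b_y)=\frac{2c}{\pi}(a_x\cdot b_y)$. Since for every realization of $g$ the scalars $s_x,t_y$ lie in $[-1,1]$, the expression $\big|\sum_{xy}c_{xy}s_xt_y\big|$ is bounded pointwise by $\|C_1\|_{\ell_1^\cardx(\R)\otimes_\epsilon\ell_1^\cardy(\R)}$, and taking expectations yields
$$\Big|\sum_{xy}c_{xy}\,a_x\cdot b_y\Big|\,\leq\,\frac{\pi}{2c}\,\|C_1\|_{\ell_1^\cardx(\R)\otimes_\epsilon\ell_1^\cardy(\R)}.$$
Taking the supremum over $a_x,b_y$ proves the real case with $K_G^\R\leq\pi/(2\ln(1+\sqrt 2))$ (Krivine's bound). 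The complex case reduces to the real one by identifying $\C$ with $\R^2$: splitting into real and imaginary parts turns a complex matrix $C_2$ into a real matrix whose injective norm is controlled by $\|C_2\|_{\ell_1(\C)\otimes_\epsilon\ell_1(\C)}$ up to a constant factor (compare Lemma~\ref{lem:xor-real-complex}), so $K_G^\C\leq 2K_G^\R$ suffices qualitatively. The main technical subtlety is Krivine's construction itself: one must verify that the signed Taylor coefficients $(-1)^k/(2k+1)!$ really can be realized as inner products of \emph{real} unit vectors, which works precisely because the odd tensor powers form orthogonal blocks that can absorb all the signs on one side of the direct sum.
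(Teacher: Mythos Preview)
The paper does not give a proof of Theorem~\ref{thm:grothendieck}; it is stated as a classical result with a reference to~\cite{Gro53} and to the survey~\cite{PisierGT}, and the paper only uses it as a black box (it even quotes the bound $K_G^\R<\pi/(2\ln(1+\sqrt{2}))$ without derivation). So there is no ``paper's proof'' to compare against.

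That said, your argument is correct and is precisely Krivine's proof, yielding the bound the paper quotes. The Grothendieck identity, the tensor-power construction realizing $\sin(c\,\langle a,b\rangle)$ as an inner product, and the normalization $\sinh(c)=1$ are all standard and you have them right; note in particular that $\arcsin(\sin(ct))=ct$ is legitimate because $c=\ln(1+\sqrt 2)<\pi/2$ and $|t|\le 1$. Your reduction of the complex case to the real one is admittedly loose (Lemma~\ref{lem:xor-real-complex} goes in the opposite direction, controlling the complex injective norm by the real one), but since the theorem only asserts the \emph{existence} of $K_G^\C$, any constant-factor reduction suffices; a clean way is to realize the complex Hilbert space as a real one and split both the matrix entries and the inner products into real and imaginary parts, picking up at most a factor of $2$ on each side.
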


For an extensive discussion of Grothendieck's theorem and many generalizations we refer to the survey~\cite{PisierGT}. The precise values of $K_G^\R$ and $K_G^\C$ are not known, but $1<K_G^\C < K_G^\R < \pi/(2\log (1+\sqrt{2})) (=1.782...)$. Eq.~\eqref{eq:xor-norm-tensor} shows that if $\hat{G}$ is the tensor associated to an XOR game $G$ the injective norm $\|\hat{G}\|_\epsilon$ equals the classical bias $\beta(G)$. The following crucial observation, due to Tsirelson, relates the entangled bias, equal to the minimal norm of $\hat{G}$ by~\eqref{eq:xor-norm-tensor}, to the $\gamma_2^*$ norm: 
\begin{align}
\|\hat{G}\|_{\ell_1^\cardx \otimes_{\min} \ell_1^\cardy}&= \sup_{d;\,A_x,B_y\in\Ball(M_d(\C))} \Big\|\sum_{xy} \hat{G}_{xy}\,A_x\otimes B_y\Big\|\notag\\
&= \sup_{\substack{d;\,\ket{\psi}\in\Ball(\C^d\otimes\C^d),\\A_x,B_y\in\Ball(M_d(\C))}}\Big|\sum_{xy} \hat{G}_{xy}\, \bra{\psi}A_x\otimes B_y\ket{\psi}\Big|\notag\\
&= \sup_{\substack{d;\,\ket{\psi}\in\Ball(\C^d\otimes\C^d),\\A_x,B_y\in\Ball(M_d(\C))}}\Big|\sum_{xy} \hat{G}_{xy}\,\big( \bra{\psi}A_x\otimes\Id\big)\cdot\big(\Id\otimes B_y\ket{\psi}\big)\Big|\notag\\
&\leq \sup_{d;\,a_x,b_y\in \Ball(\C^d)} \Big|\sum_{xy} \hat{G}_{xy} \,a_x\cdot b_y\Big|\label{eq:xor-min-facto-0}\\
&= \|\hat{G}\|_{\gamma_2^*}.\label{eq:xor-min-facto}
\end{align}
Tsirelson further showed that for the case of real tensors $\hat{G}$ inequality~\eqref{eq:xor-min-facto-0} is an equality: for any \emph{real} unit vectors $(a_x)$ and $(b_y)$ there exists a state $\ket{\psi}$ and observables $A_x,B_y$ such that $a_x\cdot b_y = \bra{\psi}A_x\otimes B_y \ket{\psi}$ for all $x,y$. Tsirelson's argument is based on representations of the Clifford algebra and we refer to the original work~\cite{tsirel1987quantum} for details of the construction. With this correspondence in hand Grothendieck's fundamental inequality, Theorem~\ref{thm:grothendieck}, has the following  consequence for nonlocality.

\begin{corollary}[Tsirelson~\cite{tsirel1987quantum}]\label{cor:tsirelson}
Let $G$ be an XOR game. The largest bias achievable by entangled players is bounded as 
$$ \beta^*(G) \,\leq\, K_G^\R \,\beta(G).$$
Furthermore, for any $\eps>0$ there exists an XOR game $G_\eps$ with $2^{\poly(\eps^{-1})}$ questions per player such that $\beta^*(G_\eps) \geq (K_G^\R-\eps)\beta(G_\eps)$.  
\end{corollary}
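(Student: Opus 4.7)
For the upper bound my plan is to chain together four facts already laid out in the excerpt. The tensor $\hat G\in\R^{\cardx\times\cardy}$ associated to $G$ satisfies $\beta(G)=\|\hat G\|_{\ell_1^\cardx(\R)\otimes_\epsilon\ell_1^\cardy(\R)}$ and $\beta^*(G)=\|\hat G\|_{\ell_1^\cardx\otimes_{\min}\ell_1^\cardy}$ by \eqref{eq:xor-norm-tensor}. The chain \eqref{eq:xor-min-facto-0}--\eqref{eq:xor-min-facto} already records the Tsirelson bound $\|\hat G\|_{\min}\leq\|\hat G\|_{\gamma_2^*}$, obtained by writing $\langle\psi|A_x\otimes B_y|\psi\rangle$ as the inner product of the unit vectors $\langle\psi|(A_x\otimes\Id)$ and $(\Id\otimes B_y)|\psi\rangle$ in $\C^{d^2}$. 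Finally, since $\hat G$ has real entries, Grothendieck's theorem over $\R$ (Theorem~\ref{thm:grothendieck}) gives $\|\hat G\|_{\gamma_2^*}\leq K_G^\R\|\hat G\|_{\ell_1^\cardx(\R)\otimes_\epsilon\ell_1^\cardy(\R)}$. Composing the three inequalities yields $\beta^*(G)\leq K_G^\R\beta(G)$.

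For the almost-tightness statement I would invoke Tsirelson's converse: for real tensors $C$ one actually has the equality $\|C\|_{\min}=\|C\|_{\gamma_2^*}$, since any Gram matrix $(a_x\cdot b_y)$ of real unit vectors in $\R^d$ can be realised as $\langle\psi|A_x\otimes B_y|\psi\rangle$ by taking $A_x,B_y$ to be suitable $\pm1$-combinations of anticommuting generators of a Clifford algebra acting on $\C^{2^{O(d)}}$, with $|\psi\rangle$ the maximally entangled state. Granted this, it suffices to produce, for each $\eps>0$, a real matrix $C_\eps\in\R^{N\times N}$ with
$$\|C_\eps\|_{\gamma_2^*}\,\geq\,(K_G^\R-\eps)\,\|C_\eps\|_{\ell_1^N(\R)\otimes_\epsilon\ell_1^N(\R)}\qquad\text{and}\qquad N\leq 2^{\poly(1/\eps)}.$$
Normalising $C_\eps$ so that $\sum_{xy}|(C_\eps)_{xy}|=1$ and setting $\pi(x,y)=|(C_\eps)_{xy}|$, $(-1)^{c_{xy}}=\mathrm{sign}((C_\eps)_{xy})$ then converts $C_\eps$ into the advertised XOR game $G_\eps$, with $\beta(G_\eps)=\|C_\eps\|_\epsilon$ and $\beta^*(G_\eps)=\|C_\eps\|_{\gamma_2^*}$.

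The main obstacle, and the only step that requires actual work, is the quantitative control $N\leq 2^{\poly(1/\eps)}$: by definition $K_G^\R$ is a supremum over matrices of arbitrary size, so a near-optimal matrix exists abstractly but its dimension is not automatic. My plan is to discretise the standard Gaussian extremiser. Concretely, one first restricts to rank-$d$ Gram representations with $d=O(\log(1/\eps))$, next replaces the integral over $S^{d-1}$ appearing in Krivine's construction by an average over an $\eps$-net of the sphere of cardinality $(C/\eps)^d=2^{O(\log(1/\eps)^2)}$, and finally checks that the perturbation of both $\|\cdot\|_{\gamma_2^*}$ and $\|\cdot\|_\epsilon$ induced by these two approximations is $O(\eps)$. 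The delicate part is the bookkeeping of the two perturbation bounds, so as to ensure that the resulting ratio is indeed within $\eps$ of $K_G^\R$ and not just within $\eps$ of whatever value the chosen continuous extremiser achieves; once this is under control, the $2^{\poly(1/\eps)}$ scaling follows directly from the size of the net.
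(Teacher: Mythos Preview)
Your argument for the upper bound $\beta^*(G)\leq K_G^\R\beta(G)$ is correct and identical to the paper's: chain \eqref{eq:xor-norm-tensor}, \eqref{eq:xor-min-facto}, and Theorem~\ref{thm:grothendieck}.

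For the ``furthermore'' part, the paper simply invokes Tsirelson's equality $\beta^*(G)=\|\hat G\|_{\gamma_2^*}$ for real $\hat G$ and then cites~\cite{raghavendra2009towards} for the existence of a near-extremal real tensor of size $2^{\poly(1/\eps)}$. You correctly reduce to the same existence statement but then attempt to prove it directly. The overall shape of your plan (dimension reduction followed by an $\eps$-net) is the right one and is essentially what Raghavendra--Steurer do, but two details are off. First, there is no ``standard Gaussian extremiser'' or relevant ``Krivine construction'' achieving the ratio $K_G^\R$: Krivine's argument gives an \emph{upper} bound on $K_G^\R$, and the exact value of $K_G^\R$ is unknown, so you cannot start from an explicit continuous extremiser and discretise it. The argument has to begin with an \emph{abstract} near-extremal matrix $C$ of uncontrolled size (existing by definition of the supremum) and then compress it. Second, the reduction of the Gram vectors to dimension $d=O(\log(1/\eps))$ is too optimistic; a Johnson--Lindenstrauss argument as in the proof of Lemma~\ref{lem:xor-dim} gives $d=O(\eps^{-2})$. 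With that $d$, one rounds the optimal vectors $a_x,b_y$ to an $\eps$-net of $S^{d-1}$ and then groups the rows and columns of $C$ by nearest net point, summing within each group: this produces a matrix of size $(C/\eps)^d=2^{\poly(1/\eps)}$, the grouping can only decrease $\|\cdot\|_\epsilon$, and it preserves $\|\cdot\|_{\gamma_2^*}$ up to $O(\eps)$ provided $\sum|C_{xy}|$ is normalised. This row/column aggregation step --- not just netting the sphere --- is the missing piece in your sketch.
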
 

The ``furthermore'' part of the theorem follows from the aforementioned observation of Tsirelson that the inequality $\beta^*(G)\leq \|\hat{G}\|_{\gamma_2^*}$ obtained by combining~\eqref{eq:xor-norm-tensor} and~\eqref{eq:xor-min-facto} is an equality together with the fact, shown in~\cite{raghavendra2009towards}, that for any $\eps>0$ there exists a real tensor $\hat{G}_\eps\in\ell_1^n(\R)\otimes\ell_1^n(\R)$, where $n=2^{\poly(\eps^{-1})}$, such that $\|\hat{G}_\eps\|_{\gamma_2^*}\geq (K_G^\R-\eps)\|\hat{G}_\eps\|_\epsilon$.

\medskip

In the remainder of this section we describe some consequences of Tsirelson's characterization of the entangled bias of an XOR game through the $\gamma_2^*$ norm. An important hint that the formulation~\eqref{eq:xor-min-facto} for the entangled bias will prove amenable to  analysis comes from it being \emph{efficiently computable}, as we now explain. Using~\eqref{eq:xor-facto-norm} it is a simple exercise to verify that
\begin{equation}\label{eq:xor-facto-sdp}
\|C\|_{\gamma_2^*} = \sup_{Z\in \Pos(\C^{\cardx+\cardy}),\,Z_{ii}\leq 1\,\forall i} \Tr\big( \tilde{C} Z\big),
\end{equation}
where the supremum is taken over positive semidefinite $Z$ that have all their diagonal coefficients at most $1$ and $\tilde{C}=\frac{1}{2}\begin{pmatrix} 0 & C\\C^T & 0\end{pmatrix}$. The benefit of rewriting~\eqref{eq:xor-facto-norm} in this form is that the supremum has been linearized: the right-hand side of~\eqref{eq:xor-facto-sdp} is now the optimization of a linear function over the positive semidefinite cone, under linear constraints. This is precisely the kind of optimization problem known as a \emph{semidefinite program}, and it follows from general results in semidefinite optimization~\cite{NesterovN94interior} that for any $\eps>0$ the optimum of~\eqref{eq:xor-facto-sdp} can be approximated to within additive error $\pm\eps$ in time polynomial in $\cardx+\cardy$, the size of the coefficients of $C$, and $\log(1/\eps)$. This stands in stark contrast to the classical bias, which is known to be NP-hard to approximate to within a constant multiplicative factor --- indeed, assuming the famous ``unique games conjecture'', to within any factor smaller than the real Grothendieck constant~\cite{raghavendra2009towards}. 

Tsirelson's characterization leads to a quantitative understanding of the amount of entanglement needed to play optimally, or near-optimally, in an XOR game. Using that the  supremum in~\eqref{eq:xor-facto-norm} is always achieved by unit vectors $a_x,b_y$ of dimension $d=\min(\cardx,\cardy)$, Tsirelson's construction yields observables of dimension $2^{\lfloor \min(\cardx,\cardy)/2\rfloor}$ which, together with a maximally entangled state of the same dimension, can be used to implement an optimal strategy for the players. Based on a slightly more involved argument Tsirelson~\cite{tsirel1987quantum} showed that it is always possible to obtain strategies achieving $\bias^*(G)$ using a maximally entangled state of dimension $2^{\lfloor r/2 \rfloor}$, where $r$ is the largest integer such that ${r+1 \choose 2} < \cardx+\cardy$; Slofstra exhibits a family of games for which this bound is tight~\cite{Slofstra11xor}. Using a dimension reduction argument based on the Johnson-Lindenstrauss lemma it is possible to show that approximately optimal strategies, achieving a bias at least $\beta^*(G)-\eps$, can be found in a dimension $2^{O(\eps^{-2})}$ depending only on $\eps$ and not on the size of the game; this will be shown in Lemma~\ref{lem:xor-dim} in Section~\ref{sec:xor-entanglement}. 

We end this section with an application to the problem of \emph{parallel repetition}. Given a game $G$ (not necessarily an XOR game), define the $\ell$-th parallel repeated game $G^{(\ell)}$ as follows. The referee selects $\ell$ pairs of questions $(x_1,y_1),\ldots,(x_\ell,y_\ell)\in\setx\times\sety$ for the players, each pair chosen independently according to $\pi$. He sends $(x_1,\ldots,x_\ell)$ to the first player, Alice, and $(y_1,\ldots,y_\ell)$ to the second player, Bob. Upon receiving answers $(a_1,\ldots,a_\ell)\in\seta^\ell$ and $(b_1,\ldots,b_\ell)\in\setb^\ell$ respectively the referee accepts with probability $\prod_{i=1}^\ell V(a_i,b_i,x_i,y_i) \in [0,1]$.\footnote{Note that in case $V(a,b,x,y)\in\{0,1\}$ this corresponds to requiring that the players ``win'' each of the $\ell$ instances of the game played in parallel.} The parallel repetition problem is the following: assuming $\omega(G)<1$, does $\omega(G^{(\ell)})\to_{\ell\to\infty} 0$, and if so at which rate? 

Naturally if the players decide upon their answers independently it will be the case that their success probability in $G^{(\ell)}$ is the $\ell$-th power of their success probability in $G$. But in some cases they can do better, taking advantage of the fact that all questions are received at once: perhaps surprisingly, there exists a simple game $G$ such that $\omega(G)=\omega(G^{(2)})<\omega^*(G)=\omega^*(G^{(2)})<1$~\cite{CleveSUU08xor}. This situation is, in fact, generic, in the sense that perfect parallel repetition is known to hold only in very specific cases --- a prime example of which is the entangled bias of XOR games, as we now explain. 

For the case of XOR games it is more natural to consider what is known as the \emph{direct sum} property.\footnote{General reductions relating the property of parallel repetition to that of direct sum are known; see e.g.~\cite{impagliazzo2010constructive}.} Let $G^{(\oplus \ell)}$ be defined as $G^{(\ell)}$ except that the players win if and only if the parity of all their answers equals the required parity across the $\ell$ instances, i.e. $(a_1\oplus b_1)\oplus\cdots \oplus(a_\ell\oplus b_\ell) = c_{x_1y_1}\oplus \cdots \oplus c_{x_\ell y_\ell}$. Recalling the definition of the tensor $\hat{G}_{xy}=\pi(x,y)(-1)^{c_{xy}}$ associated to $G$ we see that $\hat{G}^{(\oplus\ell)} = \hat{G}\otimes \cdots \otimes \hat{G} = \hat{G}^{\otimes \ell}$. Together with the relations~\eqref{eq:xor-norm-tensor} the direct sum problem for XOR games is equivalent to the problem of relating $\|\hat{G}^{\otimes \ell}\|_\epsilon$ to $\|\hat{G}\|_\epsilon$ (for the classical bias) and $\|\hat{G}^{\otimes \ell}\|_{\min}$ to $\|\hat{G}\|_{\min}$ (for the entangled bias).

The reader familiar with the $\gamma_2^*$ norm will be aware that for the case of $\ell_1^\cardx\otimes \ell_1^\cardy$ it possesses the tensoring property
$$ \|C_1\otimes C_2\|_{\gamma_2^*} = \|C_1\|_{\gamma_2^*} \cdot \|C_2\|_{\gamma_2^*}$$
for any $C_1 $ and $C_2$, as can be verified directly from the definition (see~\cite{CleveSUU08xor} for a duality-based proof). Thus the entangled bias of XOR games satisfies a perfect direct sum property, $\beta^*(G^{(\oplus\ell)})=(\beta^*(G))^\ell$.

Interestingly, the combination of Grothendieck's inequality and the direct sum property for the entangled bias implies that the classical bias does \emph{not} itself satisfy the direct sum property. Indeed, consider any XOR game $G$ such that $\bias^*(G) > \bias(G)$, such as the CHSH game~\eqref{eq:chsh-def}. If the classical bias were multiplicative the ratio $\bias^*(G^{(\oplus\ell)})/ \bias(G^{(\oplus\ell)})$ would go to infinity with $\ell$, violating Grothendieck's inequality. The behavior of $\bias(G^{(\oplus\ell)})^{1/\ell}$ as $\ell$ goes to infinity is not completely understood; see~\cite{barak2008rounding} for an analysis of the limiting behavior of $\omega(G^{(\ell)})^{1/\ell}$ when $\omega(G)$ is close to $1$.

\subsection{Three-player XOR games: unbounded violations}
\label{sec:three-xor}

The correspondence established in the previous section between the classical and entangled bias of a two-player XOR game and the bounded and completely bounded norm of the associated bilinear form respectively  suggests a tempting ``recipe'' for obtaining large violations of locality in quantum mechanics, i.e. constructing games such that $\omega^*(G)\gg\omega(G)$. First, find operator spaces $X$ and $Y$ such that the injective and minimal norms on their tensor product are not equivalent, in the sense that there exist families of tensors for which the ratio of the latter over the former grows arbitrarily. Second, investigate whether tensors on $X\otimes Y$ can be associated to games in a way that the injective and  minimal norms correspond to natural quantities of the obtained games --- ideally, the classical or entangled value. This second step is, of course, rather subtle, and most combinations of norms will in general \emph{not} give rise to a natural quantity from the point of view of quantum information. 

The first to have applied (indeed, uncovered) the ``recipe'' outlined above are Perez-Garcia et al.~\cite{PerezWPVJ08tripartite}, who consider the case of \emph{three-player} XOR games. Motivated by the study of possible trilinear extensions of Grothendieck's inequality, they prove the following. 

\begin{theorem}\label{thm:3xor-largeviolation}
For every integer $n$ there exists\footnote{The proof given in~\cite{PerezWPVJ08tripartite} is highly non-constructive and only guarantees the existence of $T$.} an $N$ and a trilinear form $T:\ell_\infty^{2^{n^2}}\times\ell_\infty^{2^{N^2}}\times\ell_\infty^{2^{N^2}}\to\C$ such that
$$\|T\|_{\cBil(\ell_\infty^{2^{n^2}},\ell_\infty^{2^{N^2}},\ell_\infty^{2^{N^2}})} \geq \big\|T\otimes \Id_{M_{n}}\otimes \Id_{M_{N}}\otimes \Id_{M_{N}}\big\| =\Omega\big(\sqrt{n} \big) \|T\|_{\Bounded(\ell_\infty^{2^{n^2}},\ell_\infty^{2^{N^2}},\ell_\infty^{2^{N^2}})}.$$
Moreover, $T$ can be taken with real coefficients.
\end{theorem}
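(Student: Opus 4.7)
The plan is to apply the probabilistic method, exploiting the failure of Grothendieck's inequality in the trilinear setting. Take the coefficients of $T$ to be independent Rademacher random variables $\varepsilon_{x,y,z}\in\{\pm 1\}$ indexed by $x\in\{0,1\}^{n^2}$ and $y,z\in\{0,1\}^{N^2}$, where $N=N(n)$ is a parameter to be chosen polynomially in $n$. The goal is to show that with positive probability a single draw of $T$ simultaneously satisfies an upper bound on $\|T\|_{\Bounded(\ell_\infty^{2^{n^2}},\ell_\infty^{2^{N^2}},\ell_\infty^{2^{N^2}})}$ and a lower bound on the amplified norm $\|T\otimes \Id_{M_n}\otimes \Id_{M_N}\otimes \Id_{M_N}\|$, with ratio at least $\Omega(\sqrt{n})$.

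For the upper bound on the bounded norm, by trilinearity the supremum
\[ \|T\|_\Bounded = \sup_{a,b,c} \Big| \sum_{x,y,z} \varepsilon_{x,y,z}\, a_x b_y c_z\Big|, \qquad a\in\Ball(\ell_\infty^{2^{n^2}}),\ b,c\in\Ball(\ell_\infty^{2^{N^2}}), \]
is attained on sign vectors. For each fixed triple of signs the sum is a Rademacher linear combination with $2^{n^2+2N^2}$ terms, so Hoeffding gives a Gaussian tail of variance $\sim 2^{n^2+2N^2}$. A union bound over the $2^{2^{n^2}+2\cdot 2^{N^2}}$ sign triples shows that, with high probability, $\|T\|_\Bounded\le C\sqrt{(2^{n^2}+2\cdot 2^{N^2})\cdot 2^{n^2+2N^2}}$.

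For the lower bound on the cb-norm, I would exhibit contractions $A_x\in M_n$, $B_y,C_z\in M_N$ witnessing a large value of $\|\sum_{x,y,z}\varepsilon_{x,y,z}\, A_x\otimes B_y\otimes C_z\|$. A natural choice is to identify each input with a suitably normalized structured operator (e.g.\ sign matrices $A_x=D_x/n$ obtained by identifying $\{0,1\}^{n^2}\cong\{\pm1\}^{n\times n}$, or Pauli/Clifford words, and analogously for $B_y,C_z\in M_N$). Evaluating the amplified trilinear form on these families, and applying noncommutative Khintchine inequalities iteratively (first integrating out the Rademacher coefficients, then controlling the norm of the resulting tensor-valued Rademacher sum through ``row'' and ``column'' Hermitian expressions), should produce a lower bound which, when combined with the previous upper bound, yields a ratio of order $\sqrt{n}$ once $N$ is tuned appropriately.

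The main obstacle is that both estimates must hold for the \emph{same} realization of $T$, so the concentration arguments must be quantitatively compatible: the lower bound on the cb-norm must survive on the high-probability event where the upper bound on the classical value is achieved. This forces the use of sharp, dimension-dependent noncommutative Khintchine estimates (rather than crude tail bounds), together with a careful choice of $N$ as a polynomial function of $n$ that is large enough to trigger the noncommutative concentration yet not so large that it swamps the $\sqrt{n}$ advantage. This tension is also what renders the construction inherently nonconstructive: the lower bound exploits the genericity of the random coefficients against a structured family of matrices, while the upper bound is generic in the opposite direction; deriving an explicit trilinear form with the same violation appears to require new ideas.
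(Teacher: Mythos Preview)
Your proposal has a genuine gap in the lower bound step, and the overall architecture differs substantially from the argument the paper presents.

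The upper bound via Hoeffding and a union bound over sign vectors is fine and standard. The problem is the lower bound on $\|T\otimes\Id_{M_n}\otimes\Id_{M_N}\otimes\Id_{M_N}\|$. You propose fixing families $A_x\in M_n$, $B_y,C_z\in M_N$ \emph{independently} of the random signs $\varepsilon_{x,y,z}$ and then invoking ``noncommutative Khintchine iteratively'' to lower bound $\|\sum_{x,y,z}\varepsilon_{x,y,z}A_x\otimes B_y\otimes C_z\|$. But Khintchine-type inequalities control moments of such random sums from above and below by quantities like $\|(\sum A_xA_x^*)^{1/2}\|$ etc.; they do not by themselves manufacture a $\sqrt{n}$ gain over the classical value for an \emph{arbitrary} fixed family of contractions. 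You never carry out the computation, and there is no mechanism in your sketch explaining where the $\sqrt{n}$ comes from. Worse, a strategy chosen before seeing $\varepsilon$ cannot exploit any structure of the specific realization of $T$; generically the random signs will wash out whatever structure your fixed $A_x,B_y,C_z$ carry, and the resulting operator norm will be of the same order as the classical value. The sentence ``should produce a lower bound which\ldots yields a ratio of order $\sqrt{n}$'' is the entire content of the argument, and it is not substantiated.

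The paper's route is quite different and worth contrasting. Rather than taking i.i.d.\ signs as the game coefficients, one first builds a \emph{rank-one} random tensor $T=g\otimes g'\in(\C^n)^{\otimes 6}$ with $g,g'$ random unit vectors in $\C^{n^3}$. This guarantees $\|T\|_{2,\epsilon}=1$ for free (the bipartite injective norm of a rank-one tensor), while a delicate concentration/net argument shows $\|T\|_{3,\epsilon}=O(n^{-2}\log^{3/2}n)$ with high probability. The game is then obtained by expanding $T$ in a Pauli basis $\{P\otimes Q\otimes R\}$ of $M_n^{\otimes 3}$: the coefficients are $G_{P,Q,R}=\langle P\otimes Q\otimes R,T\rangle$. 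The crucial point you are missing is that the quantum strategy now uses the Paulis themselves as observables \emph{and the top eigenvector of $T$ as the shared state}; by Fourier inversion $\sum_{P,Q,R}G_{P,Q,R}\,P\otimes Q\otimes R=n^3 T$, so the quantum value is exactly $n^3\|T\|_{2,\epsilon}$ (up to normalization), while the classical value is controlled by $\|T\|_{3,\epsilon}$. The $\sqrt{n}$ emerges from the ratio $\|T\|_{2,\epsilon}/\|T\|_{3,\epsilon}$, not from any Khintchine estimate. In short: the paper's lower bound is deterministic once $T$ is fixed (the strategy is adapted to $T$), and all the probabilistic work goes into the \emph{upper} bound on $\|T\|_{3,\epsilon}$; your proposal has this backwards.
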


The separation between the injective and the minimal norm established in this theorem can be interpreted as the \emph{absence} of a tripartite generalization of Grothendieck's inequality. Nevertheless, as shown in Lemma~\ref{lem:3xor-questionbound} below the violation cannot grow arbitrarily fast, and it  cannot exceed the product of the square roots of the dimension of each space.

It is easily verified that the correspondences~\eqref{eq:xor-norm-bil} and~\eqref{eq:xor-norm-os} from the previous section extend directly to the case of three (or more) copies of the space $\ell_\infty$. In particular to the trilinear map $T$ whose existence is promised by Theorem~\ref{thm:3xor-largeviolation} can be associated a three-player XOR game $T$ whose classical and entangled biases equal the bounded and completely bounded norm of $T$ respectively (up to normalization by $\sum_{xyz}|T(e_x,e_y,e_z)|$): in this game, answers $a_x,b_y,c_z\in\{0,1\}$ to questions $x,y,z$ are accepted if and only if $(-1)^{a_x\oplus b_y \oplus c_z} = \hat{T}_{xyz} = {\text{sign}(T(e_x,e_y,e_z))}$. 

Thus Theorem~\ref{thm:3xor-largeviolation} implies that tripartite Bell correlation inequalities can be violated by unbounded amounts in quantum mechanics. We state this important consequence as the following corollary using the language of games, with an improved dependence on the number of questions obtained in~\cite{BV12}. 

\begin{corollary}\label{cor:3xor-largeviolation}
There exists a $C>0$ such that for any $n$ there is a three-player XOR game $G$ with $n^2$ questions per player such that 
$$\beta^*(G) \geq C\frac{\sqrt{n}}{\log^{3/2} n}\,\beta(G).$$
Furthermore there exists a quantum strategy achieving this lower bound using an entangled state of local dimension $n$ per player. 
\end{corollary}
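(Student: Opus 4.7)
The plan is a probabilistic construction following Briet--Vidick~\cite{BV12}, refining Theorem~\ref{thm:3xor-largeviolation} to use only polynomially many questions per player. Set $N = n^2$ and consider the random three-player XOR game $G$ with $N$ questions per player, uniform question distribution $\pi \equiv N^{-3}$, and signs $\epsilon_{xyz} := (-1)^{c_{xyz}} \in \{\pm 1\}$ drawn i.i.d.\ uniformly. The aim is to show that with positive probability over these signs, simultaneously $\beta(G) = O(1/n^2)$ and $\beta^*(G) = \Omega(1/(n^{3/2}\log^{3/2} n))$, giving the advertised ratio; moreover, the lower bound will be realized by an explicit strategy using entanglement of local dimension $n$ on each side.

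The classical bound is a standard union-bound argument. Using the three-player analogue of~\eqref{eq:xor-norm-tensor} together with multilinearity, the supremum defining the injective norm of $\hat G$ is attained at sign vertices, so $\beta(G) = N^{-3}\max_{a,b,c \in \{\pm 1\}^N}|\sum_{xyz}\epsilon_{xyz}a_xb_yc_z|$. For each fixed triple $(a,b,c)$, the inner sum is a Rademacher sum of $N^3$ terms and is $O(N^2)$ except on an event of probability $\leq 2e^{-\Omega(N)}$ by Hoeffding; a union bound over the $2^{3N} = e^{O(N)}$ vertices preserves this estimate and yields $\beta(G) = O(1/n^2)$ with probability $1-o(1)$.

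The entangled lower bound is the substantive part. One exhibits an explicit strategy consisting of (a) an entangled state $|\psi\rangle \in (\C^n)^{\otimes 3}$---typically built from the GHZ state $n^{-1/2}\sum_i|iii\rangle$---and (b) observables $A_x, B_y, C_z \in M_n$ drawn from a carefully chosen family of $n^2$ Hermitian unitaries on $\C^n$ (for instance a group-representation-based basis such as generalized Weyl--Heisenberg operators). Via the operator-valued analogue of~\eqref{eq:xor-norm-tensor}, the bias of this strategy is controlled by the operator norm of the random sum $M_\epsilon := \sum_{xyz}\epsilon_{xyz}A_x \otimes B_y \otimes C_z$ on $(\C^n)^{\otimes 3}$. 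One then applies the noncommutative Khintchine inequality (or matrix Bernstein) iteratively, once per tensor slot, to obtain $\|M_\epsilon\|_{op} = \Omega(n^{9/2}/\log^{3/2}n)$ with positive probability; the factor $\sqrt n$ gain over what ordinary scalar concentration would yield is exactly the operator-space gap that drives unbounded trilinear violations. Normalization then gives $\beta^*(G) = \Omega(1/(n^{3/2}\log^{3/2}n))$.

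The main obstacle is achieving the full $\sqrt n$ gain with only local dimension $n$: this constrains the observable basis to at most $n^2$ unitaries in $M_n$ and the entanglement to dimension $n$ per player, and the strategy (state and observables alike) must in general be allowed to depend on the realized $\epsilon$, since any fixed choice of $A,B,C$ would yield only $\|M_\epsilon\|_{op} = \Theta(n^3)$ (by matching upper and lower bounds from noncommutative Khintchine), and hence a ratio of merely $O(1/n)$. The combinatorial coordination of the three Khintchine steps across the tensor slots so as to simultaneously saturate the operator lower bound in all three players is the technical heart of the argument; those three sequential applications account for the $\log^{3/2} n$ loss, which appears intrinsic to current tools. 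Combining both bounds yields the corollary, with the quantum strategy of local dimension $n$ being exactly the one constructed.
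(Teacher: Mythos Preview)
Your approach diverges substantially from the paper's, and the entangled lower bound contains a real gap.

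The paper does \emph{not} take an i.i.d.\ random sign tensor and then search for good observables. Instead it first draws a random rank-one tensor $T_{ii'jj'kk'}=g_{ijk}g'_{i'j'k'}$ in $(\C^n)^{\otimes 6}$, proves via a concentration argument (Lata{\l}a's tail bound for Gaussian chaoses plus a careful net over projections of each rank) that $\|T\|_{3,\epsilon}=O(n^{-2}\log^{3/2}n)$ while $\|T\|_{2,\epsilon}\ge 1$, and only \emph{then} defines the game by $G_{P,Q,R}=\langle P\otimes Q\otimes R,T\rangle$ with $P,Q,R$ ranging over a fixed Pauli-type orthonormal basis of $M_n$. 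The payoff of this ordering is that the quantum strategy is explicit and trivial: each player's observable on question $P$ is $P$ itself, and the shared state is the top eigenvector of $T$ viewed as a map $\C^{n^3}\to\C^{n^3}$. The inequality $\beta^*(G)\ge n^3 Z^{-1}\|T\|_{2,\epsilon}$ is then a one-line calculation, and $\beta(G)\le Z^{-1}n^{9/2}\|T\|_{3,\epsilon}$ follows directly from the definition of the $3$-injective norm. All the analytic work goes into Step~1, bounding $\|T\|_{3,\epsilon}$.

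Your proposal reverses this: you fix the game (i.i.d.\ signs) and then hope to find observables. But the mechanism you invoke --- ``apply noncommutative Khintchine iteratively, once per tensor slot'' --- gives bounds on $\mathbb{E}\|\sum_i\epsilon_i A_i\|$ for \emph{fixed} matrices $A_i$; it says nothing about $\sup_{A,B,C}\|M_\epsilon\|$ for a given realization of $\epsilon$. You yourself note that any fixed $A,B,C$ yields only $\|M_\epsilon\|\asymp n^3$, hence a ratio below $1$, and then assert that adapting the strategy to $\epsilon$ recovers the missing $n^{3/2}$; but no argument is given for this, and ``combinatorial coordination of three Khintchine steps'' is not a method. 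More concretely, your suggested state --- the GHZ state --- cannot work: as the paper records in~\eqref{eq:kxor-schmidtbound}, Schmidt states (including GHZ) give at most a constant-factor violation in any $k$-player XOR game. The state that achieves the bound in the paper's proof is a Haar-random unit vector in $\C^{n^3}$, precisely because it is \emph{not} of Schmidt form.

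So the gap is the entire entangled lower bound: you have neither a candidate strategy that can in principle achieve $\Omega(\sqrt{n})$ violation (GHZ cannot), nor an argument that one exists for the i.i.d.\ sign game. The paper's device of building the game coefficients \emph{from} a tensor $T$ whose bipartite-versus-tripartite norm gap is already established is exactly what makes the quantum strategy drop out for free.
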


Both Theorem~\ref{thm:3xor-largeviolation} and Corollary~\ref{cor:3xor-largeviolation} extend to any number $k\geq 3$ of players, giving a violation of order $\tilde{\Omega}(n^{\frac{k-2}{2}})$, where the $\tilde{\Omega}$ notation suppresses polylogarithmic factors. As already mentioned it is possible to show that this dependence of the violation on the number of questions is at most quadratically far from optimal. 

\begin{lemma}\label{lem:3xor-questionbound}
Let $G$ be a $k$-player XOR game in which the number of questions to the $i$-th player is $n_i$. Then 
$$\beta^*(G) = O(\sqrt{n_1\cdots n_{k-2}}\big)\beta(G).$$
\end{lemma}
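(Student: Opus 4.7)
My plan is to reduce the $k$-player XOR game to a bipartite one by grouping the first $k-2$ players into a single ``super-player'', paying a factor of $\sqrt{n_1\cdots n_{k-2}}$ for the reduction, and then applying Corollary~\ref{cor:tsirelson} (Grothendieck's inequality via Tsirelson) to the residual two-player XOR game.

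Concretely, starting from the characterization~\eqref{eq:xor-norm-os-0} I would fix a near-optimal strategy for $\beta^*(G)$, consisting of contractions $A^{(i)}_{x_i}$ for each player, and bundle the first $k-2$ players' contributions into tensor product operators $B_{\vec{x}'}=A^{(1)}_{x_1}\otimes\cdots\otimes A^{(k-2)}_{x_{k-2}}$ indexed by $\vec{x}'=(x_1,\ldots,x_{k-2})$. Setting $N_{\vec{x}'}=\sum_{x_{k-1},x_k}\hat{G}_{\vec{x}',x_{k-1},x_k}\,A^{(k-1)}_{x_{k-1}}\otimes A^{(k)}_{x_k}$, the multilinear expression computing $\beta^*(G)$ rewrites as the norm of $\sum_{\vec{x}'}B_{\vec{x}'}\otimes N_{\vec{x}'}$, to which the operator-space Cauchy--Schwarz (``row--column'') inequality applies:
\begin{equation*}
\Big\|\sum_{\vec{x}'}B_{\vec{x}'}\otimes N_{\vec{x}'}\Big\|\leq\Big\|\sum_{\vec{x}'}B_{\vec{x}'}B_{\vec{x}'}^*\Big\|^{1/2}\Big\|\sum_{\vec{x}'}N_{\vec{x}'}^*N_{\vec{x}'}\Big\|^{1/2}\leq\sqrt{n_1\cdots n_{k-2}}\cdot\Big\|\sum_{\vec{x}'}N_{\vec{x}'}^*N_{\vec{x}'}\Big\|^{1/2},
\end{equation*}
where the last step uses that each $B_{\vec{x}'}$ is a contraction and there are $n_1\cdots n_{k-2}$ terms in the first sum.

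It remains to control $\bigl\|\sum_{\vec{x}'}N_{\vec{x}'}^*N_{\vec{x}'}\bigr\|^{1/2}$, which depends only on the last two players' observables. This quantity can be written as the supremum over a unit vector $|\phi\rangle$ of $\sum_{\vec{x}'}\|N_{\vec{x}'}|\phi\rangle\|^2$, and for each fixed $\vec{x}'$, $N_{\vec{x}'}$ is a bilinear expression in $A^{(k-1)}_{x_{k-1}}$ and $A^{(k)}_{x_k}$ with coefficients $\hat{G}_{\vec{x}',\cdot,\cdot}$. Applying Corollary~\ref{cor:tsirelson} to the bipartite XOR form associated with each slice $\hat{G}_{\vec{x}',\cdot,\cdot}$, together with a dualization trick (rewriting the $\ell_2$-sum over $\vec{x}'$ as a supremum over a suitable unit vector in $\ell_2^{n_1\cdots n_{k-2}}$), reduces this to a classical quantity bounded by $K_G^\R\,\beta(G)$. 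The main obstacle lies precisely in this last step: the square of the target quantity is quadratic in $\hat{G}$, and relating it back to $\beta(G)$ (rather than an $\ell_2$-type norm of $\hat{G}$, which would be too large) requires a careful combination of duality and the bipartite Grothendieck inequality applied at the level of the coefficients, exploiting the tensor structure of $\ell_\infty$ and the operator space identification $\|\cdot\|_{cb}$ from Section~\ref{sec:prelims-os}. Combining all the pieces gives the claimed bound $\beta^*(G)\leq C\sqrt{n_1\cdots n_{k-2}}\,\beta(G)$ for an absolute constant $C$.
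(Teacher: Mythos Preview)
Your approach is essentially correct for $k=3$, but has a real gap for $k\geq 4$, and the gap is exactly at the step you yourself flag as the ``main obstacle''.

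After the row--column Cauchy--Schwarz you need $\bigl\|\sum_{\vec{x}'}N_{\vec{x}'}^*N_{\vec{x}'}\bigr\|^{1/2}\leq C\,\beta(G)$. The natural way to control a column norm is the averaging identity
\[
\sum_{\vec{x}'}N_{\vec{x}'}^*N_{\vec{x}'}=\Es{\epsilon}\Bigl(\sum_{\vec{x}'}\epsilon_{\vec{x}'}N_{\vec{x}'}\Bigr)^*\Bigl(\sum_{\vec{x}'}\epsilon_{\vec{x}'}N_{\vec{x}'}\Bigr)
\]
over i.i.d.\ signs $\epsilon_{\vec{x}'}\in\{\pm1\}$, which gives $\bigl\|\sum_{\vec{x}'}N_{\vec{x}'}^*N_{\vec{x}'}\bigr\|^{1/2}\leq\sup_{\epsilon}\bigl\|\sum_{\vec{x}'}\epsilon_{\vec{x}'}N_{\vec{x}'}\bigr\|$. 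The right-hand side is the entangled bias of a two-player XOR form with coefficients $\sum_{\vec{x}'}\epsilon_{\vec{x}'}\hat{G}_{\vec{x}',x_{k-1},x_k}$, so Corollary~\ref{cor:tsirelson} bounds it by $K_G^{\R}$ times the \emph{classical} bias of that form. But that classical bias equals $\sup_{b,c}\bigl|\sum_{\vec{x}',x_{k-1},x_k}\hat{G}_{\vec{x}',x_{k-1},x_k}\,\epsilon_{\vec{x}'}b_{x_{k-1}}c_{x_k}\bigr|$, and for $k\geq 4$ the sign pattern $\epsilon_{\vec{x}'}=\epsilon_{(x_1,\ldots,x_{k-2})}$ is \emph{not} of product form $a^{(1)}_{x_1}\cdots a^{(k-2)}_{x_{k-2}}$. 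Hence this quantity need not be bounded by $\beta(G)$: it corresponds to letting the first $k-2$ players use a single \emph{joint} deterministic strategy rather than independent ones, which can be strictly stronger. Your ``dualization trick'' hits the same wall --- dualizing the $\ell_2$-sum produces coefficients indexed by the full tuple $\vec{x}'$ with no product structure, and applying Grothendieck slice-by-slice only bounds each $\|N_{\vec{x}'}\|$, whose $\ell_2$-aggregate has no reason to be $O(\beta(G))$.

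The paper's argument avoids this by decoupling the players \emph{one at a time}: at each step one applies the Cauchy--Schwarz/averaging trick to a single index $x_i$, paying $\sqrt{n_i}$, and the resulting sign is a function of $x_i$ alone. After $k-2$ iterations the accumulated signs are automatically of product form $a^{(1)}_{x_1}\cdots a^{(k-2)}_{x_{k-2}}$, and the residual two-player expression is then bounded by $K_G^{\R}\,\beta(G)$ via Corollary~\ref{cor:tsirelson}. For $k=3$ your bundling involves only one player, so there is nothing to factor and your argument goes through; for $k\geq 4$ replace the single Cauchy--Schwarz over $\vec{x}'$ by this iterated one.
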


The proof of Lemma~\ref{lem:3xor-questionbound} is based on a simple ``decoupling trick'' that allows one to track the evolution of the maximum bias as the players are successively constrained to apply classical strategies until only two players are quantum, at which point Grothendieck's inequality can be applied. We refer to~\cite{BV12} for details. 

In the remainder of this section we sketch the proof of Corollary~\ref{cor:3xor-largeviolation} (which implies Theorem~\ref{thm:3xor-largeviolation} with the improved parameters), with some further simplifications from~\cite{Pisiernote}.\footnote{See~\cite[Section 2]{palazuelos2015random} for a discussion of the differences between these proofs.}
Fix an integer $n$. The argument is probabilistic: we give a randomized construction for a game $G=G_n$ for which the claimed violation holds with high probability. The construction of $G$ proceeds in two steps. 
\begin{enumerate}
\item The first step is a probabilistic argument for the existence of a tensor $T\in (\C^{n})^{\otimes 6}$ that has certain useful spectral properties: $T$ has a substantially larger norm when interpreted as a tensor $T\in \C^{n^3}\otimes \C^{n^3}$ than when interpreted as a tensor $T\in \C^{n^2}\otimes \C^{n^2}\otimes \C^{n^2}$. Denote the former norm by $\|T\|_{2,\epsilon}$  and the latter by $\|T\|_{3,\epsilon}$.  
\item Starting from any tensor $T\in (\C^{n})^{\otimes 6}$, the second step gives a construction of a game $G$ such that the ratio $\beta^*(G)/\beta(G)$ is lower bounded as a function of the ratio of the norms $\|T\|_{2,\epsilon}/\|T\|_{3,\epsilon}$. (This construction is different from the direct correspondence between tensor and XOR game described earlier on.) The construction of $G$ from $T$ is explicit and deterministic. 
\end{enumerate}

We give more details on each step. 

\paragraph{Step 1.}
 Consider the re-ordering map $J$
\beq\label{eq:defJ}
 (\C^{n_1} \otimes \C^{m_1}) \otimes_\epsilon (\C^{n_2} \otimes \C^{m_2})\otimes_\eps (\C^{n_3} \otimes \C^{m_3}) \,\overset{J}{\to}\, (\C^{n_1} \otimes \C^{n_2}\otimes \C^{n_3}) \otimes_\epsilon (\C^{m_1} \otimes \C^{m_2}\otimes \C^{m_3}), 
\eeq
where we distinguished the dimension of each space in order to make their identification easier. The domain of $J$ is normed with $\|\cdot\|_{3,\epsilon}$ and its range with $\|\cdot\|_{2,\epsilon}$. A simple inductive argument, considering each of the six complex spaces in sequence, shows that the norm of $J$ is bounded as $\|J\|\leq \sqrt{n_2m_2}\sqrt{n_3m_3}$. The first step of the proof shows that this upper bound is almost tight by constructing a tensor $T$ such that $\|T\|_{2,\epsilon} = \tilde{\Omega}(\sqrt{n_2m_2}\sqrt{n_3m_3})\|T\|_{3,\epsilon}$. 

The definition of $T$ is as follows. Let $g\in \Ball(\C^{n_1n_2n_3})$ and $g'\in\Ball(\C^{m_1m_2m_3})$ be uniformly random unit vectors chosen according to the Haar measure,\footnote{Alternatively, random Gaussian vectors will achieve the same effect.} and let $T_{ii'jj'kk'}=g_{ijk}g'_{i'j'k'}$. Since $\sum T_{ii'jj'kk'} \overline{g_{ijk}}\overline{g'_{i'j'k'}} = \|g\|^2\|g'\|^2=1$ it is immediate that $\|T\|_{2,\epsilon} \geq 1$. 
It remains to give an upper bound on 
\begin{equation}\label{eq:3xor-ub-1}
\|T\|_{3,\epsilon}=\sup_{U\in\Ball(\C^{n_1m_1}),V\in\Ball(\C^{n_2m_2}),W\in\Ball(\C^{n_3m_3})} \Big|\sum_{ii'jj'kk'} g_{ijk} g'_{i'j'k'} U_{ii'} V_{jj'} W_{kk'}\Big|.
\end{equation}
A bound that holds with high probability over the choice of $g,g'$ can be derived  based on a delicate concentration argument that we now sketch. Assume for simplicity that $n_1=n_2=n_3=m_1=m_2=m_3=n$. Consider first the supremum in~\eqref{eq:3xor-ub-1} restricted to $U,V,W$ in $\Ball(\C^{n^2})$ such that, interpreted as matrices in $M_n$, $U,V$ and $W$ have all their singular values in $\{0,1\}$ and are of fixed ranks $r,s$ and $t$ respectively. Under this assumption it is possible to write $U=r^{-1/2}\sum_a u^a (\hat{u}^a)^*,V=s^{-1/2}\sum_b v^b(\hat{v}^b)^*,W=t^{-1/2}\sum_c w^c (\hat{w}^c)^*$ for some choice of orthonormal families $\{u^a\}$ and $\{\hat{u}^a\}$, $\{v^b\}$ and $\{\hat{v}^b\}$, and $\{w^c\}$ and $\{\hat{w}^c\}$ in $\C^n$ respectively. The expression appearing on the right-hand-side of~\eqref{eq:3xor-ub-1}  factors as
\begin{equation}\label{eq:3xor-ub-2}
\sum_{ii'jj'kk'} g_{ijk}g'_{i'j'k'} U_{ii'} V_{jj'} W_{kk'} = \frac{1}{\sqrt{rst}}\sum_{a,b,c} \Big(\sum_{ijk} g_{ijk} u^a_iv^b_jw^c_k\Big)\Big(\sum_{i'j'k'} g'_{i'j'k'} \overline{\hat{u}_{i'}^a\hat{v}_{j'}^b\hat{w}_{k'}^c}\Big).
\end{equation}
For fixed $U,V,W$ and a uniformly random choice of $g,g'$ the expression in~\eqref{eq:3xor-ub-2} has expectation $0$. Moreover, it follows from a concentration bound due to Lata{\l}a~\cite[Corollary 1]{Lataa2006} that its modulus has tails that decay at an exponential rate governed by the largest singular value as well as the Frobenius norm of $\sum_{a,b,c} (u^a\otimes v^b\otimes w^c)(\hat{u}^a\otimes \hat{v}^b\otimes \hat{w}^c)^*\in M_{n^3}$. Combining the tail bound with a union bound over a suitable $\eps$-net over the set of all projections of rank $r,s$ and $t$ the supremum of~\eqref{eq:3xor-ub-1} (restricted to projections of the specified rank) is bounded by $C/n^{2}$ for some universal constant $C$. The precise combination of the tail bound and the union bound is rather delicate, and a  careful case analysis over the possible values of $(r,s,t)$ is required, which is the reason for making that distinction in the first place. 

Once the supremum in~\eqref{eq:3xor-ub-1} has been bounded for the case of projections the extension to all $U,V,W$ of norm $1$ is obtained by decomposing an arbitrary $U$ (resp. $V$,$W$) as $U = \sum_x \alpha_x U_x$ where each $U_x$ is a projection and $\sum_x |\alpha_x| \leq 4\sqrt{\ln N}$. Combining the three factors one incurs a loss of a factor $4^3(\ln N)^{3/2}$, ultimately establishing an upper bound $\|T\|_{3,\eps} = O(n^{-2}\ln^{3/2} n)$ that holds with high probability over the choice of the coefficients of $T$ as described above. 

\paragraph{Step 2.}
In the second step, given any tensor $T$ a game $G$ is constructed such that 
\begin{equation}\label{eq:3xor-ub-3}
\beta^*(G) = \Omega(n^{-3/2})\frac{\|T\|_{2,\epsilon}}{\|T\|_{3,\epsilon}}\beta(G).
\end{equation}
For simplicity assume that $T$ is Hermitian; considering $T+T^*$ or $T-T^*$ allows a reduction to this case. Fix an orthogonal basis $\mathcal{S}=\{P_i\}_{i=1,\ldots,n^2}$ of $M_n$ for the Hilbert-Schmidt inner product such that each $P_i$ is Hermitian and squares to identity (for example, the standard Pauli basis). For any triple $(P,Q,R)\in\mathcal{S}^3$ define 
$$G_{P,Q,R}=\langle P \otimes Q \otimes R,T\rangle = \sum_{ii'jj'kk'} T_{ii'jj'kk'} \overline{P_{ii'} Q_{jj'} R_{kk'}} \in\R.$$
 The game proceeds as follows: the referee asks the triple of questions $(P,Q,R)$ with probability  $Z^{-1}|G_{P,Q,R}|$, where $Z=\sum |G_{P,Q,R}|$ is the appropriate normalization factor. Each player answers with a single bit. The referee accepts the answers if and only if their parity matches the sign of $G_{P,Q,R}$.
The classical bias of $G$ is 
\begin{align}
 \beta(G) &= \max_{x_P,y_Q,z_R\in\{\pm 1\}} Z^{-1}\Big| \sum_{P,Q,R\in\mathcal{S}} G_{P,Q,R} x_P y_Q z_R \Big| \notag\\
&=\max_{x_P,y_Q,z_R\in\{\pm 1\}} Z^{-1}\Big| \sum_{ii'jj'kk'} T_{ii'jj'kk'} \Big(\sum_{P\in\mathcal{S}}  x_P \overline{P}\Big)_{ii'}\Big(\sum_{Q\in\mathcal{S}} y_Q \overline{Q}\Big)_{jj'} \Big(\sum_{R\in\mathcal{S}} z_R \overline{R}\Big)_{kk'} \Big|,\label{eq:3xor-ub-4}
\end{align}
from which it follows that $\beta(G)\leq Z^{-1} n^{9/2}\|T\|_{3,\epsilon}$. To lower bound the entangled bias $\beta^*(G)$ it suffices to exhibit a specific strategy for the players. For the entangled state we take any state $\ket{\psi}$ that is an eigenvector of the map $T:\C^{n^3}\to\C^{n^3}$ associated with its largest eigenvalue. Each player's observable upon receiving question $P\in\mathcal{S}$ is the observable $P$ itself (this is the motivation for imposing the condition that elements of $\mathcal{S}$ are Hermitian and square to identity). With this choice 
\begin{align}
 \beta^*(G) &\geq Z^{-1} \Big|\sum_{P,Q,R\in\mathcal{S}}  G_{P,Q,R} \, \bra{\psi}(P\otimes Q \otimes R) \ket{\psi}\Big|\notag\\
&= n^3 Z^{-1} \big|\bra{\psi} T\ket{\psi}\big|\notag\\
&= n^3 Z^{-1}\|T\|_{2,\epsilon},\label{eq:3xor-ub-5}
\end{align}
where the second line follows from the normalization assumption $\Tr(P^2)=\Tr(\Id)=n$ on elements of $\mathcal{S}$. Combining~\eqref{eq:3xor-ub-4} and~\eqref{eq:3xor-ub-5} gives~\eqref{eq:3xor-ub-3}, completing the second step of the proof. 

For games with a fixed number of questions per player there is a quadratic gap between the violation obtained in Corollary~\ref{cor:3xor-largeviolation} and the best upper bound known (Lemma \ref{lem:3xor-questionbound}). Thus the following question is of interest: 

\begin{question}
Find a tripartite Bell correlation inequality $G$ with $n$ inputs per party such that $\beta^*(G)/\beta(G) = \Omega (\sqrt{n}).$
\end{question}

It is also noteworthy that the construction described in this section is probabilistic: no deterministic construction is known, and in view of the possibility for experiments it would be highly desirable to obtain a deterministic construction that is as simple as possible. 

Another open question has to do with the problem of parallel repetition. As we saw for the case of two-player games a perfect direct product theorem holds, but no such result is known for $k>2$ players. 

\begin{question}[Parallel repetition of $\beta^*$]
Does the entangled bias of tripartite XOR games obey a (perfect) parallel repetition theorem? 
(Recall that XOR games must satisfy the normalization $\sum_{x_1,x_2,x_3}|\hat{G}_{x_1,x_2,x_3}|=1$, a condition which can be written as $\|\hat{G}\|_{\ell_1^{\cardx_1}\otimes_\pi \ell_1^{\cardx_2}\otimes_\pi\ell_1^{\cardx_3}}=1$, where $\pi$ is the projective norm. This question is thus equivalent to asking about the behavior of $\|\hat{G}^{\otimes \ell}\|_{\ell_1^{\ell \cdot \cardx_1}\otimes_{min}\ell_1^{\ell \cdot \cardx_2}\otimes_{min}\ell_1^{\ell \cdot \cardx_3}}$ for arbitrary real tensors $\hat{G}$.)
\end{question}
Little is known about this problem, which may be difficult. General results on the parallel repetition of two-player games suggest that perfect parallel repetition may not hold, while an exponential decay of the entangled bias with the number of repetitions could still be generic. Interestingly, it is known that the classical bias of XOR games with three or more players does not obey any direct product property, in the strongest possible sense: as discussed in Section~\ref{sec:xor-entanglement}, there is a three-player XOR game such that $\bias(G)<1$ but $\bias(G^{(\oplus\ell)})$ is bounded from below by a positive constant independent of $\ell$.

\subsection{XOR games with quantum messages}
\label{sec:quantum-xor}

In this section we pursue the exploration of possible extensions of the framework of XOR games that are motivated by interesting operator space structures. In the previous section we considered games with more than two players, encountering a situation where no constant-factor Grothendieck inequality holds. Here we consider a different extension, remaining in the two-player setting but allowing for quantum questions to be sent to the players. As we will see in the Banach/operator space picture this corresponds to replacing the commutative space $\ell_1^n$ associated with the players' questions in a classical XOR game with its non-commutative extension $S_1^n$, the space of $n\times n$ matrices endowed with the Schatten $1$-norm. In this setting a Grothendieck inequality \emph{does} hold and again has interesting consequences for the nonlocal properties of the underlying \emph{quantum XOR games}. 

For the sake of exposition we start by giving the operator space point of view, subsequently deriving the game from the underlying bilinear form or tensor. Consider the ``non-commutative'' generalization of the class of bilinear forms $G: \ell_\infty^\cardx \times \ell_\infty^\cardy\to \C$ associated with classical XOR games, to  bilinear forms $G: M_\cardx \times M_\cardy \to \C$. To $G$ we associate a tensor $\hat{G} \in  S_1^\cardx\otimes S_1^\cardy$ by $\hat{G}_{xy,x'y'} = G(E_{xx'},E_{yy'})$, where an operator space structure on $S_1^\cardx$ (resp. $S_1^\cardy)$ is naturally obtained by identifying it as the dual of $M_\cardx$ (resp. $M_\cardy$). 
Suppose for convenience that $\hat{G}$ is Hermitian, i.e. $\hat{G}_{xy,x'y'} = \overline{\hat{G}_{x'y',xy}}$. What kind of game, if any, does $G$ correspond to? To answer this question we express the bounded and completely bounded norms of $G$ as a bilinear form, and see if they can be given a natural interpretation as quantities associated to a quantum game. Consider thus a possible definition for the bias as
\begin{equation}\label{eq:qxor-qval}
\beta(G)\,=\, \|G\|_{\Bounded(M_\cardx,M_\cardy)} = \sup_{A\in \Ball(M_\cardx),\,B\in \Ball(M_\cardy)} \big| \Tr\big(\hat{G}\cdot(A\otimes B)\big)\big|.
\end{equation}
Up to the Hermiticity requirement the norm-$1$ operators $A$ and $B$ appearing in the supremum above can  be interpreted as quantum observables, a good sign that a quantum connection may not be far. Using the assumption that the tensor $\hat{G}$ is Hermitian, according to the Hilbert-Schmidt decomposition we may write $\hat{G} = \sum_i \mu_i \ket{\phi_i}\bra{\phi_i}$. Without loss of generality we may also normalize $\|\hat{G}\|_{S_1^{\cardx\cardy}}=1$, in which case $\sum_i |\mu_i|=1$ and we can write $\mu_i = p_i(-1)^{c_i}$ for a distribution $(p_1,\ldots,p_{\cardx\cardy})$ and $c_i\in\{0,1\}$. Thus
\begin{equation}\label{eq:qxor-qval-2}
\beta(G)\,=\, \sup_{A\in \Ball(M_\cardx),\,B\in \Ball(M_\cardy)} \Big| \sum_i p_i(-1)^{c_i} \bra{\phi_i} A\otimes B \ket{\phi_i}\Big|.
\end{equation}
Consider the following game. A (quantum) referee chooses an $i\in\{1,\ldots,\cardx\cardy\}$ with probability $p_i$, prepares the bipartite state $\ket{\phi_i}\in \C^\cardx\otimes\C^\cardy$, and sends the first (resp. second) register of $\ket{\phi_i}$ to the first (resp. second) player. The players measure their half of $\ket{\phi_i}$ using observables $A,B$ respectively, obtaining outcomes $a,b$ that they send back as their answers. The verifier accepts the answers if and only if the parity $a\oplus b = c_i$ holds. The maximum bias of arbitrary quantum players (not sharing any entanglement) in such an \emph{quantum XOR game} is precisely given by~\eqref{eq:qxor-qval-2}, where the supremum on the right-hand side should be restricted to Hermitian $A,B$. For the purposes of clarifying the connection with operator spaces we adopt~\eqref{eq:qxor-qval-2} as the definition of the \emph{unentangled bias} of a quantum XOR game,\footnote{The players in an XOR game with quantum messages are always quantum, and it would be less natural to talk of the ``classical'' bias.} but it is important to keep in mind that (as already demonstrated by the example of the CHSH game described after the statement of Lemma~\ref{lem:xor-real-complex}) restricting $A,B$ to Hermitian observables leads to a value that can be a constant factor $\sqrt{2}$ smaller than the one in~\eqref{eq:qxor-qval-2}.

Next we consider the completely bounded norm of $G$, and investigate whether it also corresponds to a natural quantity associated to the game just described. Let
\begin{equation}\label{eq:qxor-eval}
\beta^*(G)\,:=\, \|G\|_{\cBil(M_\cardx,M_\cardy)} \,=\, \sup_{\substack{d\in\N;\,A\in \Ball(M_\cardx\otimes\Bounded(\C^d)),\\B\in \Ball(M_\cardy\otimes\Bounded(\C^d))}} \big\| \Tr_{\C^\cardx\otimes\C^\cardy}(\hat{G}\otimes \Id_{\C^d} \otimes \Id_{\C^d} \cdot(A\otimes B))\big\|.
\end{equation}
Interpreting the two spaces $\C^d$ in this equation as additional spaces in which the players may hold an entangled state, the quantity $\beta^*(G)$ can be interpreted as the largest bias achievable in the game $G$ by quantum players allowed to share an arbitrary quantum state.\footnote{Here one can verify, as was already the case for the entangled bias of classical XOR games, that restricting the supremum in~\eqref{eq:qxor-eval} to Hermitian $A,B$ does not affect its value.} Thus just as for the case of classical XOR games there is a direct correspondence between the largest value achievable by players not using any entanglement and the bounded norm (Banach space level), and players using entanglement and the completely bounded norm (operator space level). 

Following on the tracks of our investigation of classical XOR games, the following questions are natural: what is the largest ratio that is achievable between the unentangled and entangled biases? Can either be computed efficiently? How much entanglement is needed to achieve optimality? In the following two subsections we briefly review what is known about these questions, highlighting how the answers relate to results in operator space theory. The interested reader is referred to~\cite{RegevV12a}, as well as~\cite{cooney} where a closely related class of games called \emph{rank-one quantum games} is introduced; these games are such that the \emph{square} $\|G\|_{\cBil(M_\cardx, M_{\cardy})}^2$ has a natural interpretation as the quantum (entangled) value of a so-called \emph{rank-one quantum game}.   

\subsubsection{The unentangled bias and the non-commutative Grothendieck inequality}

We first look at the unentangled bias $\beta(G)$ of a quantum XOR game, the maximum success probability of players not sharing any entanglement, as defined in~\eqref{eq:qxor-qval}. For the case where $G$ is diagonal, i.e. the associated bilinear form satisfies $G(E_{xx'},E_{yy'})=0$ whenever $x\neq x'$ or $y\neq y'$, the game reduces to a classical XOR game and $\beta(G)$ to the classical bias. As previously mentioned this quantity is NP-hard to compute, and even, assuming the Unique Games conjecture, to approximate within any factor smaller than the real Grothendieck constant. Moreover, in that case we saw that a best-possible efficiently computable approximation was given by the $\gamma_2^*$ norm, with Grothendieck's inequality providing the required bound on the approximation guarantee. It turns out that these observations extend to the case of general $G$, where now a best-possible (assuming P$\neq$NP) efficiently computable approximation can be obtained through a variant of Grothendieck's inequality that applies to arbitrary $C^*$-algebras. 

\begin{theorem}\cite{Pisier78NCGT,Haagerup85NCGT}\label{thm:ncgt}
Let $A,B$ be $C^*$-algebras and $M\in \Bil(A, B)$. Then
\begin{equation}\label{eq:ncgt}
\sup_{(x_i)\subset A,\,(y_i)\subset B} \Big|\sum_i M(x_i,y_i)\Big| \leq 2 \sup_{x\in\Ball(A),\,y\in\Ball(B)} \big|M(x,y)\big| = 2\|M\|_{\Bounded(A,B)},
\end{equation}
where the supremum on the left-hand side is taken over all sequences $(x_i)\subset A$ and $(y_i)\subset B$ such that $\|\sum_i x_ix_i^*\|+\|\sum_i x_i^*x_i\|\leq 2$ and $\|\sum_i y_iy_i^*\|+\|\sum_i y_i^*y_i\|\leq 2$.
\end{theorem}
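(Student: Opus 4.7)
My plan is to derive the theorem from the following \emph{factorization property} of bounded bilinear forms on $C^*$-algebras: for every $M \in \Bil(A,B)$ there exist states $\phi_1, \phi_2$ on $A$ and $\psi_1, \psi_2$ on $B$ such that for all $x \in A$, $y \in B$,
$$|M(x,y)| \,\leq\, \|M\|_{\Bounded(A,B)} \bigl(\phi_1(xx^*) + \phi_2(x^*x)\bigr)^{1/2}\bigl(\psi_1(yy^*) + \psi_2(y^*y)\bigr)^{1/2}. \qquad (\star)$$
This factorization is the heart of the non-commutative Grothendieck theorem; given it, the stated inequality is essentially a one-line Cauchy-Schwarz argument.

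Assuming $(\star)$, I apply it termwise to $\sum_i M(x_i,y_i)$ and then Cauchy-Schwarz in the index $i$ to obtain
$$\Big|\sum_i M(x_i,y_i)\Big| \,\leq\, \|M\|\,\Bigl(\phi_1\bigl(\textstyle\sum_i x_ix_i^*\bigr) + \phi_2\bigl(\textstyle\sum_i x_i^*x_i\bigr)\Bigr)^{1/2}\Bigl(\psi_1\bigl(\textstyle\sum_i y_iy_i^*\bigr) + \psi_2\bigl(\textstyle\sum_i y_i^*y_i\bigr)\Bigr)^{1/2},$$
using linearity of the states. Since states are norm-one on positive elements, each parenthesized quantity is bounded by $\|\sum_i x_ix_i^*\| + \|\sum_i x_i^*x_i\| \leq 2$ (and the analogous bound on the right), so the product of the two square roots is at most $\sqrt{2}\cdot\sqrt{2}=2$, yielding the desired inequality.

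The substance is then to establish $(\star)$, which I would do by a Hahn-Banach / min-max argument on the weak-$\ast$ compact convex set $\mathcal{K} = S(A)^2 \times S(B)^2$ of quadruples of states. The key structural observation is that the right-hand side of $(\star)$ is the geometric mean of two affine functions of $(\phi_1,\phi_2,\psi_1,\psi_2)$ and is therefore jointly concave on $\mathcal{K}$. Consequently, proving the existence of a quadruple of states satisfying $(\star)$ uniformly over $(x,y)$ is equivalent, via Ky Fan's min-max theorem, to a dual statement: for every probability measure $\mu$ on $\Ball(A) \times \Ball(B)$, there exist states achieving the required bound in $\mu$-expectation. This dual formulation can be reduced to single pairs $(x,y)$ by convexity, where the existence of appropriate states is immediate from the $C^*$-identity $\|z\|^2 = \|zz^*\|$ together with the fact that the norm of a positive element is attained as a supremum over states.

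The main obstacle is achieving the sharp constant $2$ rather than a larger universal constant. The subtlety is that when passing from a single pair to a general probability measure $\mu$, one must select states that are \emph{jointly} near-optimal across the support of $\mu$, and the dependence on the left and right ``column'' and ``row'' structures $xx^*$ and $x^*x$ must be balanced carefully. Historically, this balancing is where Pisier's original proof~\cite{Pisier78NCGT} required an approximation hypothesis on one of the algebras, and Haagerup's contribution~\cite{Haagerup85NCGT} consisted in removing this hypothesis and simultaneously obtaining the sharp constant $2$; handling the general (possibly non-separable) case without any approximation property is the main technical hurdle of that proof.
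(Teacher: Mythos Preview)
The paper does not prove this theorem: it is stated as a known result with attribution to \cite{Pisier78NCGT,Haagerup85NCGT} and then used as a black box in the discussion of quantum XOR games. So there is no ``paper's own proof'' to compare against; your proposal should be measured against the original arguments of Pisier and Haagerup.

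Your reduction of the stated inequality to the factorization property $(\star)$ is correct and is exactly how the two formulations are related: triangle inequality, termwise application of $(\star)$, Cauchy--Schwarz in $i$, linearity and contractivity of states on positives. This part is fine and standard.

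The sketch for $(\star)$ itself has the right architecture (compactness of the state space, concavity of the geometric mean, a min--max/separation argument), but the sentence ``This dual formulation can be reduced to single pairs $(x,y)$ by convexity, where the existence of appropriate states is immediate'' is precisely the step that does \emph{not} go through trivially. After min--max one must show that for every \emph{finite convex combination} $\sum_k \lambda_k\,\delta_{(x_k,y_k)}$ there exist states making the averaged inequality hold; the states are chosen after the combination, but they must work simultaneously for all $(x_k,y_k)$ with the correct weights, and this is not a consequence of the single-pair case by convexity alone (the right-hand side is concave in the states, not convex, so one cannot average single-pair witnesses). Haagerup's proof handles this via a delicate iterative/bootstrapping construction of the states (a ``maximizing sequence'' argument on the state space), and this is where the sharp constant $2$ and the removal of Pisier's approximation hypothesis are actually won. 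You correctly flag this as the main obstacle in your final paragraph, but your preceding sentence understates it to the point of being misleading. If you want a self-contained proof, that step needs to be filled in properly rather than dismissed as ``by convexity.''
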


For $M$ taken as the bilinear form associated to a quantum XOR game $G$, let $\beta^{nc}(G)$ denote the quantity on the left-hand side of~\eqref{eq:ncgt}. Since it is clear that $\|G\|_{\Bounded(M_\cardx,M_\cardy)}\leq \beta^{nc}(G)$ always holds, Theorem~\ref{thm:ncgt} states that $\bias^{nc}(G)$ is an approximation to $\bias(G)$ to within a multiplicative factor at most $2$. Based on a similar re-writing as previously done for the entangled bias of a classical XOR game it is possible to express $\bias^{nc}(G)$ as the optimum of a semidefinite program, and therefore approximated to within an additive $\pm \eps$ in time polynomial in the size of $G$ and $\log(1/\eps)$. As shown in~\cite{briet2014tight}, for general (not necessarily Hermitian) $M$ it is NP-hard to obtain approximations of $\beta(M)$ within any constant factor strictly less than $2$, thus here again the Grothendieck inequality expressed in Theorem~\ref{thm:ncgt} has (up to the Hermitianity restriction) a striking application as providing the best efficiently computable approximation to the value of a quantum game.  

Theorem~\ref{thm:ncgt} has  consequences for the problem of bounding the maximum bias achievable by quantum players allowed to share entangled states of a specific form --- in this case, a maximally entangled state of arbitrary dimension. Denote by $\beta^{me}(G)$ the associated bias:
\begin{align} 
\beta^{me}(G) &= \sup_{\substack{d\in\N;\,A\in\Ball(M_\cardx\otimes\Bounded(\C^d)),\\B\in \Ball(M_\cardy\otimes\Bounded(\C^d))}} \frac{1}{d}\Big|\sum_{i,j=1}^d \Tr \big(G\otimes \ket{j}\bra{i} \otimes \ket{j}\bra{i} \cdot(A\otimes B)\big)\Big|\label{eq:bias-qxor-me}\\
&=\sup_{\substack{d\in\N;\,A\in \Ball(M_\cardx\otimes\Bounded(\C^d)),\\B\in \Ball(M_\cardy\otimes\Bounded(\C^d))}} \Big|\sum_{i,j=1}^d \Tr \big(G \cdot (A_{i,j}\otimes B_{i,j})\big)\Big|,\notag
\end{align}
where $A_{i,j}=d^{-1/2} (\Id_{\C^\cardx} \otimes \bra{i})A(\Id_{\C^\cardx} \otimes \ket{j})$ and $B_{i,j}=d^{-1/2} (\Id_{\C^\cardy} \otimes \bra{i})B(\Id_{\C^\cardy} \otimes \ket{j})$. Recall that the expression~\eqref{eq:bias-qxor-me} was introduced as the tracially bounded norm of the bilinear form $G$ in Section~\ref{sec:bil-tensor}. From their definition 
the sequences $(A_{i,j})$ and $(B_{i,j})$ are easily seen to satisfy the constraints on $(x_i)$ and $(y_i)$ in~\eqref{eq:ncgt} and it follows that the inequality $\beta^{me}(G)\leq \beta^{nc}(G)$ always holds. Hence Theorem~\ref{thm:ncgt} implies that the largest advantage that can be gained from using maximally entangled states in a quantum XOR game is bounded by a constant factor $2\sqrt{2}$, where the additional factor $\sqrt{2}$ accounts for the restriction that quantum strategies are Hermitian. As will be seen in the next section no such bound holds when the players are allowed to share an arbitrary entangled state. 

\begin{question}
The best separation known between $\bias^{me}$ and $\bias$ is a factor $K_G^\R$ that follows from the consideration of classical XOR games, and the best separation known between $\bias^{nc}$ and $\bias$ is a factor $2$~\cite{RegevV12a}. The best upper bounds on either ratio are the factor-$2\sqrt{2}$ bounds mentioned here. What are the optimal separations?
\end{question}

\subsubsection{The entangled bias and the operator space Grothendieck inequality}

Next we turn to the entangled bias $\beta^*(G)$, defined in~\eqref{eq:qxor-eval}. As previously we observe that for the case of a diagonal $G$ this bias reduces to the entangled bias of a classical XOR game. While the latter was seen to be efficiently computable through its connection with the $\gamma_2^*$ norm, it is not known whether this holds for general $G$ (see Problem~\ref{prob:qxor-2} below). As in the previous section, here again an extension of Grothendieck's inequality, this time to operator spaces, will provide us with the best known polynomial-time approximation to $\bias^*$. 

\begin{theorem}\cite{PS02OSGT,HM08}\label{thm:osgt}
Let $A,B$ be $C^*$-algebras and $M\in \Bil(A, B)$. Then
\begin{align}
\sup_{(x_i)\subset A,\,y_i\subset B,\,(t_i)>0} \Big|\sum_i M(x_i,y_i) \Big| &\leq 2 \sup_{d\in\N;\,x\in\Ball(M_d(A)),\,y\in\Ball(M_d(B))} \big\|M\otimes \Id_d\otimes\Id_d(x,y)\big\|\notag\\
&= 2\,\|M\|_{\cBil(A,B)}
\label{eq:osgt},
\end{align}
where the supremum on the left-hand side is taken over all integers $d$ and sequences $(x_i)\subset M_d(A)$, $(y_i)\subset M_d(B)$ and positive reals $(t_i)$ such that $\|\sum_i t_i^2 x_ix_i^*\|+\|\sum_i x_i^*x_i\|\leq 2$ and $\|\sum_i y_iy_i^*\|+\|\sum_i t_i^{-2} y_i^*y_i\|\leq 2$.
\end{theorem}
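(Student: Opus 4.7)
The plan is to follow the embezzlement-inspired approach advertised in the introduction, which reduces the inequality to a balanced Cauchy--Schwarz estimate after ``rebalancing'' the weights $t_i$ using a van Dam--Hayden embezzling state. The proof has three conceptual steps: (i) fix a Stinespring-type representation of $M$ arising from $\|M\|_{\cBil(A,B)}\leq 1$; (ii) employ an embezzling ancilla to convert the asymmetric weights $t_i$ into uniform ones, up to a vanishing error; (iii) apply Cauchy--Schwarz, with the factor of $2$ emerging from the sum of the two row/column contributions in each of the hypotheses.

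For step (i), since $\|M\|_{\cBil(A,B)} = \|\hat M\|_{A^*\otimes_{\min} B^*}$, after passing to normal extensions (and invoking GNS) I would assume without loss of generality that there are normal representations $\pi_A : A\to \Bounded(H_A)$ and $\pi_B: B\to \Bounded(H_B)$, together with vectors $\xi,\eta \in H_A \otimes H_B$ with $\|\xi\|\|\eta\|\leq \|M\|_{\cBil(A,B)}$, such that $M(x,y) = \langle \xi,\,(\pi_A(x)\otimes \pi_B(y))\eta\rangle$. Because this identity is matricially compatible, for sequences $(x_i)\subset M_d(A)$, $(y_i)\subset M_d(B)$ the sum $\sum_i M(x_i,y_i)$ becomes $\sum_i \langle \xi \otimes \ket{\phi}, (\pi_A(x_i)\otimes \pi_B(y_i))\cdot (\eta \otimes \ket{\phi})\rangle$, where $\ket{\phi}\in \C^d\otimes \C^d$ is a suitably chosen maximally entangled state that absorbs the matrix amplification. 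This rewrites the target quantity as a single overlap in an enlarged Hilbert space, indexed by $i$.

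For step (ii), tensor both $\xi$ and $\eta$ with an embezzling state $\ket{\Gamma_N}_{K_A K_B}$ of dimension $N$. The defining property of such states is the existence of local isometries $V_A$ on $K_A$ and $V_B$ on $K_B$ for which $(V_A\otimes V_B)\ket{\Gamma_N}$ is within $O((\log |I|)/\log N)$ in norm of $\ket{\Gamma_N}\otimes \sum_i \sqrt{w_i}\,\ket{i}\ket{i}$ for any prescribed Schmidt weights $(w_i)$. Choosing $w_i$ proportional to $t_i^2$ and absorbing these isometries into modified sequences $\tilde x_i := (x_i \otimes \Id)\cdot V_A \ket{i}$ and $\tilde y_i := (y_i\otimes \Id)\cdot V_B \ket{i}$, the sum $\sum_i M(x_i,y_i)$ can be rewritten (up to a vanishing error as $N\to\infty$) as a single inner product $\langle \xi'', (\pi_A\otimes\pi_B)(\tilde X \otimes \tilde Y)\eta''\rangle$, where the embezzlement has precisely compensated the weight asymmetry between the two hypotheses and turned the problem into a \emph{balanced} one.

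Finally, for step (iii), a Cauchy--Schwarz argument applied to the $i$-summation splits the bound into the product $\bigl(\|\sum_i t_i^2 x_i x_i^*\|+\|\sum_i x_i^* x_i\|\bigr)^{1/2}\bigl(\|\sum_i y_i y_i^*\|+\|\sum_i t_i^{-2} y_i^* y_i\|\bigr)^{1/2}$, each factor being bounded by $\sqrt 2$ by hypothesis, yielding the overall bound $2\|M\|_{\cBil(A,B)}$. The non-self-adjointness of the $x_i$ and $y_i$ is handled through the standard Hermitian dilation $x\mapsto \bigl(\begin{smallmatrix} 0 & x \\ x^* & 0\end{smallmatrix}\bigr)$, which also explains the splitting into row and column parts. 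I expect the main obstacle to be the rigorous infinite-dimensional bookkeeping in step (ii): one must check that applying the normal representations $\pi_A, \pi_B$ to operators modified by the embezzlement isometries preserves the row/column estimates, and that the van Dam--Hayden approximation error can be driven uniformly below any $\eps>0$ after the representations are plugged in, which should follow by weak-$*$ continuity but requires care since the representations act on possibly non-separable Hilbert spaces.
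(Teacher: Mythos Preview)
Your overall instinct—that embezzlement is the key device—matches the paper, but the architecture of your argument differs from the paper's in an essential way, and step~(i) contains a genuine gap.

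The paper's route (following~\cite{RegevV12b}) does \emph{not} begin by representing $M$. Instead it proves the identity
\[
\beta^{os}(M)\;=\;\sup_{d'}\,\beta^{nc}\bigl(M\otimes\ket{\Gamma_{d'}}\bra{\Gamma_{d'}}\bigr),
\]
where $\beta^{nc}$ is the left-hand side of Theorem~\ref{thm:ncgt}. Embezzlement is used solely to establish this identity: given $(x_i),(y_i),(t_i)$ satisfying the weighted constraints in~\eqref{eq:osgt}, tensoring with local isometries acting on $\ket{\Gamma_{d'}}$ absorbs the weights $t_i$ and produces new sequences satisfying the \emph{balanced} constraints of Theorem~\ref{thm:ncgt} for the tensored form $M\otimes\ket{\Gamma_{d'}}\bra{\Gamma_{d'}}$. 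Then Theorem~\ref{thm:ncgt} is invoked as a black box to obtain the factor $2$, and the chain is closed via $\sup_{d'}\beta(M\otimes\ket{\Gamma_{d'}}\bra{\Gamma_{d'}})=\beta^*(M)=\|M\|_{cb}$. As the paper notes, all known proofs of Theorem~\ref{thm:osgt} pass through Theorem~\ref{thm:ncgt}; the embezzlement argument is a reduction, not a replacement.

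The gap in your proposal is step~(i). The assertion that $\|M\|_{cb}\leq 1$ yields a representation $M(x,y)=\langle\xi,(\pi_A(x)\otimes\pi_B(y))\eta\rangle$ with $\|\xi\|\,\|\eta\|\leq 1$ does not follow from GNS or from the identity $\|M\|_{cb}=\|\hat M\|_{A^*\otimes_{\min}B^*}$. What Hahn--Banach and GNS give you is a vector functional in some representation of the C$^*$-algebra $A\otimes_{\min}B$, with norm controlled by $\|\hat M\|_{(A\otimes_{\min}B)^*}$; but that dual norm is the operator space \emph{projective} tensor norm on $A^*\otimes B^*$, not the minimal one, and can be strictly larger than $\|M\|_{cb}$. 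Concretely, for $A=B=M_n$ one has $\inf\|\xi\|\,\|\eta\|=\|\hat M\|_{S_1^{n^2}}$, whereas $\|M\|_{cb}=\|\hat M\|_{S_1^n\otimes_{\min}S_1^n}$, and these differ in general. Obtaining a tensor-product vector-state factorization with constant comparable to $\|M\|_{cb}$ is precisely the factorization form of the operator space Grothendieck inequality (the Effros--Ruan conjecture resolved in~\cite{PS02OSGT,HM08}), so your step~(i) presupposes the theorem. Without that representation, the Cauchy--Schwarz in step~(iii) has no anchor, and the factor $2$—which in the paper's argument comes entirely from Theorem~\ref{thm:ncgt}—cannot be recovered.
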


In Section~\ref{sec:os-emb} we sketch a proof of Theorem~\ref{thm:osgt} that is inspired by quantum information theory. For $M$ taken as the bilinear form associated to a quantum XOR game $G$, let $\beta^{os}(G)$
denote the quantity appearing on the left-hand side of~\eqref{eq:osgt}. As for the case of $\bias^{nc}(G)$ it turns out that this quantity can be expressed as a semidefinite program and computed in polynomial time. 

In contrast to the case of classical XOR games, there exists a family of quantum XOR games $C_n\in\Bil(M_{n+1},M_{n+1})$ such that $\bias^*(C_n)/\bias(C_n)\to_{n\to\infty}\infty$. In terms of operator spaces, this corresponds to the observation that there exist bilinear forms on $M_{n+1}\times M_{n+1}$ that are bounded but not completely bounded. One can give more precise quantitative bounds: the  family $(C_n)$ can be taken such that $\bias^*(C_n)=1$ for all $n$ but $\bias(C_n)= 1/\sqrt{n}$. In addition, $\beta^*(C_n)=1$ can only be achieved in the limit of infinite-dimensional entanglement, a topic to which we return in Section~\ref{sec:Entanglement}. (Note also that necessarily $\beta^{me}(C_n)\leq 2\sqrt{2/n}$, and in fact $\beta^{me}(C_n)=\beta(C_n)$, thus in this case maximally entangled states are no more useful than no entanglement at all.)

\begin{question}\label{prob:qxor-2}
Theorem~\ref{thm:osgt} shows that $\beta^{os}$ is a factor $2$ approximation to $\bias^*$, but the best separation known between these two quantities is of a small constant factor larger than $1$~\cite{RegevV12a}. What is the optimal constant? 

Is there a tighter, efficiently computable approximation to $\bias^*$ than $\bias^{os}$? It is not known if $\bias^*$ itself is hard to compute.  

What is the maximum ratio $\beta^*(G)/\beta(G)$, as a function of the dimension of $G$ or the dimension of the shared entangled state? The best lower bound is the factor $\sqrt{n}$ mentioned above, obtained for a $G\in\Bil(M_{n+1},M_{n+1})$.
\end{question}


\section{Measuring nonlocality via tensor norms}\label{sec:2p1r}

The class of XOR games investigated in the previous section demonstrates a very tight connection between the different values, or biases, associated to these games and appropriately defined Banach and operator space norms. In Section~\ref{sec:positive-coefficients} we pursue this connection for the general setting of two-player games, further demonstrating that operator spaces provide a natural framework for the study of their nonlocal properties. In Section~\ref{sec:signed-coefficients} we extend the correspondence to arbitrary Bell functionals, with small losses in its ``tightness''. Finally in Section~\ref{sec:games-bounds} we describe some upper and lower bounds on the maximum ratio $\omega^*/\omega$, both for the case of games and of Bell functionals.  
\subsection{Two-player games}\label{sec:positive-coefficients}
Given a game $G=(\setx,\sety,\seta,\setb,\pi,V)$, a classical strategy for the players is described by an element $P_{\seta\setb|\setx\sety}\in \class(\seta\setb|\setx\sety)$, defined in~\eqref{eq:def-classical-prob} as the convex hull of the set of product strategies $\prob(\seta|\setx)\times\prob(\setb|\setx)$. The normalization condition $\sup_x \sum_a |P(a|x)|\leq 1$ suggests the introduction of the Banach space $\ell_\infty^\cardx(\ell_1^\carda)$, defined as $\C^{\cardx\carda}$ equipped with the norm
\begin{equation}\label{eq:linfl1norm}
\|(R(x,a))_{x,a} \|_{\infty,1}\,=\,\sup_x\sum_a|R(x,a)|.
\end{equation}
The game $G$ can be seen as a bilinear form $G:\ell_\infty^{\cardx}(\ell_1^{\carda})\times \ell_\infty^{\cardy}(\ell_1^{\cardb})\rightarrow \C$ defined by
 $$G(P, Q)=\sum_{x,y;a,b}G_{x,y}^{a,b}P(x,a)Q(y,b),$$
where $G_{x,y}^{a,b} = \pi(x,y)V(a,b,x,y)$, and with norm
\begin{align}\label{bilinear general games}
\|G\|=\sup\Big\{\Big|\sum_{x,y;a,b}G_{x,y}^{a,b}P(x,a)Q(y,b)\Big |: \|P\|_{\ell_\infty^{\cardx}(\ell_1^{\carda})}, \|Q\|_{\ell_\infty^{\cardy}(\ell_1^{\cardb})}\leq 1\Big\}.
\end{align}
The correspondence between bilinear forms and tensors invites us to consider the dual space $(\ell_\infty^\cardx(\ell_1^\carda))^*=\ell_1^\cardx(\ell_\infty^\carda)=(\C^{\cardx\carda}, \|\cdot \|_{1,\infty})$, where 
$$\big\|(R(x,a))_{x,a} \big\|_{1,\infty}\,=\,\sum_x\sup_a|R(x,a)|.$$
According to~\eqref{eq:bil-epsilon-norm} the tensor $\hat{G}=\sum_{x,y;a,b}G_{x,y}^{a,b}(e_x\otimes e_a) \otimes (e_y\otimes e_b)\in  \ell_1^{\cardx}(\ell_\infty^{\carda}) \otimes  \ell_1^{\cardy}(\ell_\infty^{\cardb})$ associated to $G$ verifies 
$$\|G\|=\|\hat{G}\|_{\ell_1^{\cardx}(\ell_\infty^{\carda})\otimes_\epsilon \ell_1^{\cardy}(\ell_\infty^{\cardb})}.$$
While~\eqref{eq:linfl1norm} and~\eqref{bilinear general games} make it clear that $\omega(G)\leq \|G\|$, since the space $\ell_\infty^\cardx(\ell_1^\carda)$ allows for elements with complex coefficients there could a priori be cases where the inequality is strict. As will be seen in the next section this can indeed happen for general Bell functionals $M$; however for the case of a game $G$ it always holds that 
\begin{equation}\label{eq:game-eps-norm}
\omega (G)=\|\hat{G}\|_{\ell_1^{\cardx}(\ell_\infty^{\cardy})\otimes_\epsilon \ell_1^{\cardy}(\ell_\infty^{\cardb})}.
\end{equation}
To see this, for any $P,Q$ such that $ \|P\|_{\ell_\infty^{\cardx}(\ell_1^{\carda})}$, $\|Q\|_{\ell_\infty^{\cardy}(\ell_1^{\cardb})}\leq 1$ write
$$\Big|\sum_{x,y;a,b}G_{x,y}^{a,b}P(x,a)Q(y,b)\Big|\leq \sum_{x,y;a,b}G_{x,y}^{a,b}|P(x,a)||Q(y,b)|\leq \omega(G),$$
where the last inequality follows from $ \||P|\|_{\ell_\infty^{\cardx}(\ell_1^{\carda})}$, $\||Q|\|_{\ell_\infty^{\cardy}(\ell_1^{\cardb})}\leq 1$ and the fact that $|P|$, $|Q|$ as well as $G$ have non-negative coefficients.

We proceed to analyze entangled strategies for the players, i.e.\ the set $\ent(\seta\setb|\setx\sety)$, and their relation to the completely bounded norm of $G:\ell_\infty^{\cardx}(\ell_1^{\carda})\times \ell_\infty^{\cardy}(\ell_1^{\cardb})\rightarrow \C$. Towards this end we need to define an o.s.s.\ on $\ell_\infty^{\cardx}(\ell_1^{\carda})$. Using the o.s.s.\ on $\ell_1^\carda$ introduced in~\eqref{o.s.s. ell_1}, together with the natural o.s.s.\ on $\ell_\infty^\cardx$, one can verify that the sequence of norms
\begin{align}\label{eq:os-l1}
\Big\|\sum_{x,a}T_x^a\otimes (e_x\otimes e_a)\Big\|_{M_d(\ell_\infty^{\cardx}(\ell_1^{\carda}))}=\sup_x\Big\|\sum_{a}T_x^a\otimes e_a\Big\|_{M_d(\ell_1^{\carda})}, \text{    }\text{    } d\geq 1,
\end{align}
defines a suitable o.s.s.\ on $\ell_\infty^{\cardx}(\ell_1^{\carda})$. Moreover, a corresponding o.s.s.\ can be placed on $\ell_1^\cardx(\ell_\infty^\carda)=(\ell_\infty^\cardx(\ell_1^\carda))^*$ using duality. With these structures in place we may express the completely bounded norm of $G$ as 
\begin{align}\label{bilinear general games quantum}
\|G\|_{\cBil(\ell_\infty^{\cardx}(\ell_1^{\carda}),\ell_\infty^{\cardy}(\ell_1^{\cardb}))}&=\sup_d\big\|G\otimes \Id_{M_d} \otimes \Id_{M_d} \big \|_{\Bil(M_d(\ell_\infty^{\cardx}(\ell_1^{\carda})), M_d(\ell_\infty^{\cardy}(\ell_1^{\cardb})))}\\&\nonumber=\sup_d\big\|\sum_{x,y;a,b}G_{x,y}^{a,b}\,T_x^a\otimes S_y^b\big\|_{M_{d^2}},
\end{align}
where the supremum is taken over all $d$ and $T_x^a,S_y^b\in M_d$ such that 
$$\max\Big\{\Big\|\sum_{x,a}T_x^a\otimes (e_x\otimes e_a)\Big\|_{M_d(\ell_\infty^{\cardx}(\ell_1^{\carda}))},\Big\|\sum_{y,b}S_y^b\otimes (e_y\otimes e_b)\Big\|_{M_d(\ell_\infty^{\cardy}(\ell_1^{\cardb}))}\Big\}\,\leq\, 1.$$
Using the completely isometric correspondence described in Section~\ref{sec:bil-tensor} this norm coincides with the minimal norm of the tensor $\hat{G}$ associated to $G$,   
$$\|G\|_{\cBil}=\|\hat{G}\|_{\ell_1^{\cardx}(\ell_\infty^{\carda})\otimes_{min}  \ell_1^{\cardy}(\ell_\infty^{\cardb})}.$$
As for the case of the classical value it turns out that the entangled value of a two-player game equals the completely bounded norm of the associated bilinear form, or equivalently the minimal norm of the corresponding tensor. Since this fact seems not to have previously appeared in the literature we state it as a lemma. 

\begin{lemma}\label{quantum value 2P1R}
Given a two-player game $G$,\footnote{We will see in the next section that the lemma is false for general Bell functionals.}
\begin{equation}\label{eq:game-min-norm}
\omega^* (G)=\|\hat{G}\|_{\ell_1^{\cardx}(\ell_\infty^{\carda})\otimes_{min} \ell_1^{\cardy}(\ell_\infty^{\cardb})}.
\end{equation}
\end{lemma}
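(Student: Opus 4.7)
The plan is to prove both inequalities in~\eqref{eq:game-min-norm} by working through the completely isometric correspondence $\|\hat{G}\|_{\ell_1^\cardx(\ell_\infty^\carda)\otimes_{min}\ell_1^\cardy(\ell_\infty^\cardb)} = \|G\|_{\cBil(\ell_\infty^\cardx(\ell_1^\carda),\ell_\infty^\cardy(\ell_1^\cardb))}$ expressed in~\eqref{bilinear general games quantum}, so it suffices to show $\omega^*(G) = \|G\|_{cb}$.

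The easy direction $\omega^*(G)\leq \|G\|_{cb}$ is a direct consequence of Lemma~\ref{lem:cp-unital}. Given any strategy $(\ket{\psi},\{A_x^a\},\{B_y^b\})$ with POVMs in $M_d$, the map $\phi^x:\ell_\infty^\carda\to M_d$, $e_a\mapsto A_x^a$, is completely positive (because $\ell_\infty^\carda$ is commutative) and unital, hence $\|\phi^x\|_{cb}=1$. Via the duality~\eqref{dual o.s.s.} this says $\|\sum_a A_x^a\otimes e_a\|_{M_d(\ell_1^\carda)}=1$, and then~\eqref{eq:os-l1} gives $\|\sum_{x,a} A_x^a\otimes e_x\otimes e_a\|_{M_d(\ell_\infty^\cardx(\ell_1^\carda))}=1$. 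The same holds for $B$, so~\eqref{bilinear general games quantum} yields $\|\sum G_{xy}^{ab} A_x^a\otimes B_y^b\|_{M_{d^2}}\leq \|G\|_{cb}$, and estimating by the inner product on $\ket{\psi}$ gives the bound.

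For the hard direction $\|G\|_{cb}\leq \omega^*(G)$, fix $d$ and contractions $\sum_{x,a} T_x^a\otimes e_x\otimes e_a\in \Ball(M_d(\ell_\infty^\cardx(\ell_1^\carda)))$ and $\sum_{y,b} S_y^b\otimes e_y\otimes e_b\in \Ball(M_d(\ell_\infty^\cardy(\ell_1^\cardb)))$. Unwinding the o.s.s. as above, this means that each $\phi^x:\ell_\infty^\carda\to M_d$, $e_a\mapsto T_x^a$, and each $\psi^y:\ell_\infty^\cardb\to M_d$, $e_b\mapsto S_y^b$, has c.b.\ norm at most $1$. I now invoke Paulsen's off-diagonal dilation theorem: for each $x$ there exists a UCP map $\Phi^x:\ell_\infty^\carda\to M_2(M_d)\simeq M_{2d}$ having $\phi^x$ as its $(1,2)$-corner, concretely $T_x^a=P_0^* A_x^a P_1$ with $A_x^a:=\Phi^x(e_a)$ forming a POVM in $M_{2d}$ (since $\Phi^x$ is UCP on the commutative algebra $\ell_\infty^\carda$) and the isometries $P_0=\ket{0}\otimes \Id_d$, $P_1=\ket{1}\otimes \Id_d:\C^d\to\C^{2d}$ \emph{independent of $x$}. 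The key feature of Paulsen's trick for our purposes is exactly that $P_0,P_1$ do not depend on which CB map we dilate. Do the same on the other side to obtain POVMs $\{B_y^b\}_b$ in $M_{2d}$ and universal isometries $Q_0,Q_1:\C^d\to \C^{2d}$.

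With these dilations the tensor $M:=\sum_{xy,ab} G_{xy}^{ab}\, T_x^a\otimes S_y^b$ factors as
\[
M=(P_0\otimes Q_0)^*\,\tilde M\,(P_1\otimes Q_1),\qquad \tilde M:=\sum_{xy,ab} G_{xy}^{ab}\, A_x^a\otimes B_y^b,
\]
so $\|M\|_{M_{d^2}}\leq \|\tilde M\|_{M_{(2d)^2}}$ because $P_i\otimes Q_j$ are isometries. The crucial final step is a positivity argument that is specific to \emph{games}: since $G_{xy}^{ab}=\pi(x,y)V(a,b,x,y)\geq 0$ and each $A_x^a\otimes B_y^b$ is positive semidefinite, $\tilde M\geq 0$; hence $\|\tilde M\|$ equals its top eigenvalue, achieved by a unit eigenvector $\ket{\phi}\in\C^{2d}\otimes \C^{2d}$, and this value $\bra{\phi}\tilde M\ket{\phi}$ is exactly the bias of the genuine entangled strategy $(\ket{\phi},\{A_x^a\},\{B_y^b\})$, so bounded by $\omega^*(G)$. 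Taking suprema over $d$ closes the argument.

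The main obstacle is really the uniformity of the dilation: without $x$-independent $P_0,P_1$ one cannot pull out a single isometric sandwich, and the reduction to an honest strategy would fail. Paulsen's off-diagonal construction is exactly what is needed. The positivity step at the end is where the hypothesis that $G$ is a game (as opposed to a signed Bell functional) is used in an essential way, explaining the footnote's remark that the identification breaks for general Bell functionals.
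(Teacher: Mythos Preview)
Your easy direction is identical to the paper's, and your overall strategy for the hard direction---Paulsen's off-diagonal dilation followed by a positivity argument exploiting $G_{xy}^{ab}\geq 0$---is exactly the paper's approach. But there is a genuine gap in the execution.

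The map $\Phi^x:\ell_\infty^\carda\to M_2(M_d)$ produced by Paulsen's trick is completely positive, not unital. Concretely (this is the paper's Theorem~\ref{positivity theorem}),
\[
\Phi^x(1)\,=\,\begin{pmatrix}\Id & \phi^x(1)\\ \phi^x(1)^* & \Id\end{pmatrix},
\]
which equals $\Id_{2d}$ only when $\phi^x(1)=\sum_a T_x^a=0$. So $\sum_a A_x^a=\Phi^x(1)\neq \Id$ in general, and $\{A_x^a\}_a$ is \emph{not} a POVM (indeed $\|\Phi^x(1)\|$ can be as large as $2$). Your final inequality $\bra{\phi}\tilde M\ket{\phi}\leq \omega^*(G)$ therefore has no justification: the triple $(\ket{\phi},\{A_x^a\},\{B_y^b\})$ is not an admissible quantum strategy. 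The $x$-independence of $P_0,P_1$ that you rightly emphasize comes precisely at the cost of losing unitality of the dilation; you cannot have both.

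The paper closes this gap with one extra idea. It forms the same positive operator $\Gamma=\tilde M$, but instead of bounding $\|\Gamma\|$ it observes that the quantity to control is $|\langle\tilde u|\Gamma|\tilde v\rangle|$ for specific vectors $\tilde u=(u,0,0,0)$, $\tilde v=(0,0,0,v)$ supported in the $(1,1)$- and $(2,2)$-blocks of $M_2\otimes M_2$. Cauchy--Schwarz on the positive $\Gamma$ then gives
\[
|\langle\tilde u|\Gamma|\tilde v\rangle|\leq \langle\tilde u|\Gamma|\tilde u\rangle^{1/2}\langle\tilde v|\Gamma|\tilde v\rangle^{1/2},
\]
and each factor involves only a \emph{diagonal} block, i.e.\ the maps $\psi_x^i,\varphi_y^j$ which \emph{are} unital (the ``moreover'' clause of Theorem~\ref{positivity theorem}). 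Those do define genuine POVMs, so each factor is bounded by $\omega^*(G)$. This Cauchy--Schwarz step is the missing ingredient in your argument.
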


\begin{proof}
Given a family of POVM $\{E_x^a\}_{a\in \seta}$ in $M_d$, for every $x\in\setx$ we have 
$$\Big\|\sum_{x,a}E_x^a\otimes (e_x\otimes e_a)\Big\|_{M_d(\ell_\infty^{\cardx}(\ell_1^{\carda}))}\,=\,1.$$
 According to~\eqref{eq:os-l1} this follows from the fact that for every $x$,
$\|\sum_{a}E_x^a\otimes e_a\|_{M_d(\ell_1^{\carda})}
=1$. Indeed, since the map $T_x:\ell_\infty^{\carda}\rightarrow M_d$ defined by $T_x(e_a)=E_x^a$ for every $a$ is completely positive and unital, by Lemma~\ref{lem:cp-unital} $\|T_x\|_{cb}=1$. Proceeding similarly with Bob's POVM, we deduce from~(\ref{bilinear general games quantum}) that 
$$\omega^*(G)\leq \|\hat{G}\|_{\ell_1^{\cardx}(\ell_\infty^{\carda})\otimes_{min}  \ell_1^{\cardy}(\ell_\infty^{\cardb})}.$$
It remains to show the converse inequality. According to~\eqref{bilinear general games quantum}, given $\eps>0$ there exists an integer $d$, $T_x^a,S_y^b\in M_d$ satisfying $\|\sum_{x,a}T_x^a\otimes (e_x\otimes e_a)\|_{M_d(\ell_\infty^{\cardx}(\ell_1^{\carda}))}\leq 1$, $\|\sum_{y,b}S_y^b\otimes (e_y\otimes e_b)\|_{M_d(\ell_\infty^{\cardy}(\ell_1^{\cardb}))}\leq 1$, and unit vectors $|u\rangle$, $|v\rangle$ in $\C^{d^2}$ such that 
$$\bra{u}\sum_{x,y;a,b}G_{x,y}^{a,b}T_x^a\otimes S_y^b\ket{v}\,>\, \|\hat{G}\|_{\ell_1^{\cardx}(\ell_\infty^{\carda})\otimes_{min} \ell_1^{\cardy}(\ell_\infty^{\cardb})}-\eps.$$
By definition of the o.s.s.\ on $\ell_1$ via duality (\ref{dual o.s.s.}), the condition 
$$\Big\|\sum_{x,a}T_x^a\otimes (e_x\otimes e_a)\Big\|_{M_d(\ell_\infty^{\cardx}(\ell_1^{\carda}))}=\sup_x\Big\|\sum_a T_x^a\otimes e_a\Big\|_{M_d(\ell_1^{\carda})}\leq 1$$ is equivalent to  
\begin{align}\label{cond dual Lemma}
\|T_x:\ell_\infty^{\carda}\rightarrow M_d\|_{cb}\leq 1 \text{     }\text{     } \text{for every} \text{     } x,
\end{align}where $T_x(e_a)=T_x^a$ for every $a$. The same bound applies to the operators $S_y^b$.

The main obstacle to conclude the proof is that the  $T_x^a$, $S_y^b$ are not necessarily positive, or even Hermitian. In order to recover a proper quantum strategy we appeal to the following. 
\begin{theorem}\label{positivity theorem}
Let $ A$ be a C$^*$-algebra with unit and  let $T: A\rightarrow \Bounded(\h)$ be completely bounded. Then there exist completely positive maps
 $\psi_i: A\rightarrow \Bounded(\h)$, with $\|\psi_i\|_{cb}=\|T\|_{cb}$ for $i=1,2$, such that the map $\Psi:A\rightarrow M_2(\Bounded(\h))$ given by
 $$\Psi(a)=\left( \begin{array}{cc}
\psi_1(a) & T(a)\\
 T^*(a) & \psi_2(a)
\end{array}
\right), \text{    } \text{    }  a\in A$$
is completely positive. Moreover, if $\|T\|_{cb}\leq 1$, then we may take $\psi_1$ and $\psi_2$ unital.
\end{theorem}

Theorem \ref{positivity theorem} is a direct consequence of \cite[Theorem 8.3]{PaulsenBook}, where the same statement is proved with the map $\Psi$ replaced by the map $\eta:M_2( A)\rightarrow M_2(\Bounded(\h))$ given by
 $$\eta\left( \begin{pmatrix}
a & b\\
 c & d
 \end{pmatrix}
\right)=\begin{pmatrix}
\psi_1(a) & T(b)\\
 T^*(c) & \psi_2(d)
\end{pmatrix}.$$
The complete positivity of $\eta$ implies that the map $\Psi$ defined in Theorem~\ref{positivity theorem} is completely positive. In fact, it is an equivalence~\cite[Exercise 8.9]{PaulsenBook}. (While in the moreover case $\eta$ is unital, so $\|\eta\|_{cb}=1$, in general one can only obtain $\|\Psi\|_{cb}\leq 2$.)

In our setting we take $ A=\ell_\infty^\carda$. Since this is a commutative C$^*$-algebra, a map $T:\ell_\infty^\carda\rightarrow B(\h)$ is completely positive if and only if it is positive; that is, $T(a)\in B(\h)$ is a positive element for every positive element $a\in \ell_\infty^\carda$. For every $x\in\setx$, applying  Theorem \ref{positivity theorem} to the map $T_x:\ell_\infty^{\carda}\rightarrow M_d$ defined in (\ref{cond dual Lemma}), we find completely positive and unital maps $\psi_x^i: \ell_\infty^{\carda}\rightarrow M_d$, $i=1,2$ such that the map $\Psi_x:\ell_\infty^{\carda}\rightarrow M_2(M_d)$ defined by
$$\Psi_x(a)=\left( \begin{array}{cc}
\psi_x^1(a) & T_x(a)\\
 T_x^*(a) & \psi_x^2(a) 	
\end{array}
\right), \text{    } \text{    }  a\in \ell_\infty^{\carda}$$
 is completely positive. Similarly, for every $y\in\sety$ we define $S_y:\ell_\infty^{\cardb}\rightarrow M_d$ and find completely positive and unital maps $\varphi_y^i:\ell_\infty^{\cardb}\rightarrow M_d$, $i=1,2$ and $\Phi_y:\ell_\infty^{\cardb}\rightarrow M_2(M_d)$. Since these maps are positive, the element 
$$\Gamma=\sum_{x,y;a,b}G_{x,y}^{a,b}\Psi_x(e_a)\otimes \Phi_y(e_b)\in M_2(M_d)\otimes M_2(M_d)$$
 is positive. Consider the unit vectors $\tilde{u}=(u, 0, 0, 0)\in \C^{4d^2}$ and $\tilde{v}=(0, 0, 0, v)\in \C^{4d^2}$; we have
\begin{align*}
&\Big|\langle u|\sum_{x,y;a,b}G_{x,y}^{a,b}T_x^a\otimes S_y^b|v\rangle \Big|=|\langle \tilde{u}|\Gamma |\tilde{v}\rangle|\leq |\langle \tilde{u}|\Gamma | \tilde{v}\rangle|^{\frac{1}{2}}|\langle \tilde{u}|\Gamma|\tilde{v}\rangle|^{\frac{1}{2}}\\
&=\Big|\bra{u}\sum_{x,y;a,b}G_{x,y}^{a,b}\psi_x^1(e_a) \otimes \varphi_y^1(e_b) \ket{u}\Big |^{\frac{1}{2}}\Big|\bra{v}\sum_{x,y;a,b}G_{x,y}^{a,b}\psi_x^2(e_a) \otimes \varphi_y^2(e_b) \ket{v}\Big|^{\frac{1}{2}}\leq \omega_d^*(G),
\end{align*}
where the first inequality follows from the Cauchy-Schwarz inequality and the second inequality follows from the fact that the corresponding maps are completely positive and unital.
\end{proof}

\subsection{Bell functionals with signed coefficients}
\label{sec:signed-coefficients}

It is sometimes interesting to consider arbitrary Bell functionals $M=\{M_{x,y}^{a,b}\}_{x,y;a,b}$, that may not directly correspond to games because of the presence of signed coefficients. This additional freedom leads to phenomena with no equivalent in games, and we will see some examples later in this survey (see Section \ref{sec:general-me}). 

We note that the possibility for considering signed coefficients is the reason behind the introduction of an absolute value in the definitions~\eqref{eq:class-value-def} and~\eqref{eq:def-entangled-value} of $\omega$ and $\omega^*$ respectively. Indeed, while this absolute value  is superfluous in the case of games in general it is needed for the quantity $\omega^*(M)/\omega(M)$ to be meaningful: without it this quantity could be made to take any value in $[-\infty, \infty]$ simply by shifting the coefficients $M_{x,y}^{a,b} \rightarrow M_{x,y}^{a,b}+c$. The presence of the absolute value in the definition allows one to show that the ratio $\omega^*(M)/\omega(M)$ can always be obtained as the ratio between the quantum and classical \emph{biases}, defined as the maximum possible deviation respectively of the quantum and classical value from $1/2$, of an associated game~\cite[Section 2]{BuhrmanRSW12}.

Unfortunately the use of signed coefficients makes the geometry of the problem more complicated. In particular the correspondence between the classical and entangled values of a game and the injective and minimal norms of the associated tensor described in the previous section no longer holds. As a simple example, consider $\{M_{x,y}^{a,b}\}_{x,y;a,b=1}^2$ such that $M_{1,1}^{a,b}=1=-M_{2,2}^{a,b}$ for every $a,b\in\{1,2\}$ and $M_{x,y}^{a,b}=0$ otherwise. It is possible to think of $M$ as a game in which the players' ``payoff'' may be negative on certain answers; here the payoff would be $1$ on questions $(1,1)$, $-1$ on questions $(2,2)$ and $0$ otherwise, irrespective of the players' answers. Then clearly $\omega(M)=\omega^*(M)=0$, but since $M\neq 0$ it must be that
$$ \Big\|\sum_{x,y,a,b} M_{x,y}^{a,b}(e_x\otimes e_a) \otimes (e_y\otimes e_b)\Big\|_{\ell_1^{\cardx}(\ell_\infty^{\carda})\otimes_{\epsilon} \ell_1^{\cardy}(\ell_\infty^{\cardb})}\neq 0.$$
By considering a small perturbation of $M$ it is possible to construct functionals $\tilde{M}$ such that $\omega(\tilde{M})\neq 0$ but the quotient $\|\tilde{M}\|_{\ell_1^{\cardx}(\ell_\infty^{\carda})\otimes_{\epsilon} \ell_1^{\cardy}(\ell_\infty^{\cardb})}/\omega(\tilde{M})$ is arbitrarily large; the same effect can be obtained for $\omega^*(M)$ with the minimal norm.

There are two different ways to circumvent this problem. The first, considered in \cite{JungePPVW10CMP, JungePPVW10}, consists in a slight modification of the definition of the classical and entangled values $\omega(M)$ and $\omega^*(M)$. Intuitively, since payoffs may be negative it is natural to allow the players to avoid ``losing'' by giving them the possibility to refuse to provide an answer, or alternatively, to provide a ``dummy'' answer on which the payoff is always null. This leads to a notion of ``incomplete'' strategies which, aside from being mathematically convenient, is also natural to consider in the setting of Bell inequalities, for instance as a way to measure detector inefficiencies in experiments.\footnote{We refer to~\cite[Section 5]{JungePPVW10CMP} for more on this. Incomplete distributions have also been used in~\cite{lancien2015parallel} to obtain a parallel repetition theorem for the no-signaling value of multiplayer games.}

Formally, a family of incomplete conditional distributions is specified by a  vector $P_A=(P_A(a|x))_{x,a}\in \mathbb R^{\cardx\carda}$ of non-negative reals such that $\sum_aP_A(a|x)\leq 1$ for every $x\in\setx$. Define $\omega_{inc}(M)$ as in~\eqref{eq:class-value-def} where the supremum is extended to the convex hull of products of incomplete conditional distributions for each player. An analogous extension can be considered for the entangled bias, leading to a value $\omega^*_{inc}(M)$ obtained by taking a supremum over all distributions that can be obtained from a bipartite quantum state $|\psi\rangle\in \Ball(\C^{d}\otimes \C^{d})$ and families of positive operators $\{E_x^a\}_{a\in \seta}$ and $\{F_y^b\}_{b\in \setb}$ in $M_{d}$ verifying $\sum_aE_x^a\leq \Id$ and $\sum_bF_y^b\leq \Id$. Note that in general it will always be the case that $\omega(M)\leq\omega_{inc}(M)$ and $\omega^*(M)\leq\omega_{inc}^*(M)$, and the example given at the beginning of the section can be used to show that the new quantities can be arbitrarily larger than the previous ones. The following lemma shows that considering incomplete distributions allows us to restore a connection with operator spaces, albeit only up to constant multiplicative factors. 
\begin{lemma}\label{Lemma incomplete values}
Let $M$ be a Bell functional and $\hat{M}=\sum_{x,y;a,b}M_{x,y}^{a,b}(e_x\otimes e_a) \otimes (e_y\otimes e_b)\in \R^{\cardx\carda}\otimes \R^{\cardy\cardb}$ the associated tensor. Then
\begin{align*}
&\omega_{inc}(M)\leq \|\hat{M}\|_{\ell_1^{\cardx}(\ell_\infty^{\carda}(\R))\otimes_{\epsilon} \ell_1^{\cardy}(\ell_\infty^{\cardb}(\R))}\leq 4\,\omega_{inc}(M), \text{    }\text{    }\text{  and  }\\
&\omega^*_{inc}(M)\leq \|\hat{M}\|_{\ell_1^{\cardx}(\ell_\infty^{\carda})\otimes_{min} \ell_1^{\cardy}(\ell_\infty^{\cardb})}\leq 4\,\omega^*_{inc}(M).
\end{align*}
Here, in the first inequality the norm of $\hat{M}$ is taken over real spaces, while in the second the spaces are complex.
\end{lemma}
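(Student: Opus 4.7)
The proof splits into the two inequalities, each with a lower and an upper bound.

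Both lower bounds are straightforward. For the first, any convex combination of products of incomplete marginals $P_A \cdot P_B$ is non-negative and satisfies $\sum_a P_A(a|x) \leq 1$, placing it in the unit ball of $\ell_\infty^\cardx(\ell_1^\carda(\R)) \otimes \ell_\infty^\cardy(\ell_1^\cardb(\R))$ which defines the injective norm on the predual. For the second, given an incomplete quantum strategy $(\ket\psi, \{E_x^a\}, \{F_y^b\})$, the map $T_x : e_a \mapsto E_x^a$ is completely positive (since $\ell_\infty^\carda$ is commutative) with $\|T_x\|_{cb} = \|T_x\| = \|T_x(\Id)\| = \|\sum_a E_x^a\| \leq 1$, so together with the admissible sub-normalized state $\ket\psi$ this witnesses $\omega^*_{inc}(M) \leq \|\hat M\|_{\min}$.

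For the upper bound of the first inequality, given real $a \in \Ball(\ell_\infty^\cardx(\ell_1^\carda(\R)))$ and $b \in \Ball(\ell_\infty^\cardy(\ell_1^\cardb(\R)))$ nearly optimizing $\|\hat M\|_\epsilon$, I split coordinatewise $a = a^+ - a^-$ and $b = b^+ - b^-$ into non-negative parts. Since $\sum_a a^\pm(x,a) \leq \sum_a |a(x,a)| \leq 1$, each part is itself an incomplete marginal, so expanding
\begin{equation*}
\sum_{x,y;a,b} M_{x,y}^{a,b} a(x,a) b(y,b) = \sum_{\sigma,\tau \in \{+,-\}} \sigma\tau \sum_{x,y;a,b} M_{x,y}^{a,b} a^\sigma(x,a) b^\tau(y,b)
\end{equation*}
yields four terms, each of absolute value at most $\omega_{inc}(M)$.

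For the upper bound of the second inequality I adapt the proof of Lemma \ref{quantum value 2P1R} to signed coefficients. Take $T_x, S_y$ with $\|T_x\|_{cb}, \|S_y\|_{cb} \leq 1$ and unit vectors $u, v \in \C^{d^2}$ nearly optimizing $\|\hat M\|_{\min}$, and set $\mathcal V := \bra u \sum_{x,y;a,b} M_{x,y}^{a,b} T_x^a \otimes S_y^b \ket v$; a phase rotation $v \mapsto e^{i\theta} v$, which preserves $\|v\|$ and $|\mathcal V|$, lets me assume $\mathcal V$ is a non-negative real number. Applying Theorem \ref{positivity theorem} to lift each $T_x$ (resp.\ $S_y$) to a CP map $\Psi_x : \ell_\infty^\carda \to M_{2d}$ (resp.\ $\Phi_y$) with CP unital diagonal blocks and off-diagonal $T_x$ (resp.\ $S_y$), and setting $\tilde u = (u,0,0,0), \tilde v = (0,0,0,v) \in \C^{4d^2}$, the matrix-element identity $\bra{\tilde u} \Psi_x(e_a) \otimes \Phi_y(e_b) \ket{\tilde v} = \bra u T_x^a \otimes S_y^b \ket v$ (verified as in the proof of Lemma \ref{quantum value 2P1R}) converts $\mathcal V$ into $\bra{\tilde u} \Gamma \ket{\tilde v}$ with $\Gamma = \sum M_{x,y}^{a,b} \Psi_x(e_a) \otimes \Phi_y(e_b)$, an operator which is Hermitian because $M$ is real-valued and each $\Psi_x(e_a) \otimes \Phi_y(e_b)$ is positive. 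Since $\mathcal V$ is real, the polarization identity for Hermitian $\Gamma$ reduces to two real-symmetric terms:
\begin{equation*}
\mathcal V = \tfrac{1}{4} \bigl( \bra{\tilde u + \tilde v} \Gamma \ket{\tilde u + \tilde v} - \bra{\tilde u - \tilde v} \Gamma \ket{\tilde u - \tilde v} \bigr).
\end{equation*}
By orthogonality of $\tilde u$ and $\tilde v$, $\|\tilde u \pm \tilde v\|^2 = \|u\|^2 + \|v\|^2 \leq 2$. The crucial step is to recognize each quadratic form as an incomplete quantum expectation: the Wittstock-type bound $\|\Psi_x(\Id)\| = 1 + \|T_x(\Id)\| \leq 2$ implies that $\tilde E_x^a := \Psi_x(e_a)/2$ and $\tilde F_y^b := \Phi_y(e_b)/2$ form incomplete POVMs in $M_{2d}$, so $|\bra w \Gamma \ket w| \leq 4 \|w\|^2 \omega^*_{inc}(M) \leq 8 \omega^*_{inc}(M)$, yielding $|\mathcal V| \leq \tfrac{1}{4}(8+8) = 4\omega^*_{inc}(M)$.

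The principal obstacle is the loss of positivity of $\Gamma$ under signed coefficients, which disables the clean Cauchy--Schwarz reduction of Lemma \ref{quantum value 2P1R}. The remedy combines three ingredients---polarization for Hermitian $\Gamma$, the phase rotation of $v$ to enforce $\mathcal V$ real (thereby halving the polarization to two terms), and orthogonality of $\tilde u, \tilde v$ to bound polarization vectors in norm by $\sqrt 2$---whose constants conspire to produce exactly the stated factor $4$.
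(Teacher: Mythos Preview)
Your proof is correct and follows essentially the approach the paper sketches. The paper does not give a full argument but indicates that the first estimate is ``not hard'' (your positive/negative decomposition into four incomplete marginals is exactly the intended implementation) and that the constant $4$ in the second estimate comes from Theorem~\ref{positivity theorem} together with $\|\Psi\|_{cb}\leq 2$; your phase-rotation-plus-polarization argument is an explicit realization of this hint, equivalent to observing that $\Gamma$ is Hermitian and hence its operator norm (which dominates the block $\sum M_{x,y}^{a,b}T_x^a\otimes S_y^b$) equals its numerical radius, which is controlled by $4\,\omega^*_{inc}(M)$ via the sub-POVMs $\Psi_x(e_a)/2$.
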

The first estimate in Lemma~\ref{Lemma incomplete values} is not hard to obtain (see \cite[Proposition 4]{JungePPVW10CMP}), and readily extends to complex spaces with a factor of $16$ instead of $4$. The second estimate is proved in \cite[Theorem 6]{JungePPVW10CMP} with a constant 16. The constant $4$ stated in the theorem can be obtained by using the map $\Psi$ introduced in Theorem \ref{positivity theorem} and the fact that $\|\Psi\|_{cb}\leq 2$. 

Lemma~\ref{Lemma incomplete values} provides us with a method to translate constructions in operator space theory to Bell functionals for which the ratio of the entangled and classical values is large. Indeed, the correspondence established in the lemma implies that if $\hat{M}$ is a tensor such that $\|\hat{M}\|_{min}/\|\hat{M}\|_\eps$ is large the associated functional $M$ will be such that $\omega^*_{inc}(M)/\omega_{inc}(M)$ is correspondingly large (up to the loss of a factor $4$). Increasing the number of possible outputs by $1$, we may then add a ``dummy'' output for which the payoff is always zero. This results in a functional $\tilde{M}$ for which $\omega(\tilde{M})=\omega_{inc}(M)$ and $\omega^*(\tilde{M})=\omega_{inc}^*(M)$, so that any large violation for the incomplete values of $M$ translates to a large violation for the values of $\tilde{M}$.

Unfortunately the lemma is not sufficient to obtain bounds in the other direction, upper bounds on the ratio $\omega^*(M)/\omega(M)$: as shown by the example described earlier, the values $\|M\|_{\epsilon}$ and $\omega(M)$, and $\|M\|_{min}$ and $\omega^*(M)$, are in general incomparable. 
In order to obtain such upper bounds a more sophisticated approach was introduced in~\cite{JungeP11low}. Consider the space
\begin{align*}
\mathcal{N}_{\K}(\seta|\setx)=\big\{\{R(a|x)\}_{x,a=1}^{\cardx,\carda}\in \K^{\cardx\carda}: \sum_{a=1}^\carda R(a|x)=\sum_{a=1}^\carda R(a|x')\,\forall x,x'\in\setx\big\},
\end{align*}
where $\K=\R$ or $\K=\C$. This space is introduced to play the role of $\ell_\infty^\cardx(\ell_1^\carda)$ above, while allowing a finer description of classical strategies. In particular, note that $\dim_{\R}(\mathcal{N}_{\R}(\seta|\setx))=\cardx\carda-\cardx+1$, while $\dim_\R(\ell_\infty^\cardx(\ell_1^\carda))=\cardx\carda$. The space $\mathcal{N}_{\R}(\seta|\setx)$ (resp. $\mathcal{N}_{\C}(\seta|\setx)$) can be endowed with a norm (resp. o.s.s.) such that the following holds. 
\begin{lemma}[\cite{JungeP11low}]\label{NSG classical-quantum}
Let $M$ be a Bell functional and $\hat{M}=\sum_{x,y;a,b}M_{x,y}^{a,b}(e_x\otimes e_a) \otimes (e_y\otimes e_b)$ the associated tensor, viewed as an element of $\mathcal{N}_\K(\seta|\setx)^* \otimes \mathcal{N}_\K(\setb|\sety)^*$. Then 
\begin{align*}
\omega(M)= \|\hat{M}\|_{\mathcal{N}_\R(\seta|\setx)^*\otimes_{\epsilon} \mathcal{N}_\R(\setb|\sety)^*} \qquad\text{and}\qquad\omega^*(M)= \|\hat{M}\|_{\mathcal{N}_\C(\seta|\setx)^*\otimes_{\min} \mathcal{N}_\C(\setb|\sety)^*}.
\end{align*}
\end{lemma}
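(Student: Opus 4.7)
The classical statement is essentially a definitional check once the norm on $\mathcal{N}_\R(\seta|\setx)$ is identified, while the quantum statement parallels the proof of Lemma~\ref{quantum value 2P1R}, the point being that the auxiliary space $\mathcal{N}_\K$ is tailored so as to trivialize the sign issues that spoiled the direct $\ell_\infty^\cardx(\ell_1^\carda)$ picture. I would prove it in four steps.

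\textbf{Step 1: identifying the norm on $\mathcal{N}_\R(\seta|\setx)$.} The norm on $\mathcal{N}_\R(\seta|\setx)$ is chosen so that its closed unit ball coincides with the absolute convex hull of $\prob(\seta|\setx)$:
$$\|R\|_{\mathcal{N}_\R(\seta|\setx)}\,=\,\inf\Big\{\sum_i |\lambda_i|\,:\,R=\sum_i \lambda_i P_i,\ \lambda_i\in\R,\ P_i\in\prob(\seta|\setx)\Big\}.$$
The constraint $\sum_a R(a|x)$ independent of $x$ is exactly what is required for such a decomposition to exist (add a large multiple of the uniform distribution to make $R$ pointwise nonnegative, then renormalize). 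Conversely any element violating this constraint lies outside the affine span of $\prob(\seta|\setx)$ and thus would be assigned an infinite norm, so $\mathcal{N}_\R(\seta|\setx)$ is the maximal ambient space on which this prescription is a genuine norm.

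\textbf{Step 2: classical value as injective norm.} Unfolding the definition of the injective tensor norm on $\mathcal{N}_\R(\seta|\setx)^*\otimes \mathcal{N}_\R(\setb|\sety)^*$ gives
$$\|\hat{M}\|_{\eps}\,=\,\sup_{R\in \Ball(\mathcal{N}_\R(\seta|\setx)),\,S\in \Ball(\mathcal{N}_\R(\setb|\sety))} \Big|\sum_{x,y;a,b} M_{x,y}^{a,b}\,R(a|x)S(b|y)\Big|.$$
The expression inside the absolute value is bilinear in $(R,S)$, so the supremum is attained at extreme points of the two unit balls. By Step~1 those extreme points are signed deterministic strategies; products of such strategies exhaust the extreme points of $\class(\seta\setb|\setx\sety)$ (up to sign), so the supremum equals $\omega(M)$.

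\textbf{Step 3: operator space structure on $\mathcal{N}_\C(\seta|\setx)$.} I would put on $\mathcal{N}_\C(\seta|\setx)$ the o.s.s.\ whose matrix unit ball is
$$\Ball(M_d(\mathcal{N}_\C(\seta|\setx)))\,=\,\Big\{R:\,R=\sum_i A_i\otimes Q_i B_i,\ Q_i\in\prob(\seta|\setx),\ \sum_i A_iA_i^*,\sum_i B_i^*B_i\leq \Id\Big\},$$
in the usual matrix-convex sense, so that a POVM $\{E_x^a\}$ in $M_d$ gives an element of norm $\leq 1$ via $R(a|x)=E_x^a$ (this is exactly Lemma~\ref{lem:cp-unital} expressed in the dual picture). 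One then checks Ruan's axioms and verifies that this o.s.s.\ reduces to the norm of Step~1 at level $d=1$. Equivalently, this structure may be characterized by duality: the induced norm on $\mathcal{N}_\C(\seta|\setx)^*$ is the one for which linear maps $T:\mathcal{N}_\C(\seta|\setx)^*\to M_d$ of cb norm $\leq 1$ are precisely those arising from matrix POVMs, after the Wittstock--Paulsen dilation step below.

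\textbf{Step 4: entangled value as min norm.} The inequality $\omega^*(M)\leq \|\hat{M}\|_{\min}$ is immediate: any quantum strategy $(\ket\psi,\{E_x^a\},\{F_y^b\})$ induces, by Step~3, cb maps of norm $\leq 1$ out of $\mathcal{N}_\C(\seta|\setx)^*$ and $\mathcal{N}_\C(\setb|\sety)^*$ into $M_d$, and for Hermitian $M$ the operator norm in $M_{d^2}$ of $\sum M_{x,y}^{a,b}E_x^a\otimes F_y^b$ equals its numerical radius, hence is witnessed by some state $\ket\psi$. For the converse I would follow verbatim the argument of Lemma~\ref{quantum value 2P1R}: given cb maps $T_x:\mathcal{N}_\C(\seta|\setx)^*\to M_d$ and $S_y:\mathcal{N}_\C(\setb|\sety)^*\to M_d$ of cb norm $\leq 1$ realising the min norm, apply the Paulsen dilation (Theorem~\ref{positivity theorem}) to write each as a corner of a block $2\times 2$ unital completely positive map built out of $\psi_x^{1,2}$ and $\varphi_y^{1,2}$, then use Cauchy--Schwarz on the bipartite vectors $(u,0,0,0)$ and $(0,0,0,v)$ to bound the min norm by the geometric mean of two honest quantum values, each of which is $\leq \omega^*(M)$.

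\textbf{Main obstacle.} Steps~1, 2 and 4 are essentially bookkeeping once one has the right norm and o.s.s.\ in place; the real content of the lemma is Step~3. The subtlety is that the o.s.s.\ on $\mathcal{N}_\C(\seta|\setx)$ must simultaneously (a) agree with the Banach space norm at level $d=1$, (b) make honest POVMs the ``natural'' norm-one elements so that the easy direction of Step~4 is lossless, and (c) be small enough that the Paulsen dilation in the hard direction does not cost an extra factor. Condition~(c) is what the no-signalling-type constraint $\sum_a R(a|x)$ independent of $x$ provides for free: it forces the dilated $\psi^{1,2}$ to be unital (not merely completely positive), which is exactly what is needed to avoid the constant $4$ appearing in Lemma~\ref{Lemma incomplete values}. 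Verifying this ``for free'' compatibility, and in particular checking Ruan's axioms for the proposed o.s.s., is the technical crux of the argument.
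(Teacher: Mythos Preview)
The paper does not prove this lemma: it is stated with a citation to~\cite{JungeP11low}, and the surrounding text explicitly says only that $\mathcal{N}_\R(\seta|\setx)$ and $\mathcal{N}_\C(\seta|\setx)$ ``can be endowed with a norm (resp.\ o.s.s.) such that the following holds'', deferring all details to~\cite[Section 5]{JungeP11low}. So there is no in-paper argument to compare your proposal against.

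That said, your plan is broadly the right shape. Steps~1 and~2 are correct: taking the unit ball of $\mathcal{N}_\R(\seta|\setx)$ to be the absolutely convex hull of $\prob(\seta|\setx)$ does make $\omega(M)$ equal to the injective norm on the dual tensor product, by the extreme-point argument you indicate. Step~4 is also on target: once the o.s.s.\ is in place, the hard direction of the entangled identity goes through the Paulsen dilation of Theorem~\ref{positivity theorem} exactly as in Lemma~\ref{quantum value 2P1R}, and your remark that the no-signalling constraint is what forces the dilated maps to be unital (hence avoids the factor~$4$ of Lemma~\ref{Lemma incomplete values}) is the correct intuition.

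The genuine gap is in Step~3. Your first description of the matrix unit ball --- matrix-convex combinations $\sum_i A_i\, Q_i(a|x)\, B_i$ with \emph{scalar} $Q_i\in\prob(\seta|\setx)$ --- is essentially the $\max$ operator space structure associated to the Banach norm of Step~1. With that structure the dual carries the $\min$ o.s.s., the $\min$ tensor norm on $\mathcal{N}_\C(\seta|\setx)^*\otimes\mathcal{N}_\C(\setb|\sety)^*$ collapses to the injective norm, and you would conclude $\omega^*(M)=\omega(M)$, which is false. Concretely, a genuine POVM $\{E_x^a\}\subset M_d$ does not in general decompose as $\sum_i A_i\,Q_i(a|x)\,B_i$ with scalar $Q_i$: the matrix level is not the matrix-convex hull of the scalar level. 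Your second, dual description (``cb maps into $M_d$ are exactly those arising from POVMs'') is what one actually wants, but the ``equivalently'' linking it to the first description is unjustified and, as above, false. The construction in~\cite{JungeP11low} instead realizes $\mathcal{N}_\C(\seta|\setx)^*$ concretely as (an operator space completely isomorphic to) the span of words of length at most one in the free product $\ell_\infty^\carda*\cdots*\ell_\infty^\carda$; with that description the identification of norm-one cb maps with POVMs is a direct computation, not a matrix-convexity statement.
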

Lemma~\ref{NSG classical-quantum} shows that it is possible to describe a Banach space and an o.s.s.\ on it that precisely capture the classical and entangled values of arbitrary Bell functionals. Unfortunately there is a price to pay, which is that the relatively well-behaved space $\ell_1^{\cardx}(\ell_\infty^{\carda})$ is replaced by a more complex object, $\mathcal{N}_\K(\seta|\setx)^*$. We refer to~\cite[Section 5]{JungeP11low} for more details on the structure of these spaces and their use in placing upper bounds on the ratio $\omega^*(M)/\omega(M)$. 
\subsection{Bounds on the largest violations achievable in two-player games}\label{sec:games-bounds}

In the previous two sections we related the classical and entangled values of two-player games and Bell functionals to the bounded and completely bounded norms of the associated bilinear form respectively. This correspondence allows the application of tools developed for the study of these norms in operator space theory to quantify the ratio $\omega^*/\omega$, a quantity that can be interpreted as a measure of the nonlocality of quantum mechanics. For the case of classical two-player XOR games it was shown in Section~\ref{sec:xor-games} that this ratio is always bounded by a constant independent of the size of the game. In the same section we considered extensions of XOR games where there are more than two players, or two players but the messages can be quantum; in either case the ratio could become arbitrarily large as the size of the game was allowed to increase. The situation for general, non-XOR two-player games is similar. In Section~\ref{sec:violations-ub} we discuss upper bounds on the ratio $\omega^*/\omega$ that depend on the number of questions or answers in the game. In Section~\ref{sec:violations-lb} we describe examples of games that come close to saturating these bounds, leading to violations that scale (almost) linearly with the number of answers and as the square root of the number of questions. 

\subsubsection{Upper bounds}\label{sec:violations-ub}
The proof of the following proposition is new. It follows the same ideas as in~\cite{JungeP11low}, where similar estimates were given for general Bell functionals but with worse constant factors. The proof is a good example of the application of estimates from the theory of operator spaces to bounds on the entangled and classical values of a multiplayer game. 
\begin{proposition}\label{Prop upper bounds}
The following inequalities hold for any two-player game $G$:
\begin{enumerate}
\item [1.] $\omega^*(G)\leq \min\{\cardx,\cardy\}\,\omega(G)$,
\item [2.] $\omega^*(G)\leq K_G^\C\sqrt{\carda\cardb}\,\omega(G)$, where $K_G^\C$ is the complex Grothendieck constant.
\end{enumerate}
\end{proposition}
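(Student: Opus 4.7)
Plan: Both bounds follow from the identifications $\omega(G)=\|\hat G\|_{\ell_1^\cardx(\ell_\infty^\carda)\otimes_\epsilon\ell_1^\cardy(\ell_\infty^\cardb)}$ and $\omega^*(G)=\|\hat G\|_{\ell_1^\cardx(\ell_\infty^\carda)\otimes_{\min}\ell_1^\cardy(\ell_\infty^\cardb)}$ of Section~\ref{sec:positive-coefficients}, but I would actually argue the two inequalities by reducing entangled strategies to classical ones, without invoking an explicit operator-space norm comparison.

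For part~1, I would work directly with a near-optimal entangled strategy $(|\psi\rangle,\{E_x^a\},\{F_y^b\})$. Writing $\rho=|\psi\rangle\langle\psi|$, $p(a|x)=\langle\psi|E_x^a\otimes\Id|\psi\rangle$, $M_{x,a}=\Tr_A[(E_x^a\otimes\Id)\rho]$ and $R_{x,a}=\sum_{y,b}G_{x,y}^{a,b}F_y^b$, the value of the strategy is $\sum_{x,a}\Tr[M_{x,a}R_{x,a}]$. Since $M_{x,a}\geq 0$ with $\Tr[M_{x,a}]=p(a|x)$, and $R_{x,a}\geq 0$ (by non-negativity of the coefficients of $G$ and positivity of the $F_y^b$), H\"older's inequality bounds this by $\sum_{x,a}p(a|x)\|R_{x,a}\|$. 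If $|\phi_{x,a}\rangle$ is a unit top eigenvector of $R_{x,a}$, then $\|R_{x,a}\|=\sum_{y,b}G_{x,y}^{a,b}q_{x,a}(b|y)$ with $q_{x,a}(b|y)=\langle\phi_{x,a}|F_y^b|\phi_{x,a}\rangle$ a legitimate distribution. Non-negativity of $G$ lets one insert the deterministic classical Alice strategy concentrated on $(a|x)$ paired with $q_{x,a}$ into the definition of $\omega(G)$, yielding $\|R_{x,a}\|\leq\omega(G)$. Summing over $(x,a)$ and using $\sum_{x,a}p(a|x)=\cardx$ gives $\omega^*(G)\leq \cardx\,\omega(G)$; symmetry between the players gives the same bound with $\cardy$ in place of $\cardx$.

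For part~2, I would apply Grothendieck's inequality (Theorem~\ref{thm:grothendieck}). Set $u_{x,a}=(E_x^a\otimes\Id)|\psi\rangle$ and $v_{y,b}=(\Id\otimes F_y^b)|\psi\rangle$, so that the value of the strategy equals $\sum_{x,y,a,b}G_{x,y}^{a,b}\langle u_{x,a},v_{y,b}\rangle$. From $0\leq(E_x^a)^2\leq E_x^a$ one obtains $\sum_a\|u_{x,a}\|^2\leq 1$, whence Cauchy--Schwarz gives $\sum_a\|u_{x,a}\|\leq\sqrt{\carda}$ for every $x$; analogously $\sum_b\|v_{y,b}\|\leq\sqrt{\cardb}$. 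Writing $u_{x,a}=\|u_{x,a}\|\hat u_{x,a}$ and $v_{y,b}=\|v_{y,b}\|\hat v_{y,b}$ and applying Theorem~\ref{thm:grothendieck} to the scalar matrix $M_{(x,a),(y,b)}=G_{x,y}^{a,b}\|u_{x,a}\|\|v_{y,b}\|$ replaces the Hilbertian inner products by scalars $\alpha_{x,a},\beta_{y,b}$ of modulus at most $1$ at multiplicative cost $K_G^\C$. Setting $P(a|x)=\|u_{x,a}\|\alpha_{x,a}/\sqrt{\carda}$ and $Q(b|y)=\|v_{y,b}\|\beta_{y,b}/\sqrt{\cardb}$ produces elements of $\Ball(\ell_\infty^\cardx(\ell_1^\carda))$ and $\Ball(\ell_\infty^\cardy(\ell_1^\cardb))$ respectively, and the resulting expression is bounded by $\sqrt{\carda\cardb}\,\omega(G)$ via the identification~\eqref{eq:game-eps-norm}.

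The most delicate point is ensuring, in both parts, that the classical objects produced actually lie in the relevant ``non-trivial'' unit ball $\Ball(\ell_\infty^\cardx(\ell_1^\carda))$ rather than in the simpler $\ell_\infty^{\cardx\carda}$ or $\ell_1^{\cardx\carda}$ balls; the factor $\cardx$ in part~1 and the factor $\sqrt{\carda\cardb}$ in part~2 precisely compensate for this mismatch. Non-negativity of the coefficients of $G$ is crucial throughout --- in part~1 to equate $\|R_{x,a}\|$ with an expression of the form appearing in $\omega(G)$, and in part~2 to drop the absolute value from the $\epsilon$-norm and identify the Grothendieck right-hand side with $\omega(G)$ itself.
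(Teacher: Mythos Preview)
Your proof is correct, and both parts take genuinely different routes from the paper's argument.

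For part~1, the paper factors the identity map as
$\ell_1^{\cardx}(\ell_\infty^{\carda})\otimes_{\epsilon} \ell_1^{\cardy}(\ell_\infty^{\cardb})\to \ell_\infty^{\cardx\carda}\otimes_{\epsilon} \ell_1^{\cardy}(\ell_\infty^{\cardb})\to \ell_\infty^{\cardx\carda}\otimes_{\min} \ell_1^{\cardy}(\ell_\infty^{\cardb})\to \ell_1^{\cardx}(\ell_\infty^{\carda})\otimes_{\min} \ell_1^{\cardy}(\ell_\infty^{\cardb})$,
and then gives the corresponding self-contained argument: merge Alice's $\cardx$ POVMs into a single POVM $\{\frac{1}{\cardx}E_x^a\}_{x,a}$ (this is where the factor $\cardx$ enters), observe that a player with a single measurement can be assumed classical, and conclude. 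Your argument instead conditions on Alice's outcome, writes the value as $\sum_{x,a}\Tr[M_{x,a}R_{x,a}]$, and bounds each $\|R_{x,a}\|$ by $\omega(G)$ via a deterministic Alice strategy; the factor $\cardx$ then arises from $\sum_{x,a}p(a|x)=\cardx$.

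For part~2, the paper uses the discrete Fourier transform: it passes from $E_x^a$ to the contractions $A_{x,k}=\sum_a e^{-2\pi i ak/\carda}E_x^a$, applies Grothendieck to the Fourier-transformed matrix, and recovers the bound via Parseval and Cauchy--Schwarz on the inverse transform. Your argument bypasses the Fourier transform entirely: you apply Grothendieck directly to the matrix $G_{x,y}^{a,b}\|u_{x,a}\|\|v_{y,b}\|$, absorbing the norms into the coefficients and controlling $\sum_a\|u_{x,a}\|\leq\sqrt{\carda}$ via Cauchy--Schwarz and $(E_x^a)^2\leq E_x^a$. This is essentially the same mechanism by which Tsirelson relates the entangled bias of XOR games to the $\gamma_2^*$ norm, adapted to the $\ell_\infty(\ell_1)$ structure.

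Your approach is more elementary and is perhaps the more natural one from a quantum-information perspective. The paper's factorization viewpoint, on the other hand, makes transparent that the same bounds hold for the $\epsilon$ and $\min$ norms of arbitrary tensors in $\ell_1^{\cardx}(\ell_\infty^{\carda})\otimes\ell_1^{\cardy}(\ell_\infty^{\cardb})$ (not just tensors arising from games), and hence extends more readily to the incomplete values $\omega_{inc}$, $\omega^*_{inc}$ of general Bell functionals --- a point the paper exploits in the remarks following the proof. Your arguments, as you note, lean on non-negativity of $G$ at every step.
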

\begin{proof}
The proof of each item is based on a different way of bounding the norm of the identity map 
\begin{equation}\label{eq:upperbounds-0}
id\otimes id:\, \ell_1^{\cardx}(\ell_\infty^{\carda})\otimes_{\epsilon} \ell_1^{\cardy}(\ell_\infty^{\cardb})\rightarrow \ell_1^{\cardx}(\ell_\infty^{\carda})\otimes_{\min} \ell_1^{\cardy}(\ell_\infty^{\cardb}).
\end{equation}
Using~\eqref{eq:game-eps-norm} and~\eqref{eq:game-min-norm} any such bound immediately implies the same bound on the ratio $\omega^*/\omega$. 

For the first item, assume without loss of generality that $\cardx\leq \cardy$. The identity map~\eqref{eq:upperbounds-0} can be decomposed as a sequence 
\begin{equation}\label{eq:2p1rd-dim-1}
\ell_1^{\cardx}(\ell_\infty^{\carda})\otimes_{\epsilon} \ell_1^{\cardy}(\ell_\infty^{\cardb})\rightarrow \ell_\infty^{\cardx\carda}\otimes_{\epsilon} \ell_1^{\cardy}(\ell_\infty^{\cardb})\rightarrow \ell_\infty^{\cardx\carda}\otimes_{min} \ell_1^{\cardy}(\ell_\infty^{\cardb})\rightarrow \ell_1^{\cardx}(\ell_\infty^{\carda})\otimes_{min} \ell_1^{\cardy}(\ell_\infty^{\cardb}),
\end{equation}
where all arrows correspond to the identity. It follows directly from the definition of the $\epsilon$ and the $min$ norms that the first and the third arrow in~\eqref{eq:2p1rd-dim-1} have norm $1$ and $\cardx$ respectively. Using that $ \ell_\infty^{\cardx\carda}$ is a commutative C$^*$-algebra, the second arrow has norm $1$~\cite[Proposition 1.10 (ii)]{PisierBook}, so that the desired result is proved by composing the three norm estimates. Motivated by this decomposition, we give a self-contained proof relating the quantum and classical values. Start with the third arrow in~\eqref{eq:2p1rd-dim-1}. Given a family of POVM $\{E_x^a\}_{a}$, $x\in\setx$, for Alice, $\{\frac{1}{\cardx}E_x^a\}_{x,a}$ can be interpreted as a single POVM with $\cardx\carda$ outcomes. Thus,
\begin{align*}
\omega^*(G)\leq \cardx\sup \Big|\sum_{x,y;a,b}G_{x,y}^{a,b}\,\langle\psi|E^{x,a} \otimes F_y^a|\psi\rangle\Big|,
\end{align*}
where the supremum is taken over all families of POVMs $\{F_y^b\}_b$ for Bob, a single POVM $\{E^{x,a}\}_{x,a}$ for Alice with  $\cardx\carda$ outputs, and all bipartite states $|\psi\rangle$. Now that Alice is performing a single measurement, she may as well apply it before the game starts and her strategy can be assumed to be classical probabilistic. Thus
\begin{align}\label{eq:upperbounds-1}
\omega^*(G)\leq \cardx\sup \Big|\sum_{x,y;a,b}G_{x,y}^{a,b}\,P(x,a) Q(b|y)\Big|,
\end{align}
where the supremum is taken over all $P\in\class(\seta\setx)$ and $Q\in \class(\setb|\sety)$. This corresponds to the second arrow in~\eqref{eq:2p1rd-dim-1}. Finally, the fact that the first map in~\eqref{eq:2p1rd-dim-1} has norm $1$ corresponds, in this setting, to the observation that the distribution $P$ can be transformed into an element $\tilde{P}\in\class(\seta|\setx)$ such that $\tilde{P}(a|x)\geq P(x,a)$ for every $x,a$. Since $G$ has positive coefficients this can only increase the value. Thus the supremum on the right-hand side of~\eqref{eq:upperbounds-1} is at most $\omega(G)$, completing the proof of the first item in the proposition. 

The proof of the second item makes use of the Fourier transform 
$$\mathcal{F}_N: \C^N\rightarrow \C^N,\qquad \mathcal{F}_N:e_j\mapsto\sum_{k=1}^N e^{\frac{2\pi i j k}{N}}e_k\qquad\forall j\in\{1,\ldots,N\}.$$
The identity map~\eqref{eq:upperbounds-0} can be decomposed as 
\begin{equation}\label{eq:upperbounds-3}
\ell_1^{\cardx}(\ell_\infty^{\carda})\otimes_{\epsilon} \ell_1^{\cardy}(\ell_\infty^{\cardb})\rightarrow \ell_1^{\cardx\carda}\otimes_{\epsilon} \ell_1^{\cardy\cardb}\rightarrow \ell_1^{\cardx\carda}\otimes_{min} \ell_1^{\cardy\cardb}\rightarrow \ell_1^{\cardx}(\ell_\infty^{\carda})\otimes_{min} \ell_1^{\cardy}(\ell_\infty^{\cardb}),
\end{equation}
where the first arrow is $\frac{1}{\carda\cardb}(id_{\cardx}\otimes \mathcal F_{\carda})\otimes (id_{\cardy}\otimes \mathcal F_{\cardb})$, the second is the identity, and the third arrow is $\carda \cardb (id_{\cardx}\otimes \mathcal F^{-1}_{\carda})\otimes (id_{\cardy}\otimes \mathcal F^{-1}_{\cardb})$. Here it is not hard to verify that the norm of the first and the third maps are $\sqrt{\carda\cardb}$ and $1$ respectively. Grothendieck's theorem can be used to show that the second map has norm at most $K_G^\C$. Composing the three estimates proves the second item. As for the first item, we give a self-contained proof directly relating the classical and entangled values. Start with the third arrow in~\eqref{eq:upperbounds-3}. Given a family of POVM $\{E_x^a\}_{a}$, $x\in\setx$, for Alice, the (not necessarily self-adjoint) operators 
$$A_{x,k}=\sum_{a\in \seta}e^{-\frac{2\pi i a k}{\carda}} E_x^a$$
verify that $\|A_{x,k}\|\leq 1$ for every $x\in \setx$, $k\in \seta$. To see this, for any unit $\ket{u},\ket{v}$ write
\begin{align*}
\big|\bra{u}A_{x,k}\ket{v}\big| &= \Big|\sum_a\, e^{-\frac{2\pi i a k}{\carda}} \bra{u}E_x^a\ket{v}\Big|\\
&\leq \sum_a \,\big|\bra{u}E_x^a\ket{v}\big|\\
&\leq \Big( \sum_a \,\bra{u}E_x^a\ket{u} \Big)^{1/2} \Big(\sum_a \,\bra{v}E_x^a\ket{v} \Big)^{1/2}\\
&= 1,
\end{align*}
where the second inequality uses $E_x^a \geq 0$ and the Cauchy-Schwarz inequality, and the last follows from $\sum_a E_x^a = \Id$. 
The same transformation can be applied to obtain operators $B_{y,k'}$ from Bob's POVM, thus
$$\omega^*(G)\,\leq\,\frac{1}{\carda \cardb}\sup \Big|\sum_{x,y;k,k'}\Big(\sum_{a,b}G_{x,y}^{a,b}e^{\frac{2\pi i a k}{\carda}}e^{\frac{2\pi i b k'}{\cardb}}\Big)\,\langle\psi|A_{x,k} \otimes B_{y,k'}|\psi\rangle\Big|,$$
where the supremum on the right-hand side is taken over all $d$, states $\ket{\psi}\in\Ball(\C^d\otimes \C^d)$ and $A_{x,k},B_{y,k'}\in M_d$ of norm at most $1$. For the next step, interpret the coefficients 
$$\Big(\sum_{a,b}G_{x,y}^{a,b}\,e^{\frac{2\pi i a k}{\carda}}e^{\frac{2\pi i b k'}{\cardb}}\Big)_{x,k;y,k'}$$
 as a complex $\cardx\carda\times \cardy\cardb$ matrix and apply Grothendieck's inequality (Theorem~\ref{thm:grothendieck}) to obtain
\begin{equation}\label{eq:upperbounds-5}
\omega^*(G)\,\leq\, \frac{1}{\carda \cardb}K_G^\C\,\sup\Big|\sum_{x,y;k,k'}\Big(\sum_{a,b}G_{x,y}^{a,b}\,e^{\frac{2\pi i a k}{\carda}}e^{\frac{2\pi i b k'}{\cardb}}\Big) t_{x,k}s_{y,k'}\Big|,
\end{equation}
where the supremum is taken over all $(t_{x,k})\in\ell_\infty^{\cardx\carda}$ and $(s_{y,k'})\in\ell_\infty^{\cardy\cardb}$ of norm at most $1$. This gives the second arrow in~\eqref{eq:upperbounds-3}. To obtain the first, observe that the expression appearing inside the supremum on the right-hand side of~\eqref{eq:upperbounds-5} may be rewritten as 
$$\Big|\sum_{x,y;a,b}G_{x,y}^{a,b}\Big(\sum_{k}e^{\frac{2\pi i a k}{\carda}}t_{x,k}\Big)\Big(\sum_{k'}e^{\frac{2\pi i b k'}{\cardb}}s_{y,k'}\Big)\Big|.$$
For any family of complex numbers $(t_{x,k})_{x,k}$ such that $\sup_{x\in \setx,k\in \carda}|t_{x,k}|\leq 1$ it follows from Parseval's identity and the Cauchy-Schwarz inequality that the complex numbers $P(x,a)=\sum_k\exp(2\pi i a k/\carda)t_{x,k}$, $x\in \setx,a\in \seta$ satisfy $\sum_{a}|P(x,a)|\leq \carda^{3/2}$ for every $x\in\setx$. Using that the coefficients $G_{x,y}^{a,b}$ are positive, the supremum in~\eqref{eq:upperbounds-5} is at most $(\carda\cardb)^\frac{3}{2} \omega(G)$, concluding the proof of the second item in the proposition. 
\end{proof}

The bounds stated in Proposition~\ref{Prop upper bounds} can be extended in several ways. First, the same estimates as stated in the proposition apply to the quantities $\|\hat{M}\|_\eps$ and $\|\hat{M}\|_{min}$, for an arbitrary tensor $\hat{M}\in  \ell_1^{\cardx}(\ell_\infty^{\carda})\otimes \ell_1^{\cardy}(\ell_\infty^{\cardb})$, instead of $\omega(G)$ and $\omega^*(G)$ respectively. The proof of Proposition~\ref{Prop upper bounds} goes through the complex spaces $\ell_1^{\cardx}(\ell_\infty^{\carda})\otimes \ell_1^{\cardy}(\ell_\infty^{\cardb})$, thus using Lemma~\ref{Lemma incomplete values} and the comments that follow it similar bounds can be derived that relate the quantities $\omega_{inc}(M)$ and $\omega^*_{inc}(M)$ for arbitrary Bell functionals $M$. Bounds (slightly weakened by a constant factor) can then be obtained for the values $\omega(M)$ and $\omega^*(M)$ by using Lemma~\ref{NSG classical-quantum} and the fact that the space $\mathcal{N}_\K(\seta|\setx)^*$ is ``very similar'' to the space $\ell_1^\cardx(\ell_\infty^{\carda-1})$; we refer to~\cite[Section 5]{JungeP11low} for details on this last point.

Second, we note that the proof of the first item in the proposition can be slightly modified to show that for any game $G$ it holds that $\omega^*(G)\leq \min\{\carda,\cardb\}\omega(G)$.
The key point to show this is that 
$$\big\|id\otimes id: \ell_1^{\cardx\carda}\otimes_{\epsilon} \ell_1^{\cardy}(\ell_\infty^{\cardb})\rightarrow \ell_1^{\cardx\carda}\otimes_{\min} \ell_1^{\cardy}(\ell_\infty^{\cardb})\big\|_+=1,$$
where $\|\cdot \|_+$ denotes the norm of a map when it is restricted to positive elements. Note that this bound is always stronger than the one provided in the second item in the proposition. However, as discussed above the latter also applies to general Bell functionals, while the above bound is no longer true. Indeed, in~\cite{PalazuelosY15} the authors show the existence of a family of Bell functionals $(M_n)$ with $\carda = 2$, $\cardb = n$, $\cardx = \cardy = 2^n$ such that $\omega^*(M_n)/\omega(M_n) \geq C\sqrt{n}/\log^2n$, where $C$ is a universal constant.

Third, the proof of the second item in Proposition \ref{Prop upper bounds} can be adapted to derive the same bound when the quantum value $\omega^*$ is replaced by the $\gamma_2^*$ norm (q.v.~\eqref{eq:xor-facto-norm} for the definition in the special case of $\ell_1^\cardx\otimes\ell_1^\cardy$). Since the $\gamma_2^*$ norm is known to be at least as large as the value returned by the ``basic semidefinite relaxation''~\cite{raghavendra2008optimal} for $\omega^*(G)$, the bound from Proposition~\ref{Prop upper bounds} applies to that value as well.
The $\gamma_2^*$ norm was used in \cite{Dukaric11} to obtain a similar upper bound as the one in the second item in Proposition \ref{Prop upper bounds}. The author obtained a constant factor $K=\pi/(2\ln (1+\sqrt{2}))$, which is slightly worse than ours (q.v. comments following Theorem \ref{thm:grothendieck}). Finally, we note that the use of the spaces $\mathcal{N}_\K(\carda|\cardx)$ allows to extend the result on $\gamma_2^*$ to the case of general Bell functionals, at the price of a worse universal constant. 
\subsubsection{Lower bounds}\label{sec:violations-lb}
In \cite{BuhrmanRSW12} the authors analyze a family of two-player games which shows that the bound provided by the second item in Proposition \ref{Prop upper bounds} is essentially optimal. The games were originally introduced by Khot and Vishnoi~\cite{KhotV15} to obtain the first integrality gap between the classical value of a unique game and the value returned by its ``basic semidefinite relaxation''. The family of games, known as the Khot-Vishnoi games, is parametrized by an integer $\ell$ and denoted by $(KV_\ell)$; for each $\ell$ letting $n=2^\ell$ the game $KV_\ell$ has $2^n/n$ questions and $n$ answers per player. In~\cite{BuhrmanRSW12} it is shown that for every $n=2^\ell$, 
\begin{equation}\label{eq:kvbound}
\frac{\omega^*({KV}_\ell)}{\omega({KV}_\ell)}\,\geq\, C\,\frac{n}{\log^2 n},
\end{equation}
where $C$ is a universal constant. 
The inequality~\eqref{eq:kvbound} is derived by separately establishing an upper bound on the classical value of $KV_\ell$ and a lower bound on its entangled value. The main ingredient in the proof of the bound on the classical value is a clever use of the Beckner-Bonami-Gross inequality. A lower bound on the entangled value is obtained by explicitly constructing a quantum strategy, obtained by combining a ``quantum rounding'' technique introduced in~\cite{KRT10} with the original argument of Khot and Vishnoi to prove a lower bound on the value of  the semidefinite relaxation of the game $KV_\ell$. The result is a strategy which uses a maximally entangled state of dimension $n$ and achieves a value of order $1/\log^2 n$. We refer to the paper~\cite{BuhrmanRSW12} for more details. 

 A drawback of the Khot-Vishnoi game is that it requires exponentially many questions per player, so that the violation~\eqref{eq:kvbound} is very far from the upper bound provided by the first item from Proposition~\ref{Prop upper bounds}. In~\cite{JungeP11low} the authors show the existence of a family of Bell functionals $(JP_n)_{n\in\N}$ such that for each $n$, $JP_n$ has $n$ inputs and $n$ outputs per system and satisfies
\begin{equation}\label{eq:jpbound}
\frac{\omega_n^*(JP_n)}{\omega(JP_n)}\,\geq\, C'\,\frac{\sqrt{n}}{\log n},
\end{equation}
where $C'$ is a universal constant. Thus the family $(JP_n)$ is only quadratically far from matching the best upper bounds in both parameters simultaneously. No better estimate in terms of the number of inputs is known:
\begin{question}
Does there exist a family of Bell functionals $M_n$ with $n$ inputs per system such that $\frac{\omega^*(M_n)}{\omega(M_n)}$ is $\Omega(n)$?
\end{question} 
In order to describe the family $(JP_n)$, consider first 
the linear map $G_n:\ell_2^n\rightarrow \ell_1^n(\ell_\infty^n)$ given by 
$$G_n:e_i \rightarrow  \frac{1}{n\sqrt{\log n}}\sum_{x,a=1}^ng_{i,x,a}e_x\otimes e_a,$$
where $(g_{i,x,a})_{i,x,a=1}^n$ is a family of independent real standard Gaussian random variables. It follows from Chevet's inequality \cite[Theorem 43.1]{TomczakBook} and standard concentration arguments that there exists a universal constant $C_1$ such that $\|G_n\|\leq C_1$ with high probability. It follows from  this estimate and the fact that the $\epsilon$ norm has the \emph{metric mapping property}~\cite[Section 4.1]{Def93} that the map $M_n$ defined by  
\begin{align*}
M_n&= (G_n\otimes G_n)\Big(\sum_{i=1}^n e_i\otimes e_i\Big)\\
&= \frac{1}{n^2\log n}\sum_{x,y,a,b=1}^n\sum_{i=1}^n  g_{i,x,a}g_{i,y,b}e_x\otimes e_a\otimes e_y\otimes e_b\\
&=\frac{1}{n^2\log n}\sum_{x,y,a,b=1}^n\langle u_x^a|v_y^b \rangle e_x\otimes e_a\otimes e_y\otimes e_b,
\end{align*}
where 
\begin{equation}\label{eq:def-ux-vy}
u_x^a=\sum_{i=1}^ng_{i,x,a}e_i\qquad \text{and}\qquad v_y^b=\sum_{i=1}^ng_{i,y,b}e_i,
\end{equation}
verifies $\|M_n\|_{\ell_1^{\cardx}(\ell_\infty^{\carda})\otimes_{\epsilon} \ell_1^{\cardy}(\ell_\infty^{\cardb})}\leq C_1^2$ with high probability over the choice of the $(g_{i,x,a})$ and $(g_{i,y,b})$.  Introducing a ``dummy'' output (as explained in Section~\ref{sec:signed-coefficients}), from $M_n$ it is straightforward to define a Bell functional $JP_n$ with $n$ inputs and $n+1$ outputs per system such that $\omega(JP_n)\leq C_1^2$. To lower bound the entangled value, a strategy based on the state 
\begin{equation}\label{eq:jp-state}
\ket{\psi}=\frac{1}{\sqrt{2}}\Big(|11\rangle+\frac{1}{\sqrt{n}}\sum_{i=2}^{n+1}|ii\rangle\Big)\in\Ball\big(\C^{n+1}\otimes\C^{n+1}\big)
\end{equation}
 and POVM $\{E_x^a\}_{a}$ and $\{E_y^b\}_{b}$ in $M_{n+1}$ for Alice and Bob respectively (depending on the Gaussian variables $g_{i,x,a}$ and $g_{i,y,b}$) can be found that achieves a value at least $C_2\sqrt{n}/\log n$ in $JP_n$ with high probability, where $C_2$ is a universal constant; the construction is rather intuitive and we leave it as an exercise for the reader. Together, these estimates on the classical and entangled value of $JP_n$ prove~\eqref{eq:jpbound}.

The construction of $JP_n$ is probabilistic, and only guaranteed to satisfy the claimed bound with high probability over the choice of the $(g_{i,x,a})$ and $(g_{i,y,b})$; no deterministic construction of a functional satisfying the same bound is known. The state~\eqref{eq:jp-state} is very different from the maximally entangled state, and it is not known whether the latter can be used to obtain a bound similar to~\eqref{eq:jpbound}; we refer to Section~\ref{sec:general-me} for some limitations of the maximally entangled state in achieving large violations. 

We end this section by giving an interpretation of $JP_n$ as a two-player game $R_n$ due to Regev~\cite{Regev2012}, who obtains the improved bound 
$$\frac{\beta^*(R_n)}{\beta(R_n)}\,\geq\, C\sqrt{\frac{n}{\log n}},$$
 where $\beta=2\omega-1$ and $\beta^*=2\omega^*-1$ denote the classical and entangled biases respectively, through a more careful analysis.\footnote{The presentation of $JP_n$ given above is in fact based on the construction by Regev, who in particular suggested the use of independent coefficients $g_{i,x,a}$ and $g_{i,y,b}$ where~\cite{JungeP11low} used the same coefficients on both sides.} In the game $R_n$, each player is sent a uniformly random $x,y\in\{1,\ldots,n\}$ respectively. Each player must return an answer $a,b\in\{1,\ldots,n+1\}$. If any of the players returns the answer $n+1$ their payoff is $1/2$. Otherwise, the payoff obtained from answers $a,b$ on questions $x,y$ is defined to be $\frac{1}{2}+\delta\langle u_x^a, u_y^b\rangle$, where the vectors $u_x^a$ and $v_y^b$ are defined from independent standard Gaussian random variables as in~\eqref{eq:def-ux-vy} and $\delta$ is a suitable factor chosen so that $\delta\langle u_x^a, u_y^b\rangle \in [-1/2,1/2]$ with high probability. It is then straightforward to verify that both the classical and entangled biases of the game $R_n$ are linearly related to the classical and entangled values of the Bell functional $JP_n$.


\section{Entanglement in nonlocal games}\label{sec:Entanglement}

The study of multiplayer games in quantum information theory is motivated by the desire to develop a quantitative understanding of the nonlocal properties of entanglement. The previous two sections focus on the ratio of the entangled and classical values of a multiplayer game as a measure of nonlocality. In this section we refine the notion by investigating how other measures of entanglement, such as the Schmidt rank or the entropy of entanglement,\footnote{The Schmidt rank of a bipartite pure state is the rank of its reduced density matrix on either subsystem. Its entanglement entropy is the Shannon entropy of the singular values of the reduced density matrix.} are reflected in the properties of nonlocal games. To study the Schmidt rank, for any Bell functional $M$ and integer $d$ define
\begin{equation}\label{eq:def-omega-d}
\omega^*_d(M) \,=\, \sup\, \Big|\sum_{x,y;a,b}\, M_{xy}^{ab} \,\bra{\psi} A_x^a \otimes B_y^b \ket{\psi}\Big|,
\end{equation}
where the supremum is taken over all $k\leq d$, $\ket{\psi}\in\Ball(\C^k\otimes \C^k)$ and families of POVM $\{A_x^a\}_a$ and $\{B_y^b\}_b$ in $M_k$. Clearly $(\omega_d^*(M))_d$ forms an increasing sequence that converges to $\omega^*(M)$ as $d\to\infty$. The quantity 
\begin{equation}\label{eq:def-supm}
\sup_M\,\frac{\omega^*_d(M)}{\omega(M)}
\end{equation} 
thus asks for the largest violation of a Bell inequality achievable by states of Schmidt rank at most $d$. In Section~\ref{sec:dim-bounds} we describe known bounds on this quantity, first for the case when $M$ corresponds to a multiplayer XOR game and then for general two-player games. In Section~\ref{sec:maximally-entangled} we refine our discussion by considering the largest bias achievable by a specific class of states, maximally entangled states. In Section~\ref{sec:os-emb} we introduce a class of ``universal'' states, \emph{embezzlement states}, and give a surprising application of these states to a proof of the operator space Grothendieck inequality, Theorem~\ref{thm:osgt}. Finally in Section~\ref{sec:infinite} we discuss the issue of quantifying the rate of convergence $\omega_d^*\to\omega^*$, present evidence for the existence of a game $G$ such that $\omega^*_d(G)<\omega^*(G)$ for all $d$, and relate these questions to Connes' Embedding Conjecture. 

\subsection{Dimension-dependent bounds}
\label{sec:dim-bounds}

\subsubsection{XOR games}\label{sec:xor-entanglement} 

As discussed in Section~\ref{sec:grothendieck}, Grothendieck's inequality implies that the entangled bias of a two-player XOR game can never be more than a constant factor larger than its classical bias. Moreover, for any XOR game $G$ the optimal violation can always be achieved using a maximally entangled state of local dimension at most $2^{O(\sqrt{\cardx+\cardy})}$, and there exists games for which this bound is tight. Can the optimal bias be approached, up to some error $\eps$, using strategies that have much lower dimension?
 For any integer $d$ let $\beta^*_d(G)$ be defined analogously to~\eqref{eq:def-omega-d} by restricting the supremum in the definition~\eqref{eq:xor-norm-os-0} of the entangled bias to strategies with local dimension at most $d$. The following lemma is attributed to Regev in~\cite{CHTW04}.

\begin{lemma}\label{lem:xor-dim}
Let $G$ be an XOR game. Then for any $\eps>0$ it holds that $\beta^*_{ 2^{\lfloor 2\eps^{-2}\rfloor}}(G)\geq (1-\eps)\beta^*(G)$. 
\end{lemma}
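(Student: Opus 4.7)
The plan is to combine Tsirelson's identification $\beta^*(G)=\|\hat G\|_{\gamma_2^*}$ from Section~\ref{sec:grothendieck} with a Johnson--Lindenstrauss-style random dimension reduction on the vectors realizing that norm, and then to re-apply Tsirelson's Clifford-algebra construction to the lower-dimensional vectors to obtain a quantum strategy of the claimed local dimension.

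First, I would invoke~\eqref{eq:xor-norm-tensor} together with the equality case of~\eqref{eq:xor-min-facto-0} that Tsirelson proved for real tensors to fix unit vectors $a_1,\dots,a_\cardx, b_1,\dots,b_\cardy\in\R^N$ (with $N\leq\min(\cardx,\cardy)$) achieving $V:=\sum_{xy}\hat G_{xy}\,a_x\cdot b_y \geq (1-\eps/2)\beta^*(G)$. Next, setting $k=\lfloor 4\eps^{-2}\rfloor$, I would take a random Gaussian map $\Phi:\R^N\to\R^k$ with entries i.i.d.\ $N(0,1/k)$. A direct second-moment computation gives $\Ex[\Phi a_x\cdot\Phi b_y]=a_x\cdot b_y$ and $\Ex[\|\Phi a_x\|^2]=1$, while the crucial estimate $\|A\|_F\leq\sum_{xy}|\hat G_{xy}|=1$ for $A:=\sum_{xy}\hat G_{xy}\,a_xb_y^T$ follows at once from the XOR normalization and Cauchy--Schwarz. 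Standard Gaussian variance identities then yield $\mathrm{Var}\big(\sum_{xy}\hat G_{xy}\,\Phi a_x\cdot \Phi b_y\big)=O(1/k)$, as well as analogous $O(1/k)$ control on $\|\Phi a_x\|^2-1$ and $\|\Phi b_y\|^2-1$.

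I would then argue that, with positive probability over $\Phi$, the projected vectors can be rescaled into unit vectors $a'_x,b'_y\in\R^k$ satisfying $\sum_{xy}\hat G_{xy}\,a'_x\cdot b'_y\geq(1-\eps)\beta^*(G)$. The proof is completed by applying Tsirelson's Clifford-algebra construction~\cite{tsirel1987quantum} to the family $(a'_x, b'_y)$: this produces Hermitian $\pm 1$-valued observables $A_x,B_y\in M_D$ with $D=2^{\lfloor k/2\rfloor}\leq 2^{\lfloor 2\eps^{-2}\rfloor}$ and a maximally entangled state $\ket{\psi_D}\in\C^D\otimes\C^D$ such that $\bra{\psi_D}A_x\otimes B_y\ket{\psi_D}=a'_x\cdot b'_y$ for every $(x,y)$. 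The resulting strategy has local dimension at most $2^{\lfloor 2\eps^{-2}\rfloor}$ and achieves bias at least $(1-\eps)\beta^*(G)$, which is the desired conclusion.

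The most delicate part of the plan will be pinning down a target dimension $k$ that depends only on $\eps$ and not on $\cardx$ or $\cardy$. A naive Johnson--Lindenstrauss argument preserving each individual inner product $a_x\cdot b_y$ to additive error $\eps$ would incur a $\log(\cardx\cardy)$ overhead in $k$ via a union bound. My plan avoids this by working with the \emph{weighted sum} $\sum_{xy}\hat G_{xy}\,\Phi a_x\cdot\Phi b_y$ globally and using the XOR normalization to bound $\|A\|_F\le 1$. The accompanying rescaling step, in which one passes from the projected vectors back to unit vectors, appears to require controlling all norms $\|\Phi a_x\|,\|\Phi b_y\|$ simultaneously, and keeping $k$ size-independent at that step will demand a careful probabilistic argument --- most likely by working directly with normalized inner products and exploiting averaging effects from the $\ell_1$-weights $\hat G_{xy}$, rather than insisting on a vector-by-vector renormalization.
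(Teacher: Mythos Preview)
Your overall strategy---Tsirelson's characterization, Gaussian dimension reduction, then back to a quantum strategy via the Clifford construction---matches the paper's exactly, and you have correctly isolated the one genuinely delicate point: passing from the projected vectors $\Phi a_x,\Phi b_y$ to \emph{unit} vectors without picking up a dependence on $\cardx,\cardy$. Your proposal does not resolve this, however. The variance bound on the unnormalized sum is fine, but controlling the renormalization via per-vector bounds on $\|\Phi a_x\|^2-1$ will force a union bound over $x,y$, and your closing suggestion to ``exploit averaging effects from the $\ell_1$-weights'' is not yet a mechanism.

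The paper supplies the missing idea. It takes the expectation of the already-normalized quantity $\sum_{xy}\hat G_{xy}\,\frac{\tilde a_x}{\|\tilde a_x\|}\cdot\frac{\tilde b_y}{\|\tilde b_y\|}$ and splits it algebraically into the main term $k^{-1}\Ex\big[\sum_{xy}\hat G_{xy}\,\tilde a_x\cdot\tilde b_y\big]=(1-\eta)\beta^*(G)$ plus two error terms, each of the form $\Ex\big[\sum_{xy}\hat G_{xy}\,u_x\cdot v_y\big]$ with, e.g., $u_x=k^{-1/2}\tilde a_x$ and $v_y=\tilde b_y\big(\|\tilde b_y\|^{-1}-k^{-1/2}\big)$. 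The key observation is that such an error term is \emph{itself} an evaluation of the bilinear form $\hat G$ on vectors in the Hilbert space $L_2(\Omega;\R^k)$, hence bounded by $\|\hat G\|_{\gamma_2^*}\cdot\sup_x\|u_x\|_{L_2}\cdot\sup_y\|v_y\|_{L_2}$. A short chi-square moment computation gives $\|u_x\|_{L_2}=1$ and $\|v_y\|_{L_2}^2=2-2\,\Ex[\|\tilde b_y\|]/\sqrt{k}\leq 1/k$, \emph{uniformly} in $x,y$; thus each error term is at most $k^{-1/2}\beta^*(G)$ with no size dependence. Taking $k=4\eps^{-2}$ and applying Tsirelson's construction to any realization meeting the expectation then gives the claimed local dimension $2^{\lfloor 2\eps^{-2}\rfloor}$.
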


Lemma~\ref{lem:xor-dim} shows that the optimal violation in XOR games can always be achieved, up to a multiplicative factor $(1-\eps)$, using a (maximally entangled) state of dimension that scales exponentially with $\eps^{-2}$.\footnote{Unpublished work of the second author shows that the exponent of $\eps$ can be improved from $2$ to $1$, and that an exponential dependence on $\poly(\eps^{-1})$ is necessary.} Since we are not aware of a proof of Lemma~\ref{lem:xor-dim} appearing explicitly in the literature we include a short argument based on the Johnson-Lindenstrauss lemma.

\begin{proof}[Proof of Lemma~\ref{lem:xor-dim}]
Tsirelson's characterization of the entangled bias using the $\gamma_2^*$ norm (q.v.~\eqref{eq:xor-min-facto}) shows
$$\beta^*(G) = \sup_{d;\,a_x,b_y\in\Ball(\R^d)} \Big|\sum_{x,y} \hat{G}_{xy} \,a_x\cdot b_y\Big|,$$
where $\hat{G}\in\R^{\cardx\cardy}$ is the tensor associated to the game $G$. 
Let $\eta>0$ and fix $d$ and $a_x,b_y\in\Ball(\R^d)$ such that $\sum_{x,y} \hat{G}_{xy}\, a_x\cdot b_y \geq (1-\eta)\beta^*(G)$. For $i=1,\ldots,k$, where $k$ is a parameter to be determined later, let $g_i= (g_{i1},\ldots,g_{id})\in\R^d$ be a vector of independent standard Gaussian random variables. For $x\in \setx$ define $\tilde{a}_x = (g_1\cdot a_x,\ldots,g_k\cdot a_x)\in\R^k$, and for $y\in\sety$ let $\tilde{b}_y = (g_1\cdot b_y,\ldots,g_k\cdot b_y)\in\R^k$. Then
\begin{align}
\beta_k^*(G) &\geq \Ex\Big[\sum_{x,y} \hat{G}_{xy} \,\frac{\tilde{a}_x}{\|\tilde{a}_x\|} \cdot \frac{\tilde{b}_y}{\|\tilde{b}_y\|}\Big]\notag\\
&= \frac{1}{k} \Ex\Big[\sum_{x,y} \hat{G}_{xy}\, \tilde{a}_x \cdot \tilde{b}_y\Big] \notag\\
&\qquad+ 
\frac{1}{\sqrt{k}}\Ex\Big[\sum_{x,y} \hat{G}_{xy} \,\tilde{a}_x \cdot \tilde{b}_y \Big(\frac{1}{\|\tilde{b}_y\|}-\frac{1}{\sqrt{k}}\Big)\Big] -\Ex\Big[\sum_{x,y} \hat{G}_{xy} \,\tilde{a}_x \Big(\frac{1}{\sqrt{k}}-\frac{1}{\|\tilde{a}_x\|}\Big) \cdot \frac{\tilde{b}_y}{\|\tilde{b}_y\|}\Big]\label{eq:xor-dim-0}\\
&\geq (1-\eta)\beta^*(G)- \frac{1}{\sqrt{k}}\beta^*(G) - \frac{1}{\sqrt{k}}\beta^*(G).\notag
\end{align} 
Here to obtain the last inequality the two terms in~\eqref{eq:xor-dim-0} are bounded by interpreting the random variables $k^{-1/2}\tilde{a}_x$, $\tilde{b}_y(\|\tilde{b}_y\|^{-1}-k^{-1/2})$ and $\tilde{a}_x(\|\tilde{a}_x\|^{-1}-k^{-1/2})$ as vectors in $L_2$ with squared norm $\Ex[\|k^{-1/2}\tilde{a}_x\|^2]=1$,
\begin{align*}
 \Ex\Big[\Big\|\tilde{b}_y\Big(\frac{1}{\|\tilde{b}_y\|}-\frac{1}{\sqrt{k}}\Big)\Big\|^2\Big] &=  2-2\frac{\Ex\big[\|\tilde{b}_y\|\big]}{\sqrt{k}}\\
&= 2-2\sqrt{1-\frac{1}{2k}}\\
&\leq \frac{1}{k},
\end{align*}
where the second line follows from standard estimates on the $\chi^2$ distribution, and similarly for $\tilde{a}_x(\|\tilde{a}_x\|^{-1}-k^{-1/2})$. Thus these vectors can lead to a value at most $k^{-1/2}\|\hat{G}\|_{\gamma_2^*}$ when evaluated against the game tensor $\hat{G}$ as in~\eqref{eq:xor-dim-0}. Choosing $k=4\eps^{-2}$ and using Tsirelson's construction to obtain quantum strategies of dimension $2^{\lfloor k/2\rfloor}$ proves the lemma.
\end{proof}

The situation becomes more interesting once XOR games with more than two players are considered. As shown in Section~\ref{sec:three-xor} the ratio of the entangled and classical biases can become unbounded as the number of questions in the game increases. We may again ask the question, what is the largest violation that is achievable when restricting to entangled states of local dimension at most $d$? From the proof of Corollary~\ref{cor:3xor-largeviolation} it follows that for every $d$ there exists a three-player XOR game for which there is an entangled strategy of local dimension $d$ that achieves a violation of order $\sqrt{d}$ (up to logarithmic factors). As it turns out this dependence is essentially optimal:\footnote{The state that can be extracted from the strategy constructed in the proof of Corollary~\ref{cor:3xor-largeviolation} is rather complicated, with $d^3$ coefficients chosen, up to normalization, as i.i.d. standard Gaussians. As will be shown in Section~\ref{sec:general-xor} this is to some extent necessary, as the natural generalization of maximally entangled states to multiple parties (\emph{Schmidt states}, a class that encompasses the GHZ state) can only lead to bounded violations in multiplayer XOR games.} as shown in~\cite{PerezWPVJ08tripartite}, for any $k$-player XOR game it holds that 
\begin{equation}\label{eq:kxor-upperbound}
\beta_d^*(G) \,=\, O\big(d^{\frac{k-2}{2}}\big)\beta(G).
\end{equation}
The main ingredient in the proof of~\eqref{eq:kxor-upperbound} is a noncommutative Khinchine inequality from~\cite{HM07}; we refer to~\cite{PerezWPVJ08tripartite} for details (see also~\cite{BV12} for an elementary proof). The upper bound can be interpreted as stating the existence of certain ``weak'' forms of Grothendieck's inequality for tripartite tensor products: it implies a bound on the rate at which the norm of the amplifications $G\otimes \Id_{\C^d}\otimes\Id_{\C^d}\otimes\Id_{\C^d}$, for $G:\ell_\infty^n\times\ell_\infty^n\times\ell_\infty^n\to\C$, grows with $d$. 
\subsubsection{Two-player games}

In Section~\ref{sec:violations-lb} we described a family of games $(KV_\ell)$ for which a violation of order $2^\ell/\ell^2$ could be achieved using a maximally entangled state of local dimension $2^\ell$. The following proposition shows that this dependence on the dimension is close to optimal.\footnote{A slightly better bound is known for the case of the maximally entangled state; q.v.~\eqref{upper bound max} in Section~\ref{sec:general-me}.}

\begin{proposition}\label{dimension-general}
Let $G$ be a two-player game and $M$ a bipartite Bell functional. Then for every $d\geq 1$,
$$\omega^*_d(G)\leq d\,\omega(G) \qquad\text{and}\qquad\omega^*_d(M)\leq 2d\,\omega(M).$$
\end{proposition}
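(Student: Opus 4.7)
The plan is to reduce both inequalities to a common signed separable decomposition of the strategy state. Fix a strategy $(\ket\psi,\{A_x^a\},\{B_y^b\})$ realizing $\omega^*_d(G)$ (resp.\ $\omega^*_d(M)$) up to an additive $\eps$, with $\ket\psi\in\C^k\otimes\C^k$ for some $k\le d$ and POVMs in $M_k$. By conjugating the POVMs on each side by the local unitaries that bring $\ket\psi$ to Schmidt form (which preserves POVM positivity and normalization), one may assume $\ket\psi=\sum_{i=1}^k\sqrt{\lambda_i}\ket{ii}$.

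The key step I would establish is a decomposition
$$\ket\psi\bra\psi \,=\, c_+\,\sigma_+ \,-\, c_-\,\sigma_-,$$
where $\sigma_\pm$ are separable density matrices on $\C^k\otimes\C^k$ and $c_\pm\ge 0$ satisfy $c_+-c_-=1$ (by taking traces) together with the crucial quantitative bound $c_++c_-\le 2d-1$. Such a decomposition is furnished by the Vidal--Tarrach computation of the robustness of entanglement of a pure state, which identifies the least achievable $c_-$ as $(\sum_i\sqrt{\lambda_i})^2-1$; by Cauchy--Schwarz this is at most $k-1\le d-1$, so $c_+\le d$ and $c_-\le d-1$. Concretely, $\sigma_\pm$ can be realized as averages of pure product states of the form $\ket{\phi_\xi}\bra{\phi_\xi}\otimes\ket{\overline{\phi_\xi}}\bra{\overline{\phi_\xi}}$ with $\ket{\phi_\xi}=\sum_i\xi_i\sqrt{\lambda_i}\ket{i}$ and $\xi=(\xi_i)$ ranging over i.i.d.\ uniform phases.

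Given this decomposition, each separable $\sigma_\pm$ induces a classical conditional distribution $P^\pm(a,b|x,y) = \Tr((A_x^a\otimes B_y^b)\sigma_\pm)$ in $\class(\seta\setb|\setx\sety)$, and by linearity of the trace $\bra\psi A_x^a\otimes B_y^b\ket\psi = c_+P^+(a,b|x,y)-c_-P^-(a,b|x,y)$. For the game $G$ the non-negativity of all coefficients lets one drop the negative term, yielding
$$\sum_{x,y;a,b}G_{xy}^{ab}\bra\psi A_x^a\otimes B_y^b\ket\psi \;\le\; c_+\,\omega(G)\;\le\; d\,\omega(G),$$
so $\omega^*_d(G)\le d\,\omega(G)$. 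For a signed Bell functional $M$ one instead applies the triangle inequality before bounding each classical piece, paying for both:
$$\Big|\sum_{x,y;a,b}M_{xy}^{ab}\bra\psi A_x^a\otimes B_y^b\ket\psi\Big| \;\le\; (c_++c_-)\,\omega(M)\;\le\;(2d-1)\,\omega(M)\;\le\;2d\,\omega(M),$$
giving $\omega^*_d(M)\le 2d\,\omega(M)$.

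The main obstacle is the sharp quantitative bound $c_++c_-\le 2d-1$; the Vidal--Tarrach identity is what makes the constants in the proposition match exactly. Any softer bound of the form $c_++c_-=O(d)$ would still yield violation bounds of the correct order $O(d)$ with worse constants, and can be derived more elementarily: for instance, Cauchy--Schwarz applied to the Schmidt expansion $\bra\psi A\otimes B\ket\psi=\sum_{ij}\sqrt{\lambda_i\lambda_j}(A)_{ij}(B)_{ij}$ combined with $|(A)_{ij}|^2\le(A)_{ii}(A)_{jj}$ for PSD $A$ gives the pointwise inequality $\bra\psi A\otimes B\ket\psi\le d\sum_i\lambda_i(A)_{ii}(B)_{ii}$, which by itself directly handles the game case without invoking the separable decomposition. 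A secondary check is that the $P^\pm$ obtained from a separable $\sigma_\pm$ is indeed in the classical set of~(\ref{eq:def-classical-prob}), which is immediate from expressing $\sigma_\pm$ as a convex combination of product rank-one projectors.
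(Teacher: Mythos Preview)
Your argument is correct and yields the stated bounds (in fact a slightly sharper $2d-1$ for the signed case), but it proceeds quite differently from the paper. You use a single global step: the Vidal--Tarrach robustness computation gives a signed separable decomposition $|\psi\rangle\langle\psi|=c_+\sigma_+-c_-\sigma_-$ with $c_+\le d$, $c_-\le d-1$, and both inequalities then follow immediately by evaluating $M$ (or $G$) on the classical distributions induced by $\sigma_\pm$. The paper instead expands $\langle\psi|A_x^a\otimes B_y^b|\psi\rangle=\sum_{i,j}\lambda_i\lambda_j\langle i|A_x^a|j\rangle\langle i|B_y^b|j\rangle$, bounds $|\sum_{i,j}\lambda_i\lambda_j|\le d$, and then for each fixed $(i,j)$ shows $\sum_a|\langle i|A_x^a|j\rangle|\le 1$ via Cauchy--Schwarz on positive operators (this is essentially the ``elementary'' alternative you sketch at the end). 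For the signed case the paper needs an extra step: it writes $|i\rangle\langle j|=\rho_1+\sqrt{-1}\,\rho_2$ with $\rho_1,\rho_2$ Hermitian of trace norm at most $1$, spectrally decomposes each, and reduces to classical strategies at the cost of the factor $2$.

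Your route is more conceptual and handles both cases uniformly, at the price of importing an external result. The paper's proof is entirely self-contained and, more importantly, extends directly to $k$ players via the elementary decomposition $|\psi\rangle=\sum_j\lambda_j|f_1^j\rangle\otimes\cdots\otimes|f_k^j\rangle$ with $\sum_j|\lambda_j|\le d^{(k-1)/2}$, giving $\omega_d^*(G)\le d^{k-1}\omega(G)$; by contrast a multipartite analogue of the robustness decomposition with comparable constants is not readily available, so your approach does not obviously generalize.
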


\begin{proof}
We start with the first estimate. Consider families of POVMs $\{E_x^a\}_{x,a}$, $\{E_y^b\}_{y,b}$ in $M_d$ for Alice and Bob respectively and a pure state $|\psi\rangle\in \Ball(\C^d\otimes \C^d)$. Absorbing local unitaries in the POVM elements, write the Schmidt decomposition as $|\psi\rangle=\sum_{i=1}^d \lambda_i |ii\rangle$, with $\sum_{i=1}^d|\lambda_i|^2=1$. Thus 
\begin{align}
\sum_{x,y;a,b}G_{x,y}^{a,b}\langle\psi|E_x^a \otimes F_y^b|\psi\rangle&=\sum_{i,j=1}^n {\lambda}_i  \lambda_j \sum_{x,y;a,b}G_{x,y}^{a,b}\langle i|E_x^a|j\rangle\langle i|F_y^b|j\rangle\notag\\
&\leq d\max_{i,j}\Big|\sum_{x,y;a,b}G_{x,y}^{a,b}\langle i|E_x^a|j\rangle\langle i|F_y^b|j\rangle\Big|,\label{eq:dimension-general-1}
\end{align}
where for the inequality we used that $|\sum_{i,j=1}^n {\lambda}_i  \lambda_j|=| \sum_{i=1}^d \lambda_i|^2\leq d$ since $\sum_{i=1}^d |\lambda_i |^2=1$. For fixed $i,j$ and any $x\in\setx$ we have 
\begin{align*}
\sum_a|\langle i|E_x^a|j\rangle|\leq \sum_a|\langle i|E_x^a|i\rangle|^{\frac{1}{2}}|\langle j|E_x^a|j\rangle|^{\frac{1}{2}}\leq \Big(\sum_a|\langle i|E_x^a|i\rangle|\Big)^{\frac{1}{2}}\Big(\sum_a|\langle j|E_x^a|j\rangle|\Big)^{\frac{1}{2}} \leq 1,
\end{align*}
where the first two inequalities follow from the Cauchy-Schwarz inequality and the positivity of $E_x^a$, and the last uses $\sum_a E_x^a\leq\Id$. The same bound applies to Bob's POVM, hence for fixed $i,j$, using that $G$ has non-negative coefficients
\begin{align*}
\Big|\sum_{x,y;a,b}G_{x,y}^{a,b}\langle i|E_x^a|j\rangle\langle i|F_y^b|j\rangle\Big|\leq \sum_{x,y;a,b}G_{x,y}^{a,b}|\langle i|E_x^a|j\rangle||\langle i|F_y^b|j\rangle|\leq \omega(G).
\end{align*}
Together with~\eqref{eq:dimension-general-1} this proves the first estimate in the proposition. 

For the second estimate, proceed as above to obtain (\ref{eq:dimension-general-1}). To conclude it will suffice to show that for every $i,j$,
 $$\Big|\sum_{x,y;a,b}M_{x,y}^{a,b}\langle i|E_x^a|j\rangle\langle i|F_y^b|j\rangle\Big|\leq 2\omega^*(M).$$
Fix $i,j$.  Decompose the rank-$1$ operator $\rho=|i\rangle\langle j|$ as $\rho=\rho_1+\sqrt{-1}\rho_2$, where $\rho_1$ and $\rho_2$ are self-adjoint operators with $\|\rho_k\|_{S_1^d}\leq 1$ for $k=1,2$. According to the Hilbert-Schmidt decomposition each $\rho_k$ can be expressed as $\rho_k=\sum_{s=1}^d\alpha_s^k|f_s^k\rangle\langle f_s^k|$, where $\alpha_s^k\in \R$, $\sum_{s=1}^d|\alpha_s^k|\leq 1$ and $(|f_s^k\rangle)_s$ is an orthonormal basis of $\C^d$ for $k=1,2$. Thus
\begin{align*}
\Big|\sum_{x,y;a,b}M_{x,y}^{a,b}\langle i|E_x^a|j\rangle\langle i|F_y^b|j\rangle\Big|
&\leq \sum_{k=1}^2 \sum_{s=1}^d\big|\alpha_s^k\big| \Big|\sum_{x,y;a,b} M_{x,y}^{a,b}\langle f_s^k|E_x^a|f_s^k\rangle\langle i|F_y^b|j\rangle\Big|\\
&
\leq 2\sup_{P\in \class(\seta\setb|\setx\sety)}\Big|\sum_{x,y;a,b}M_{x,y}^{a,b}P(a|x)\langle i|F_y^b|j\rangle\Big|,
\end{align*}
where in the last inequality we used that $(P(a|x))_a=(\langle f|E_x^a|f\rangle)_a$ is a classical distribution for every unit vector $|f\rangle$. Since the coefficients $M_{x,y}^{a,b}$ are real, $\sum_{x,y;a,b}M_{x,y}^{a,b}P(a|x)F_y^b$ is a self-adjoint operator in $M_d$ for every $P\in \class(\seta\setb|\setx\sety)$, and its norm is attained at a pure state. Hence for any $P\in \class(\seta\setb|\setx\sety)$ we get
\begin{align*}
\Big|\sum_{x,y;a,b}M_{x,y}^{a,b}P(a|x)\langle i|F_y^b|j\rangle\Big|&\leq \Big\|\sum_{x,y;a,b}M_{x,y}^{a,b}P(a|x)F_y^b\Big\|\\
&=\sup_{|\psi\rangle\in \Ball(\C^d)}\Big|\sum_{x,y;a,b}M_{x,y}^{a,b}P(a|x)\langle \psi |F_y^b|\psi\rangle\Big|\\
&\leq \omega(M).
\end{align*}
\end{proof}

Proposition~\ref{dimension-general} can be extended to $k$-player games $G$ and $k$-partite Bell functionals $M$ for $k>2$ to give the following bounds: 
\begin{align*}
\omega^*_d(G)\leq d^{k-1}\,\omega(G)\qquad \text{and}\qquad\omega^*_d(M)\leq (2d)^{k-1}\,\omega(M).
\end{align*} 
The proof is a straightforward adaptation of the proof of the proposition, based on the decomposition
\begin{align}\label{claim upper bounds}
|\psi\rangle=\sum_j\lambda_j|f_1^j\rangle\otimes\cdots \otimes  |f_k^j\rangle
\end{align}
of a $k$-partite state $\ket{\psi}\in(\C^d)^{\otimes k}$, where the real coefficients $\lambda_j$ verify $\sum_j|\lambda_j|\leq d^{\frac{k-1}{2}}$ and the vectors $|f_i^j\rangle\in \Ball(\C^d)$ for every $k,j$. The existence of such a decomposition follows from the estimate $\| id:\bigotimes_2^k\ell_2^d\rightarrow \bigotimes_\pi^k\ell_2^d\|\leq d^{\frac{k-1}{2}}$, and can also easily be obtained directly. Using~\eqref{claim upper bounds} we obtain
$$\Big|\sum_{\substack{x_1,\cdots, x_k\\a_1,\cdots, a_k}}G_{x_1,\cdots, x_k}^{a_1,\cdots, a_k}\langle\psi|E_{x_1}^{a_1} \otimes\cdots \otimes E_{x_k}^{a_k}|\psi\rangle\Big|\leq d^{k-1}\sup_{j,j'}\Big|\sum_{\substack{x_1,\cdots, x_k\\a_1,\cdots, a_k}}G_{x_1,\cdots, x_k}^{a_1,\cdots, a_k}\langle f_1^{j'}|E_{x_1}^{a_1}|f_1^j \rangle\cdots \langle f_k^{j'}|E_{x_k}^{a_k}|f_k^j \rangle\Big|$$
in replacement of~\eqref{eq:dimension-general-1}, and the remainder of the proof follows with only minor modifications.

\subsection{Maximally entangled states}\label{sec:maximally-entangled}

The $d$-dimensional maximally entangled state $|\psi_d\rangle=\frac{1}{\sqrt{d}}\sum_{i=1}^d|ii\rangle$ maximizes the entropy of entanglement among all $d$-dimensional bipartite pure states. This fact, together with the wide use of this state as a resource for many fundamental protocols in quantum information theory, such as quantum teleportation or superdense coding, motivate the study of the family $(|\psi_d\rangle)_d$ in the context of Bell inequalities. In this section we review some results in this direction, first for the case of XOR games and then for general Bell functionals. 

\subsubsection{XOR games}\label{sec:general-xor}

As described in Section~\ref{sec:xor-games} optimal strategies in two-player XOR games can always be assumed to be based on the use of a maximally entangled state $\ket{\psi_d}$, for some $d$ which grows with the number of questions in the game. The situation is more subtle for the case of games with more than two players. Indeed there is no good candidate multipartite ``maximally entangled'' state. A natural choice would be the \emph{GHZ state}
$$ \ket{GHZ_d} = \frac{1}{\sqrt{d}} \sum_{i=1}^d \ket{i \cdots i},$$
which at least bears a syntactic resemblance to the bipartite maximally entangled state. This state has the maximum possible entanglement entropy between any one party and the remaining $(k-1)$, but this is no longer true as soon as one considers more balanced partitions. A slight generalization of the GHZ state is the family of \emph{Schmidt states}, states of the form 
\begin{equation}\label{eq:schmidt-state}
\ket{SCH_{\alpha,d}} = \sum_{i=1}^d \alpha_i \ket{i \cdots i},
\end{equation}
where the $\alpha_i$ are any positive reals satisfying $\sum \alpha_i^2 =1$. Somewhat surprisingly, it is shown in~\cite{Briet2011} that the largest possible violation achievable by a Schmidt state in a $k$-player XOR game is bounded by a constant independent of the dimension: for any $k$-player XOR game $G$ it holds that
\begin{equation}\label{eq:kxor-schmidtbound}
\beta_S^*(G)\,\leq\,2^{3(k-2)/2} K_G^\C\, \beta(G),
\end{equation}
where $\beta_S^*(G)$ denotes the largest bias achievable by strategies restricted to use a state of the form~\eqref{eq:schmidt-state} for any $d$, and $K_G^\C$ is the complex Grothendieck constant (q.v. Theorem~\ref{thm:grothendieck}).
As described in Section~\ref{sec:xor-entanglement} it is possible to achieve violations that scale as $\sqrt{d}$ using entangled states of local dimension $d$ in three-player XOR games, thus the states achieving such unbounded violation must be very different from Schmidt states.\footnote{As shown in Section~\ref{sec:quantum-xor}, two-player \emph{quantum} XOR games provide a bipartite scenario in which the maximally entangled state can be far from optimal, as demonstrated by the family of games $C_n$ for which $\beta^*(C_n)=1$ but $\beta^{me}(C_n)=\beta(C_n)\leq 2\sqrt{2/n}$.} 

This observation has consequences in quantum information and functional analysis, and it has an interesting application in computer science. In quantum information, as already described it implies that a large family of natural multipartite entangled states has bounded nonlocality in XOR games, and are thus far from being the most nonlocal. Due to their simple form GHZ states are prime candidates for an experiment attempting to demonstrate multipartite entanglement, but the bound~\eqref{eq:kxor-schmidtbound} suggests that they may not be the most well-suited to demonstrating large violations.

In functional analysis the result is connected to a question in the theory of Banach algebras that dates back to work of Varopoulos~\cite{varopoulos:1972}. The bound~\eqref{eq:kxor-schmidtbound} implies, via a reduction shown in~\cite{davie:1973,PerezWPVJ08tripartite}, that the Banach algebra $S_\infty$ of compact operators on $\ell_2$, endowed with the Schur product, is a $Q$-algebra, i.e. a quotient of a uniform algebra; we refer to~\cite{Briet2011} for details.

Finally the same result has an interesting application to the problem of parallel repetition, or more precisely direct sum, for multiplayer XOR games (q.v. Section~\ref{sec:grothendieck} for an introduction to this question). It is known that there are three-player XOR games for which $\beta^*(G)=1$ but $\beta(G)<1$; Mermin~\cite{mermin:1990} describes such a game for which there is a perfect quantum strategy based on the GHZ state. Taking the $\ell$-fold direct sum $G^{(\oplus\ell)}$, the same strategy played $\ell$ times independently will still succeed with probability $1$, hence $\beta^*(G^{(\oplus\ell)})=1$. The bound~\eqref{eq:kxor-schmidtbound} implies that the classical bias can never be lower than a constant independent of $k$: $\lim\inf_{\ell\to\infty} \beta(G^{(\oplus\ell)}) > 0$. Thus in this game even though the classical bias is strictly less than $1$ the bias of the repeated game does not go to $0$ as the number of repetitions goes to infinity, a surprising consequence. In fact the proof of~\eqref{eq:kxor-schmidtbound} given in~\cite{Briet2011} provides an explicit procedure by which an optimal quantum strategy may be turned into a classical strategy achieving the claimed bias, so that in principle a good correlated strategy for classical players in $G^{(\oplus \ell)}$ can be extracted from an optimal quantum strategy for Mermin's game.

\subsubsection{General Bell inequalities}\label{sec:general-me}

Denote by $\omega^{me}_{d}(M)$ the entangled value of a Bell inequality $M$, when the supremum in the definition~\eqref{eq:def-entangled-value} of $\omega^*$ is restricted to strategies using a maximally entangled state $|\psi_k\rangle = \frac{1}{\sqrt{k}}\sum_{i=1}^k \ket{ii}$ of local dimension $k\leq d$,
 and let $\omega^{me}(M)=\sup_d \omega_{d}^{me}(M)$. In Section~\ref{sec:xor-games} we saw that for two-player XOR games it always holds that $\omega^{me}(G)=\omega^*(G)$, while in the previous section we saw that for games with more players the natural analogue to $\omega^{me}(G)$ defined from Schmidt states (\ref{eq:schmidt-state}) can be strictly smaller than $\omega^*(G)$. Which is the case for general two-player games?
The following theorem provides an answer.

\begin{theorem}[\cite{JungeP11low}]\label{theorem restric MES}
There exists a universal constant $C$ and a family of Bell functionals $(JP'_n)_{n\in\N}$ with $2^{n^2}$ inputs and $n$ outputs per system such that for every $n\geq 2$,
$$\omega^*_n(JP_n')\,\geq\,\frac{ \sqrt{n}}{\log n}\qquad\text{and}\qquad\omega^{me}(JP_n')\,\leq\, C.$$
\end{theorem}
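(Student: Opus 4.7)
The construction of $JP'_n$ is a probabilistic variant of the Gaussian Bell functional $JP_n$ from Section~\ref{sec:violations-lb}, designed to exploit a structural asymmetry between the two relevant states: the near-optimal strategy for $JP_n$ uses the state $\ket{\psi}=\tfrac{1}{\sqrt 2}\bigl(\ket{11}+\tfrac{1}{\sqrt n}\sum_{i\geq 2}\ket{ii}\bigr)$, which has a sharply ``preferred'' direction, whereas the maximally entangled state $\ket{\psi_d}=\tfrac{1}{\sqrt d}\sum_i\ket{ii}$ has no preferred direction. To amplify this asymmetry, one picks an $\eps$-net $\mathcal{N}\subset U(n)$ of cardinality $|\mathcal{N}|=2^{n^2}$ (a standard volumetric estimate on $U(n)$, whose real dimension is $n^2$) and builds the random tensor
$$\widehat{JP'_n}\,=\,\frac{1}{Z}\sum_{U,V\in\mathcal{N}}\sum_{a,b=1}^{n}\bigl\langle U u^{a},\,V v^{b}\bigr\rangle\;(e_{U}\otimes e_{a})\otimes(e_{V}\otimes e_{b}),$$
with i.i.d.\ standard Gaussian vectors $u^{a},v^{b}\in\C^n$ as in Section~\ref{sec:violations-lb}, and a normalization $Z$ chosen so that the classical value $\omega(JP'_n)$ remains bounded; this bound follows from Chevet's inequality together with a union bound over $\mathcal{N}\times\mathcal{N}$, absorbed by the subgaussian concentration of operator norms of $n\times n$ Gaussian matrices. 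Adding a dummy output (Section~\ref{sec:signed-coefficients}) produces a Bell functional with $2^{n^2}$ inputs and $n$ outputs per system.

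The lower bound $\omega^*_n(JP'_n)\geq C\sqrt n/\log n$ is obtained by playing the ``$JP_n$ strategy'' on a single distinguished block of $\mathcal{N}\times\mathcal{N}$. Fix $(U_{0},V_{0})\in\mathcal{N}^{2}$ and have the players share the state $\ket{\psi}$ on $\C^n\otimes\C^n$, use POVMs that are the $JP_n$-optimal POVMs (built from the Gaussian vectors $u^a$, $v^b$) conjugated by $U_{0}$ and $V_{0}$ respectively when the question is $(U_{0},V_{0})$, and play the dummy output otherwise; invariance of the $JP_n$ bias under conjugation shows that the single ``live'' block contributes essentially $\omega^*(JP_n)$, and the normalization $Z$ is calibrated so that this contribution dominates and delivers $\sqrt n/\log n$.

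The crux of the proof, and the main obstacle, is the upper bound $\omega^{me}(JP'_n)\leq C$ \emph{uniformly} in the local dimension $d$ of the maximally entangled state. Given POVMs $\{E_{U}^{a}\}$ and $\{F_{V}^{b}\}$ on $\C^d$, the identity $\bra{\psi_d}A\otimes B\ket{\psi_d}=\tfrac{1}{d}\Tr(A\,B^{T})$ combined with the rotation-invariance $(U\otimes\bar U)\ket{\psi_d}=\ket{\psi_d}$ reduces the bias to a Gaussian-weighted average of trace inner products of Hilbert--Schmidt-normalized operators, namely a sum involving $\tfrac{1}{d}\Tr\bigl(E_{U}^{a}(F_{V}^{b})^{T}\bigr)$. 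A noncommutative Khintchine inequality (Lust-Piquard type) controls the Gaussian sums by Hilbert--Schmidt norms of the POVM-derived matrices; these norms are uniformly bounded in $d$ because of the constraints $\sum_a E_U^a=\Id_d$ and $\sum_b F_V^b=\Id_d$. Conceptually, this is an instance of the operator space Grothendieck inequality (Theorem~\ref{thm:osgt}) applied to the tracially bounded norm of $\widehat{JP'_n}$ on $\ell_1^{|\mathcal{N}|}(\ell_\infty^n)\otimes_{\psi\text{-}\min}\ell_1^{|\mathcal{N}|}(\ell_\infty^n)$, whose evaluation on maximally entangled states is exactly $\omega^{me}$. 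The dimension independence is precisely the subtle point: a priori an enormously entangled state could ``decode'' correlated structure across the $2^{2n^2}$ rotated blocks, but the trace identity shows that the maximally entangled state only sees a Hilbert--Schmidt pairing whose norm is controlled independently of $d$. The calibration $|\mathcal{N}|=2^{n^2}$ is the largest net size for which Gaussian concentration on $n\times n$ matrices still beats the union bound over $\mathcal{N}^2$, while giving enough coverage of $U(n)$ that some $U_0\in\mathcal{N}$ can align with the preferred direction of the state used in the lower bound.
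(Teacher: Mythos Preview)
Your lower-bound argument has a concrete gap. If the players output the dummy answer on every question except a single pair $(U_0,V_0)$, the only contribution to the value is
\[
\frac{1}{Z}\sum_{a,b=1}^{n}\langle U_0 u^a,\,V_0 v^b\rangle\,\langle\psi|E^a\otimes F^b|\psi\rangle \,=\, O(n/Z),
\]
since $\|u^a\|,\|v^b\|\approx\sqrt n$. But for $\omega(JP'_n)$ to be bounded the normalization $Z$ must absorb a sum over $|\mathcal N|^2=2^{2n^2}$ question pairs; even with optimistic cancellation, Alice can choose $P(\,\cdot\,|U)$ so that $\bigl\|\sum_U U\bigl(\sum_a P(a|U)u^a\bigr)\bigr\|$ is polynomial in $|\mathcal N|$, forcing $Z$ to be at least polynomial in $|\mathcal N|$ and making the single-block contribution exponentially small in $n^2$. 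The $\sqrt n$ in the $JP_n$ lower bound comes from summing coherently over many questions with \emph{independent} Gaussian vectors $u_x^a$; in your construction the vectors $Uu^a$ for distinct $U$ are deterministic rotations of a fixed $u^a$, so no single block (or small collection of blocks) reproduces the $JP_n$ structure. There is no ``invariance of the $JP_n$ bias under conjugation'' to invoke, because the $JP_n$ strategy uses a \emph{different} POVM at each of its $n$ questions, not a single conjugated one.

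The upper-bound mechanism you describe also misses the mark. The invariance $(U\otimes\bar U)\ket{\psi_d}=\ket{\psi_d}$ is for unitaries on the players' space $\C^d$, whereas your $U\in\mathcal N\subset U(n)$ act on the coefficient space $\C^n$; there is no link between the two unless you force $d=n$, which you cannot since $\omega^{me}$ ranges over all $d$. The paper's argument is purely operator-space-theoretic: one writes the functional as $M=(R\otimes R)\bigl(\sum_i e_i\otimes e_i\bigr)$ for a linear map $R:\ell_2^n\to\ell_1^N(\ell_\infty^n)$ with $N=2^{n^2}$, and uses that $\omega^{me}$ is captured (up to constants) by the $\psi$-$\min$ norm of $M$. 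The decisive structural fact~\eqref{Fact cb embedding} is that for \emph{any} $T,S:R_n\cap C_n\to M_d$ with $\|T\|_{cb},\|S\|_{cb}\le 1$ one has $\bigl|\langle\psi_d|(T\otimes S)(\sum_i e_i\otimes e_i)|\psi_d\rangle\bigr|\le 1$, uniformly in $d$: the maximally entangled state is controlled precisely by the $R_n\cap C_n$ o.s.s.\ on $\ell_2^n$. The bound $\omega^{me}(JP'_n)\le C$ thus reduces to constructing $R$ with $\|R:R_n\cap C_n\to\ell_1^N(\ell_\infty^n)\|_{cb}\le C$, and this is where $N=2^{n^2}$ genuinely enters (the original Gaussian map $G_n$ into $\ell_1^n(\ell_\infty^n)$ is not known to satisfy such a bound, and a real modification of the map is required). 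This is neither a Khintchine estimate nor an instance of Theorem~\ref{thm:osgt}; it is the identification of $R_n\cap C_n$ as the exact o.s.s.\ governing what $\ket{\psi_d}$ can detect.
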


The family $(JP'_n)$ can be obtained as a modification of the family $(JP_n)$ introduced in Section~\ref{sec:violations-lb}. This modification is directly inspired by constructions in operator space theory for which we attempt to give a flavor below. Based on his analysis of $(JP_n)$ Regev~\cite{Regev2012} gave a slightly different construction from the one in Theorem~\ref{theorem restric MES}, with the additional properties that it is explicit and it improves the number of inputs to $2^n/n$; the proof relies on techniques from quantum information theory. It is interesting to note that both constructions yield a family of Bell functionals, with some coefficients negative. Indeed, as shown in~\cite[Theorem 10]{JungeP11low} one cannot hope to obtain a similar result by considering only two-player games, for which it always holds that 
\begin{align}\label{MES positive coeff}
\omega^*_d(G)\leq (4\log d)\,  \omega^{me}_d(G).
\end{align}
While there do exist two-player games for which $\omega^*(G)>\omega^{me}(G)$~\cite{VidickW11}, it is not known if an asymptotically growing violation can be obtained. 
\begin{question}
Does there exist a universal constant $C$ such that $\omega^*(G)\leq C \omega^{me}(G)$ for every two-player game $G$?
\end{question}
Similarly to~\eqref{eq:def-supm}, the nonlocality of the maximally entangled state can be measured through the quantity
$$\sup_M\,\frac{\omega_d^{me}(M)}{\omega(M)}.$$
How does this quantity compare to the one in~\eqref{eq:def-supm}? In~\cite{Palazuelos14largest} it is shown that for every Bell functional $M$ and every fixed dimension $d\geq 2$ one has
\begin{align}\label{upper bound max}
\omega^{me}_d(M) \leq C'\frac{d}{\sqrt{\log d}}\, \omega (M) ,
\end{align}
where $C'$ is a universal constant.  Even though the best upper bound when allowing for non-maximally entangled states, provided by Proposition~\ref{dimension-general}, is only of a factor $O(d)$, the largest violation known is the one obtained by the Khot-Vishnoi game, for which it is achieved using a maximally entangled state: $\omega^{me}_{d}(KV_{\ell}) \geq C(d/\log^2 d)\omega(G)$, where $\ell=\lceil\log d\rceil$. It is an interesting open question to close the gap by finding a game for which a violation of order $d$ (or slightly weaker, in which case one may also seek to improve the bound~(\ref{upper bound max})) can be obtained by using a non-maximally entangled state.

\medskip

We end this section with a brief overview of the role played by operator space theory in obtaining the family of functionals $(JP'_n)$ in Theorem \ref{theorem restric MES}. 
Recall that $JP_n$ is defined by a simple modification of the tensor
$$M_n\,=\,(G_n\otimes G_n) \Big(\sum_{i=1}^n e_i\otimes e_i\Big)\in \ell_1^n(\ell_\infty^n)\otimes \ell_1^n(\ell_\infty^n),$$
where $G_n: \ell_2^n \rightarrow \ell_1^n(\ell_\infty^n)$ is such that the norm of $M_n$ satisfies $\|M_n\|_{\ell_1^n(\ell_\infty^n)\otimes_{min} \ell_1^n(\ell_\infty^n)}\geq \sqrt{n}/\log n$ with high probability. 

It is not hard to extend Lemma~\ref{quantum value 2P1R} relating the entangled value to the minimal norm to show that an analogous (up to the precise constants) relation holds between the maximally entangled value $\omega^{me}$ and the tensor norm $\|\cdot \|_{\ell_1^{\cardx}(\ell_\infty^{\carda})\otimes_{\psi-min}  \ell_1^{\cardy}(\ell_\infty^{\cardb})}$ associated to the tracially bounded norm of a bilinear form in~\eqref{Def tb norm}; the relation holds both for games and general Bell functionals. 
Thus to prove Theorem~\ref{theorem restric MES} it would suffice to establish a bound of the form $\|M_n\|_{\psi-min}\leq C^2$. Below we show how this can be reduced to the problem of obtaining an estimate on the completely bounded norm of the map $G_n$ of the form 
\begin{equation}\label{eq:j-bound}
\big\|G_n:\, R_n\cap C_n\rightarrow \ell_1^n(\ell_\infty^n)\big\|_{cb}\,\leq\, C.
\end{equation}
 Unfortunately, we do not know how to obtain such an estimate on $G_n$ directly, and the map has to be modified to an $R_n:\ell_2^n\rightarrow \ell_1^{N}(\ell_\infty^n)$, where $N=2^{n^2}$. Putting this aspect aside (q.v.~\cite[Section 4]{JungeP11low} for details) it remains to relate the  norms $\|M_n\|_{\psi-min}$ and $\|G_n\|_{cb}$. To this end, consider the o.s.s.\ on $\ell_2^n$ obtained by combining the row and the column structure: for every $d\geq 1$,
\begin{align*}
\Big\|\sum_{i=1}^nA_i\otimes |i\rangle\Big\|_{M_d(R_n\cap C_n)}=\max \Big\{\Big\|\sum_{i=1}^nA_iA_i^*\Big\|^\frac{1}{2}_{M_d},\Big\|\sum_{i=1}^nA_i^*A_i\Big\|^\frac{1}{2}_{M_d}\Big\}.
\end{align*}
Let $R: \ell_2^n \rightarrow \ell_1^N(\ell_\infty^n)$ be any linear map such that $\|R:R_n\cap C_n\rightarrow \ell_1^N(\ell_\infty^n)\|_{cb}\leq C$. Letting
$$M=(R\otimes R) \Big(\sum_{i=1}^ne_i\otimes e_i\Big)\in \ell_1^N(\ell_\infty^n)\otimes \ell_1^N(\ell_\infty^n)$$
it follows from the definition of the tracially bounded norm in~(\ref{Def tb norm}) that 
\begin{align*}
\big\|M\big\|_{\psi-min} &= \sup_{k\leq d;\,u,v\in\Lin(\ell_1^N(\ell_\infty^n),M_k):\,\|u\|_{cb},\|v\|_{cb}\leq 1} \big|\langle\psi_k|(u\otimes v)(M)|\psi_k\rangle\big|\\
&\leq C^2\sup_{k\leq d;\,T,S\in\Lin(R_n\cap C_n,M_k):\,\|T\|_{cb},\|S\|_{cb}\leq 1} \big|\langle\psi_k|(T\otimes S) \big(\sum_{i=1}^ne_i\otimes e_i\big)|\psi_k\rangle\big|,
\end{align*}
where the second inequality follows by setting $T=u\circ R$ and $S=v\circ R$. 
As shown in~\cite[Lemma 2]{JungeP11low}, for every linear maps $T,S\in\Lin(R_n\cap C_n,M_d)$ such that $\|T\|_{cb},\|S\|_{cb}\leq 1$ one has that 
\begin{align}\label{Fact cb embedding}
\Big|\big\langle \psi_d\big|(T\otimes S)\Big(\sum_{i=1}^n e_i\otimes e_i\Big)\big|\psi_d\big\rangle\Big|\leq 1.
\end{align}
Choosing $R=G_n$ would let us obtain $\|M_n\|_{\psi-min}\leq C^2$, as desired, provided~\eqref{eq:j-bound} was true. We do not know if this is the case, but as mentioned above the introduction of a slightly more complicated map $R_n$ leads to the required estimate and Theorem~\ref{theorem restric MES} follows.


\subsection{A universal family of entangled states}
\label{sec:os-emb}

The definition~\eqref{eq:def-eval} of the entangled value of a multiplayer game considers the supremum over all possible shared entangled states, of any dimension. As discussed in Section~\ref{sec:maximally-entangled} the natural choice of the maximally entangled state, even of arbitrarily large dimension, often turns out not to provide an optimal choice for the players. In the first part of this section a different family of states, \emph{embezzlement states}, is introduced which \emph{does} systematically lead to optimal strategies.\footnote{A compactness argument together with a simple discretization of $\Ball(\C^d\otimes \C^d)$ guarantees that one may always find a countable family of states that is guaranteed to enable strategies achieving values arbitrarily close to the optimum; here we are interested in the existence of ``natural'' such families.} In the second part  it will be explained how embezzlement states suggest an elementary proof of the operator space Grothendieck inequality,~Theorem~\ref{thm:osgt}.

\subsubsection{Quantum embezzlement} 

The following family of quantum states is introduced in~\cite{DamH03embezzlement}. For any integer $d$, define the $d$-dimensional embezzlement state $\ket{\Gamma_d}$ as
\begin{equation}\label{eq:emb-state}
\ket{\Gamma_d} = Z_d^{-1/2}\sum_{i=1}^d \frac{1}{\sqrt{i}}\ket{ii}\,\in\Ball(\C^d\otimes\C^d),
\end{equation}
where $Z_d = \sum_{i=1}^d i^{-1}$ is the appropriate normalization constant. 

\begin{theorem}[\cite{DamH03embezzlement}]\label{thm:embezzlement}
Let $\ket{\psi} \in \Ball(\C^d\otimes \C^d)$, and $\eps>0$. There exists a $d' = d^{O(\eps^{-1})}$ and unitaries $U,V\in\Unitary(\C^{d'})$ (depending on $\ket{\psi}$) such that 
$$ \big\| U\otimes V \ket{\Gamma_{d'}} - \ket{\psi}\otimes \ket{\Gamma_{d'/d}}\big\|\,\leq\,\eps,$$
where the state $\ket{\psi}\otimes \ket{\Gamma_{d'/d}}$ should be understood as a bipartite state on $(\C^d\otimes \C^{d'/d})\otimes (\C^d\otimes \C^{d'/d})$.
\end{theorem}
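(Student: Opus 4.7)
The plan is to reduce the theorem to an $\ell_2$-matching problem between two sorted Schmidt-coefficient sequences, and then estimate the matching using the scale-invariance of the Schmidt spectrum of $\ket{\Gamma_n}$.

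First, I would write $\ket{\psi}=\sum_{i=1}^d\mu_i\ket{i}_A\ket{i}_B$ in Schmidt form with $\mu_1\geq\cdots\geq\mu_d\geq 0$ and $\sum_i\mu_i^2\leq 1$, and absorb the Schmidt bases of $\ket{\psi}$ and of $\ket{\Gamma_{d'/d}}$ into the local unitaries $U$ and $V$. Standard facts about bipartite pure states then give that, minimizing over $U,V\in\Unitary(\C^{d'})$, the quantity $\|U\otimes V\ket{\Gamma_{d'}}-\ket{\psi}\otimes\ket{\Gamma_{d'/d}}\|$ equals $\|\lambda-\lambda'\|_{\ell_2}$, where $\lambda,\lambda'\in\R^{d'}_{\geq 0}$ are the decreasing-sorted Schmidt-coefficient vectors of $\ket{\Gamma_{d'}}$ and of $\ket{\psi}\otimes\ket{\Gamma_{d'/d}}$ respectively. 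Therefore the theorem reduces to producing a $d'$ polynomial in $d$ with exponent $\poly(\eps^{-1})$ for which $\|\lambda-\lambda'\|_{\ell_2}\leq\eps$.

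Second, with $D=d'/d$, the vector $\lambda$ has entries $1/\sqrt{kZ_{dD}}$ for $k\in[dD]$, while $\lambda'$ is the decreasing rearrangement of the $dD$ values $\mu_i/\sqrt{jZ_D}$ for $(i,j)\in[d]\times[D]$. The key rank-counting observation is that, in the ``bulk'' regime where no individual column is truncated at $j=D$, the rank of $\mu_i/\sqrt{jZ_D}$ among all $dD$ such values is $\sum_{i'}j\mu_{i'}^2/\mu_i^2=j/\mu_i^2$, using $\sum_{i'}\mu_{i'}^2=1$, so the $k$-th largest value equals $1/\sqrt{kZ_D}$ independently of the particular $\mu_i$'s. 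The problem thus reduces to comparing the two sequences $(1/\sqrt{kZ_{dD}})_{k=1}^{dD}$ and (up to boundary corrections) $(1/\sqrt{kZ_D})_{k=1}^{dD}$. In log-coordinates $t=\log k$, both sequences are approximately uniform densities on $[0,\log(dD)]$ and $[0,\log D]$ respectively, explaining intuitively why the discrepancy is controlled by $\log d/\log D$.

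Third, from the identity $\|\lambda-\lambda'\|_{\ell_2}^2=2-2\sum_k\lambda_k\lambda_k'$, the problem becomes lower-bounding the fidelity-type quantity $\sum_k \lambda_k\lambda_k'$. Substituting the bulk expression yields $\sum_k\lambda_k\lambda_k'\geq \sqrt{Z_D/Z_{dD}}$, whence $\|\lambda-\lambda'\|_{\ell_2}^2\leq 2(1-\sqrt{Z_D/Z_{dD}})=O(\log d/\log D)$; this is at most $\eps^2$ provided $\log D\gtrsim\log d/\eps^2$, which translates into $d'=d^{\poly(\eps^{-1})}$ and matches the scaling stated in the theorem. The main obstacle in executing this plan rigorously is the handling of the truncation-boundary corrections: one must verify that the bulk-regime matching extends to a genuine bijection of the two multisets whose $\ell_2$ error is of the same order as the bulk estimate, which requires a careful case analysis depending on the relative magnitudes of the $\mu_i$'s and of $1/D$. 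The analysis reduces the worst case to $\ket{\psi}=\ket{00}$ (extreme product state), where the computation of $\|\lambda-\lambda'\|_{\ell_2}^2=2(1-\sqrt{Z_D/Z_{dD}})$ can be carried out exactly and is tight, so no further improvement is possible without changing the construction of $U,V$.
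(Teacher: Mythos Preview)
Your approach is precisely the one the paper sketches: reduce to comparing the sorted Schmidt-coefficient sequences of $\ket{\Gamma_{d'}}$ and $\ket{\psi}\otimes\ket{\Gamma_{d'/d}}$, choosing $U,V$ to match Schmidt vectors in decreasing order. The paper gives only that one-sentence outline; your rank-counting observation (that the $k$-th largest product coefficient is $\approx 1/\sqrt{kZ_D}$ independently of the $\mu_i$'s, because $\sum_{i'}\mu_{i'}^2=1$) is a clean way to carry it out and is not made explicit in the survey.

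One remark on the quantitative conclusion: your computation yields $\|\lambda-\lambda'\|_2^2=O(\log d/\log D)$, hence $d'=d^{O(\eps^{-2})}$, which you correctly record as $d^{\poly(\eps^{-1})}$. The theorem as stated in the survey claims the sharper exponent $O(\eps^{-1})$, but this appears to be a slight imprecision: the original van Dam--Hayden bound is on the fidelity, $F\geq 1-O(\log d/\log d')$, and converting $1-F$ to Euclidean distance costs a square root, giving exactly your $\eps^{-2}$ dependence. Your worst-case identification $\ket{\psi}=\ket{00}$ and the exact evaluation $2(1-\sqrt{Z_D/Z_{dD}})$ there confirm this is tight for the construction. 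So the gap is in the theorem's stated exponent, not in your argument.
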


The proof of Theorem~\ref{thm:embezzlement} is not difficult, and based on comparing the Schmidt coefficients of the states $\ket{\Gamma_{d'}}$ and $\ket{\psi}\otimes\ket{\Gamma_{d'/d}}$; indeed it is not hard to see that an optimal choice of unitaries $U,V$ consists in matching the Schmidt vectors of the former state to those of the latter, when the associated Schmidt coefficients are sorted in non-increasing order. Although the family of states described in Theorem~\ref{thm:embezzlement} is not the only family having the properties claimed in the theorem, the constraint of ``universal embezzlement'' imposes rather stringent conditions on the Schmidt coefficients associated to any such family~\cite{leung2014characteristics}.

It follows directly from Theorem~\ref{thm:embezzlement} that for any two-player game there exists a sequence of strategies for the players that uses an embezzlement state of increasing dimension and whose success probability converges to the entangled value of the game. Although a generic discretization argument shows the existence of such a family for any number of parties, it is not known if there are ``natural'' families of embezzlement states for more than two parties. 

\subsubsection{A proof of the operator space Grothendieck inequality}

In Section~\ref{sec:quantum-xor} we considered a class of two-player games with quantum questions called ``quantum XOR games''. A quantum XOR game is represented by a Hermitian bilinear form $G\in\Bil(M_\cardx,M_\cardy)$, and the unentangled and entangled biases of $G$ are related to the norm and completely bounded norm of $G$ as 
\begin{equation}\label{eq:qxor-bias-norm}
 \bias(G) = \|G\|_{\Bounded(M_\cardx,M_\cardy)}\qquad\text{and}\qquad\bias^*(G) = \|G\|_{\cBil(M_\cardx,M_\cardy)}
\end{equation}
respectively. Although these two biases can be very far apart, based on the family of states $(\ket{\Gamma_d})$ introduced in the previous section it is possible to make the following observation: 
\begin{align}
\bias^*(G) &=\sup_{\substack{d\in\N;\,A\in \Ball(\Bounded(\C^\cardx\otimes\C^d)),\\B\in \Ball(\Bounded(\C^\cardy\otimes \C^d))}} \big\| \Tr_{\C^\cardx\otimes\C^\cardy}(G\otimes \Id_{\C^d} \otimes \Id_{\C^d} \cdot(A\otimes B))\big\|\notag\\
&=\sup_{\substack{d'\in\N;\,A\in \Ball(\Bounded(\C^\cardx\otimes\C^{d'})),\\B\in \Ball(\Bounded(\C^\cardy\otimes \C^{d'}))}} \big|\bra{\Gamma_{d'}} \Tr_{\C^\cardx\otimes\C^\cardy}(G\otimes \Id_{\C^{d'}} \otimes \Id_{\C^{d'}} \cdot(A\otimes B))\ket{\Gamma_{d'}}\big|\notag\\
&=\sup_{\substack{d'\in\N;\,A\in \Ball(\Bounded(\C^\cardx\otimes\C^{d'})),\\B\in \Ball(\Bounded(\C^\cardy\otimes \C^{d'}))}}  \big|\Tr_{(\C^\cardx\otimes \C^{d'})\otimes(\C^\cardy\otimes\C^{d'})}\big((G\otimes \ket{\Gamma_{d'}}\bra{\Gamma_{d'}}) \cdot(A\otimes B)\big)\big|\notag\\
&= \sup_{d'} \bias\big(G\otimes \ket{\Gamma_{d'}}\bra{\Gamma_{d'}}\big),\label{eq:emb-bias}
\end{align}
where $G' = G\otimes \ket{\Gamma_{d'}}\bra{\Gamma_{d'}}$ has a natural interpretation as a quantum XOR game in which the referee, in addition to the same questions as sent to the players in $G$, sends each of them half of an embezzlement state of dimension $d'$. In $G'$ the players no longer need any entanglement in order to achieve the entangled bias: as long as $d'$ is large enough they can simply ``embezzle'' whatever state they find useful from the copy of $\ket{\Gamma_{d'}}$ provided for free by the referee. 

The extensions of Grothendieck's inequality described in Theorem~\ref{thm:ncgt} and Theorem~\ref{thm:osgt} express a relationship between two quantities, introduced as $\beta^{nc}(M)$ and $\beta^{os}(M)$, and the bounded and completely bounded norm of $M\in\Bil(M_\cardx,M_\cardy)$ respectively: 
$$ \bias^{nc}(M) \leq 2 \|M\|_{\Bounded(M_\cardx,M_\cardy)}\qquad\text{and}\qquad\bias^{os}(M) \leq 2 \|M\|_{\cBil(M_\cardx,M_\cardy)}.$$
The correspondence between the entangled bias of a game $G$ and the unentangled bias of the game $G\otimes \ket{\Gamma_d}\bra{\Gamma_d}$ described above, combined with the relations~\eqref{eq:qxor-bias-norm}, suggests an intriguing possibility: if it were the case that $\bias^{os}(M) = \sup_d \bias^{nc}(M\otimes \ket{\Gamma_d}\bra{\Gamma_d})$ then the sequence of equalities~\eqref{eq:emb-bias} would (for the case of Hermitian $M$) translate to 
$$\bias^{os}(M) \stackrel{?}{=}  \sup_d \bias^{nc}(M\otimes \ket{\Gamma_d}\bra{\Gamma_d}) \leq 2\sup_d\bias(M\otimes \ket{\Gamma_d}\bra{\Gamma_d}) = 2\bias^{*}(M),$$
thereby providing a succinct derivation of  Theorem~\ref{thm:osgt} from Theorem~\ref{thm:ncgt}.\footnote{All known proofs of Theorem~\ref{thm:osgt} involve a reduction to Theorem~\ref{thm:ncgt}; see~\cite{Haagerup85NCGT} or~\cite{NRV13} for a more recent proof borrowing on techniques from computer science.}
This is precisely the approach taken in~\cite{RegevV12b}, where the equality $\bias^{os}(M) =  \sup_d \bias^{nc}(M\otimes \ket{\Gamma_d}\bra{\Gamma_d})$ is proved for bilinear forms $M$ defined on arbitrary operator spaces. The proof of the equality follows from the proof of Theorem~\ref{thm:embezzlement} sketched in the previous section but requires additional manipulations to relate the supremum appearing in the definition of $\bias^{os}$ in~\eqref{eq:osgt} to the one in the definition of $\bias^{nc}$ in~\eqref{eq:ncgt}. The resulting proof of Theorem~\ref{thm:osgt} has the benefit of being completely elementary, and is a good example of an application of techniques in quantum information to derive a result in the theory of operator spaces.

\subsection{The limit of infinite dimensions}
\label{sec:infinite}

The definition~\eqref{eq:def-eval} of the entangled value of a multiplayer game considers the supremum over entangled states of any dimension. Except in a few cases (such as two-player XOR games, as discussed in Section~\ref{sec:grothendieck}) it is not known whether the supremum is achieved. The question turns out to be intimately tied to important (and difficult) conjectures in the theory of $C^*$-algebras. This connection is discussed in Section~\ref{sec:connes} below. First we give evidence for the existence of nonlocal games for which the supremum may indeed not be achieved in any finite dimension. 

\subsubsection{Games requiring infinite entanglement}

There are two examples of families of games known for which the entangled bias is provably not attained by any strategy using a finite-dimensional entangled state. The first is the family of quantum XOR games $(T_n)$ already encountered in Section~\ref{sec:quantum-xor}. These games have messages of dimension $n$ per player and satisfy $\bias^*(T_n)=1$ for every $n\geq 2$, but any strategy achieving a bias at least $1-\eps$ requires the use of an entangled state of dimension $n^{\Omega(\eps^{-1/2})}$. 

The second is based on Hardy's paradox from quantum information theory~\cite{Hardy92paradox}. In~\cite{MancinskaV14unbounded} a two-player game $G$ with classical questions and answers is introduced such that the optimal value $\omega^*(G)=1$ can only be achieved with infinite-dimensional entanglement. Moreover, a success probability of $1-\eps$ requires an entangled state of dimension at least $\Omega(\eps^{-1/2})$. The game $G$ has only two questions per player but it has the peculiarity that answers can take values in a countably infinite set. Although this is not a requirement directly imposed by the referee in the game it is possible to show that optimal strategies always need to have the possibility of sending answers of arbitrary length (the probability of this happening decreases exponentially with the length of the answer). 

Due to their use respectively of quantum questions or an infinite number of possible answers these examples do not quite provide a construction of a ``standard'' multiplayer game for which the entangled value is not attained by any finite-dimensional strategy, and at the time of writing no such example has been found. Perhaps the best candidate is the $I_{3322}$ Bell inequality for which suggestive numeric investigations have been performed~\cite{pal2010maximal}. 

\begin{question}
Does there exist a multiplayer game $G$ such that the entangled value $\omega^*(G)$ is only attained in the limit of infinite-dimensional entanglement? 
\end{question}

It is interesting to investigate the rate at which the entangled value can be approached, as a function of the dimension. The games described above suggest some lower bounds, but no upper bound is known. 

\begin{question}
Given $\eps>0$ and $n\in\N$ provide an explicit\footnote{A compactness argument, combined with a simple discretization shows the existence of a finite $d=d(\eps,n)$, but does not give any estimate on the size of $d$.} bound on $d=d(\eps,n)$ such that for any game $G$ with at most $n$ questions and answers per player there exists a $d$-dimensional strategy with success probability at least $\omega^*(G)-\eps$.
\end{question}

We believe answering these two natural questions poses a significant challenge; indeed in spite of increased interest over the years little concrete progress has been made. 

\subsubsection{Tsirelson's Problem and Connes' Embedding Conjecture}
\label{sec:connes}

In Section \ref{Sec: Bell ineq and multiplayer} we introduced the set of quantum conditional distributions $\ent(\seta\setb|\setx\sety)$ as the set of distributions which can be implemented locally using measurements on a bipartite quantum state. For the purposes of this section we fix $\cardx=\cardy=N$ and $\carda= \cardb=K$, and write $\mathcal P_\mathcal Q(N,K)$ for $\ent(\seta\setb|\setx\sety)$. It is not known if the set $\mathcal P_\mathcal Q(N,K) $ is closed, and we introduce its closure
\begin{align*}
Q^{ten}(N,K) &= \text{cl}\Big\{ \big(\bra{\psi} A_x^a\otimes B_y^b \ket{\psi}\big)_{x,y;a,b}:\, \ket{\psi}\in \Ball(\mathcal H\otimes \mathcal K),\, A_x^a\in\Pos(\mathcal H),B_y^b\in\Pos(\mathcal K),\\\nonumber
&\hskip5cm\,\sum_{a=1}^K A_x^a = \sum_{b=1}^K B_y^b=\Id \,\,\forall(x,y)\in[N]\times [N]\Big\},
\end{align*}where the supremum is taken over all Hilbert spaces $\mathcal H$, $\mathcal K$.

The relevance of the tensor product in this definition is that it is the standard mathematical construction used to model spatially separated systems in non-relativistic quantum mechanics and most of quantum information theory. In other areas, such as algebraic quantum field theory, local measurements are represented by operators acting on a joint Hilbert space, with the independence condition modeled by imposing that operators associated with distinct spacial locations commute. This description naturally leads to the following definition:
\begin{align*}
Q^{com}(N,K) &= \Big\{ \big(\bra{\psi} A_x^aB_y^b \ket{\psi}\big)_{x,y;a,b}:\, \ket{\psi}\in \Ball(\mathcal H),\, A_x^a,B_y^b\in\Pos(\mathcal H), \, [A_x^a,B_y^b]=0,\\\nonumber
&\hskip5cm\,\sum_{a=1}^K A_x^a = \sum_{b=1}^K B_y^b=\Id \,\,\forall(x,y)\in[N]\times [N]\Big\},
\end{align*}
where here $[T,S]=TS-ST$ for operators $T$ and $S$. Since commuting operators acting on a finite dimensional Hilbert space can always be written in a tensor product form~\cite{SW08},\footnote{This observation was already known to Tsirelson.} $Q^{ten}(N,K)$ can be equivalently defined by replacing the  $\bra{\psi} A_x^a\otimes B_y^b \ket{\psi}$ with $\bra{\psi} A_x^aB_y^b \ket{\psi}$ for every $x,y,a,b$, as long as the restrictions that the POVM elements act on a finite dimensional Hilbert space $\mathcal H$ and that POVM elements associated with different parties commute are imposed.
Tsireson's Problem asks the following question:
\begin{align}\label{Tsirelson problem}
\text{Does }Q^{ten}(N,K)=Q^{com}(N,K) \text{       }\text{    for every   } N, K \text{   ?    }
\end{align} 
Since the definition of either set is directly related to the axiomatic description of composite systems, Tsirelson's Problem is a fundamental question in the foundations of quantum mechanics. A negative answer to~\eqref{Tsirelson problem} could lead to the experimental validation of the possibility that finite dimensional quantum models do not suffice to describe all bipartite correlations, even when only a finite number of inputs and outputs are considered.

In recent years this question has captured the attention of operator algebraists because of its connection with the so called Connes' Embedding Problem~\cite{Connes76}. This problem asks whether every von Neumann algebra can be approximated by matrix algebras in a suitable sense, and it has been (unexpectedly) related with many fundamental questions in the field of operator algebras~\cite{Ozawa13}. Formally, Connes' problem asks whether every finite von Neumann algebra $(\mathcal M, \tau)$ with separable predual embeds in the ultrapower $R^\omega$ of the hyperfinite $II_1$-factor $R$. 

In order to explain how Connes' problem is related with~\eqref{Tsirelson problem} we give an equivalent formulation due to Kirchberg~\cite{Kirchberg93qwep}. For an arbitrary $C^*$-algebra $A$ define the minimal and maximal norms of $x\in  A\otimes A$ by
\begin{align*}
\|x\|_{min}&=\sup\Big\{\|(\pi_1\otimes \pi_2)(x)\|_{\mathcal B(\mathcal H\otimes \mathcal K)}:\pi_1:A\rightarrow \mathcal B(\mathcal H), \pi_2:A\rightarrow \mathcal B(\mathcal K)\Big\},\\
\|x\|_{max}&=\sup\Big\{\|(\pi_1\cdot\pi_2)(x)\|_{\mathcal B(\mathcal H)}:\pi_1,\pi_2:A\rightarrow \mathcal B(\mathcal H)\Big\},
\end{align*}
where both suprema run over all $*$-homomorphims $\pi_1$, $\pi_2$ such that in addition for the case of the maximal norm $\pi_1$ and $\pi_2$ have commuting ranges.\footnote{Using that $C^*$-algebras are operator spaces one can verify that this definition for the minimal norm coincides with the one given in~\eqref{Def min-norm}.} More precisely, here $\pi_1\otimes\pi_2$ and $\pi_1\cdot\pi_2$ are defined on elementary tensors respectively by $(\pi_1\otimes\pi_2)(x_1\otimes x_2)=\pi_1(x_1)\otimes \pi_2(x_2)$ and $(\pi_1\cdot\pi_2)(x_1\otimes x_2)=\pi_1(x_1) \pi_2(x_2)$ for every $x_1,x_2\in A$, and extended to $A\otimes A$ by linearity. We denote by $A\otimes_{min} A$ (resp. $A\otimes_{max} A$) the completion of the algebraic tensor product $A\otimes A$ for the norm $\|\cdot\|_{min}$ (resp. $\|\cdot\|_{max}$). Kirchberg's reformulation of Connes' Embedding Problem asks whether $C^*(\mathbb F_n)\otimes_{min} C^*(\mathbb F_n)=C^*(\mathbb F_n)\otimes_{max} C^*(\mathbb F_n)$  for all (some) $n\geq 2$, where $C^*(\mathbb F_n)$ is the universal C$^*$-algebra associated to the free group with $n$ generators $\mathbb F_n$. We give an equivalent formulation of this question, which can be obtained from standard (though non-trivial) techniques in operator algebras. Let $\mathcal F_K^N:=\ell_\infty^K*\cdots * \ell_\infty^K$ denote the unital free product of $N$ copies of the commutative C$^*$-algebra $\ell_\infty^K$.\footnote{A formal definition of $\mathcal F_K^N$ is beyond the scope of this survey; for our purposes it will suffice to think of it as a C$^*$-algebra containing $N$ copies of $\ell_\infty^K$ and with the universal property that for every family of unital $*$-homomorphisms $\pi_i:\ell_\infty^K\rightarrow \mathcal B(\mathcal H)$, $i=1,\cdots ,N$ there is a unital $*$-homomorphism $\pi:\ell_\infty^K*\cdots * \ell_\infty^K\rightarrow \mathcal B(\mathcal H)$ which extends the $\pi_i$.} 
Kirchberg's theorem states that Connes' Embedding Problem is equivalent to the question
\begin{align}\label{Kirchberg III}
\text{Does }\mathcal F_K^N\otimes_{min} \mathcal F_K^N=\mathcal F_K^N\otimes_{max} \mathcal F_K^N \text{     } \text{  for all (some)}  \text{     } (N,K)\neq  (2,2) \text{      } \text{    ?}
\end{align}
The introduction of the minimal and maximal norms gives a good hint of the relationship with Tsirelson's Problem. To formulate the connection more precisely,~\eqref{Kirchberg III} can be related to a ``matrix-valued'' analogue of Tsirelson's Problem as follows. Given $n\in \mathbb N$, define 
\begin{align}\label{distributions tensor matrix}
Q^{ten}_n(N,K) &= \text{cl}\Big\{ \big[V^* A_x^a\otimes B_y^b V\big]_{x,y;a,b}:\, V:\ell_2^n\rightarrow \mathcal  H\otimes  \mathcal K,\, A_x^a\in\Pos(\mathcal  H), B_y^b\in\Pos(\mathcal  K),\\\nonumber
&\hskip5cm\,\sum_{a=1}^K A_x^a = \sum_{b=1}^K B_y^b=\Id \,\,\forall(x,y)\in[N]\times [N]\Big\},
\end{align}
where the supremum is taken over all Hilbert spaces $ \mathcal H$, $\mathcal  K$ and all isometries $V:\ell_2^n\rightarrow \mathcal  H\otimes  \mathcal K$, and
\begin{align}\label{distributions commutative matrix}
Q^{com}_n(N,K) &= \Big\{ \big[V^*A_x^aB_y^b V \big]_{x,y;a,b}:\, V:\ell_2^n\rightarrow \mathcal H,\, A_x^a, B_y^b\in\Pos(\mathcal H), \, [A_x^a,B_y^b]=0,\\\nonumber
&\hskip5cm\,\sum_{a=1}^K A_x^a = \sum_{b=1}^K B_y^b=\Id \,\,\forall(x,y)\in[N]\times [N]\Big\},
\end{align}
where the supremum is taken over all Hilbert spaces $ \mathcal H$ and all isometries $V:\ell_2^n\rightarrow \mathcal H$. Note that in both definitions we have $\big[V^* A_x^aB_y^b V\big]_{x,y,a,b}\in M_n$ for every $x,y,a,b$.
Let $e_i^{(m)}$ be the element $e_i\in \ell_\infty^K$, when seen as its $m^{th}$-copy in $\mathcal F_K^N$, $p_i^{(m)}=e_i^{(m)}\otimes 1\in \mathcal F_K^N\otimes \mathcal F_K^N$ and $q_j^{(l)}=1\otimes e_j^{(l)}\in \mathcal F_K^N\otimes \mathcal F_K^N$. Letting $S_n(A)$ denote the set of completely positive and unital maps from $A$ to $M_n$, a direct application of the  Stinespring factorization theorem and the universality property of the free product shows that 
\begin{align*}
Q^{ten}_n(N,K) =\Big\{ \big[\varphi(p_a^{(x)}q_b^{(y)})\big]_{x,y;a,b}:\varphi\in S_n(\mathcal F_K^N\otimes_{min} \mathcal F_K^N)\Big\}
\end{align*}and  
\begin{align*}
Q^{com}_n(N,K)=\Big\{ \big[\varphi(p_a^{(x)}q_b^{(y)})\big]_{x,y;a,b}:\varphi\in S_n(\mathcal F_K^N\otimes_{max} \mathcal F_K^N)\Big\}.
\end{align*}
From these last characterizations it is immediate that a positive answer to~\eqref{Kirchberg III} implies
\begin{align}\label{Matrix Tsirelson}
Q^{ten}_n(N,K) = Q^{com}_n(N,K)  \text{       } \text{     for every    } N, K, n,
\end{align}
whereas the characterizations~\eqref{distributions tensor matrix} and~\eqref{distributions commutative matrix} make it apparent that for the special case $n=1$,~\eqref{Matrix Tsirelson} implies a positive answer to Tsirelson's Problem. 
It was shown in \cite{JNPPSW11,Fritz12} that condition (\ref{Matrix Tsirelson}) for all $n$ implies a positive answer to Connes' Embedding Problem (see also \cite{FKPT14} for some simplifications). This was recently extended by Ozawa~\cite{Ozawa13} who shows that it is sufficient to assume condition (\ref{Matrix Tsirelson}) for $n=1$ in order to derive an affirmative answer to Connes' Embedding Problem. This is done by proving that the coincidence of the restriction of the max and min tensor norms to the space $L(N,K)\otimes L(N,K)$, where $L(N,K)$ is the subspace of $\mathcal F_K^N$ spanned by words of length $1$,\footnote{$L(N,K)$ is the same (operator) space as $\mathcal{N}_\C(N,K)^*$, introduced in Section \ref{sec:signed-coefficients}.} at the Banach space level already suffices to give an affirmative answer to Connes' Embedding Problem. 

\

\centerline{\textbf{Acknowledgments}}

The first author was supported by Spanish research projects MINECO (grant MTM2011-26912), Comunidad de Madrid (grant QUITEMAD+-CM, ref. S2013/ICE-2801) and \emph{Ram\'on y Cajal} program, and the European CHIST-ERA project CQC (funded partially by MINECO grant PRI-PIMCHI-2011-1071).

\newcommand{\etalchar}[1]{$^{#1}$}

\end{document}